\newcommand{\longversion}[1]{#1}
\newcommand{\shortversion}[1]{}
\DeclarePairedDelimiter\ceil{\lceil}{\rceil}
\newcommand{\problemFont}[1]{\textsc{#1}}
\newcommand{\footnoteitext}[1]{\stepcounter{footnote}
  \footnotetext[\thefootnote]{#1}}
\newcommand{\trash}[1]{} 
\newlength\problemlength
\newcommand\dproblem[3]{%
\begin{center}
\fbox{%
\begin{minipage}{.93	\linewidth}%
\begin{list}{}{\labelwidth\problemlength \labelsep.7em \rightmargin1.5em
\leftmargin\problemlength \advance\leftmargin by3em%
\parsep0ex \itemsep.2ex plus.1ex}
\item[{\sl Problem:\hfill}] {\problemFont{#1}}
\item[{\sl Input:  \hfill}] #2
\item[{\sl Output: \hfill}] #3
\end{list}
\end{minipage}
}
\end{center}
}
\newcommand\eproblem[3]{%
\begin{center}
\fbox{%
\begin{minipage}{.93	\linewidth}%
\begin{list}{}{\labelwidth\problemlength \labelsep.7em \rightmargin1.5em
\leftmargin\problemlength \advance\leftmargin by3em%
\parsep0ex \itemsep.2ex plus.1ex}
\item[{\sl Problem:\hfill}] {\problemFont{#1}}
\item[{\sl Input:  \hfill}] #2
\item[{\sl Question: \hfill}] #3
\end{list}
\end{minipage}
}
\end{center}
}
\newcommand{\FIX}[1]{#1} %
\newcommand{\FIXCAM}[1]{#1} %
\newcommand{\BUGFIX}[2]{\FIXCAM{#1}}
\def\hy{\hbox{-}\nobreak\hskip0pt}
\newtheorem{LEM}{Lemma}
\newtheorem{EX}[LEM]{Example}
\newtheorem{EXa}[LEM]{Example}
\newtheorem{PROP}[LEM]{Proposition}
\newtheorem{REM}[LEM]{Remark}
\newtheorem{OBS}[LEM]{Observation}
\newtheorem{THM}[LEM]{Theorem}
\newtheorem{DEF}[LEM]{Definition}
\newtheorem{COR}[LEM]{Corollary}
\newtheorem{CLAIM}[LEM]{Claim}
\renewenvironment{EX}{\begin{EXa}}{\hfill\ensuremath{\blacksquare}\end{EXa}}
\newenvironment{restatetheorem}[1][\unskip]{%
  \begingroup

}%
{%
  \addtocounter{LEM}{-1}
  \endgroup
}%
\newcommand{\SEM}{\ensuremath{\mathcal{S}}\xspace}
\newcommand{\ALL}{\ensuremath{\mathrm{ALL}}\xspace}
\newcommand{\Cred}{\problemFont{Cred}}
\newcommand{\Skep}{\problemFont{Skep}}
\DeclareMathOperator{\adef}{def}
\DeclareMathOperator{\stage}{stage}
\DeclareMathOperator{\semistable}{semi-st}
\DeclareMathOperator{\preferred}{pref}
\DeclareMathOperator{\admissible}{adm}
\newcommand{\complexityClassFont}[1]{\ensuremath{\mathrm{#1}}}
\newcommand{\Ptime}{\complexityClassFont{P}}
\newcommand{\PSPACE}{\ensuremath{\textsc{PSpace}}}
\newcommand{\SAT}{\textsc{Sat}\xspace}
\newcommand{\PMC}{\textsc{PMC}\xspace}
\newcommand{\sharpSAT}{\textsc{\#Sat}\xspace}
\newcommand{\QBFSAT}{\textsc{QSat}\xspace}
\newcommand{\PQBF}{\textsc{PQSat}\xspace}
\newcommand{\PASP}{\textsc{PASP}\xspace}
\newcommand{\cASP}{\textsc{ASPCons}\xspace}
\newcommand{\sharpASP}{\textsc{\#ASP}\xspace}
\newcommand{\PAP}{\textsc{PAP}\xspace}
\newcommand{\Circ}{\textsc{Circumscription}\xspace}
\newcommand{\PPAP}{\text{\#}\textsc{PPAP}\xspace}
\newcommand{\PCC}{\protect\ensuremath{\problemFont{\#PCred}}\xspace}
\newcommand{\NP}{\ensuremath{\textsc{NP}}\xspace}
\newcommand{\Q}{\ensuremath{Q}}
\newcommand{\SB}{\{}%
\newcommand{\SM}{\mid}%
\newcommand{\SE}{\}}%
\newcommand{\citex}[1]{\citet{#1}}
\newcommand{\smallalign}[1]{#1
  \setlength{\abovedisplayskip}{3pt}
  \setlength{\belowdisplayskip}{\abovedisplayskip}
  \setlength{\abovedisplayshortskip}{0pt}
  \setlength{\belowdisplayshortskip}{3pt}}
\DeclareMathOperator{\dom}{\mathsf{dom}}%
\DeclareMathOperator{\cntc}{\#\smash{\cdot}}%
\newcommand{\mtext}[1]{\text{\normalfont #1}}
\newcommand{\NAT}{\ensuremath{\mathbb{N}}}
\newcommand{\Nat}{\mathbb{N}} %
\newcommand{\TTT}{\ensuremath{\mathcal{T}}}%
\newcommand{\WWW}{\ensuremath{\mathcal{W}}}%
\DeclareMathOperator{\ep}{EP}
\newcommand{\prog}{\ensuremath{\Pi}}
\newcommand{\hsep}{\leftarrow\,}
\newcommand{\por}{\vee}
\newcommand{\tw}[1]{\mathsf{tw}(#1)}
\newcommand{\pw}[1]{\mathsf{pw}(#1)}
\newcommand{\lint}[2]{\mathit{VarI}(#2)}
\newcommand{\master}[1]{\mathsf{rep}(#1)}
\newcommand{\nint}[2]{\mathit{NodeI}(#2)}
\newcommand{\lbv}[1]{\mathit{VarB}}
\newcommand{\lbvd}[1]{\mathit{VarB3}}
\newcommand{\lbvs}[1]{\mathit{VarBs}}
\newcommand{\lbvv}[1]{\mathit{VarBV}}
\newcommand{\lbvvd}[1]{\mathit{VarBV3}}
\newcommand{\lbvvs}[1]{\mathit{VarBVs}}
\newcommand{\lsat}{\mathit{VarSat}}
\DeclareMathOperator{\tower}{\ensuremath{\mathsf{tow}}}
\DeclareMathOperator{\poly}{\ensuremath{{poly}}}
\newcommand{\ta}[1]{\ensuremath{2^{#1}}}
\newcommand{\Card}[1]{\left|#1\right|}
\newcommand{\CCard}[1]{\|#1\|}
\newcommand{\bterm}[2]{\ensuremath{\mathsf{bit\hy term}({#1},{#2})}}
\newcommand{\local}[1]{\ensuremath{\mathsf{local}({#1})}}
\newcommand{\lit}[2]{\ensuremath{\mathsf{tlit}(#1, #2)}}
\newcommand{\bvv}[3]{\ensuremath{\mathsf{bv}(#1, #2, #3)}}
\newcommand{\bvs}[3]{\ensuremath{\mathsf{bvc}(#1, #2, #3)}}
\newcommand{\ato}[2]{\ensuremath{\mathsf{tvar}(#1, #2)}}
\newcommand{\lnode}[1]{\ensuremath{\delta^{-1}(#1)}}
\newcommand{\bval}[2]{\ensuremath{[\![#1]\!]_{#2}}}
\newcommand{\bvali}[3]{\ensuremath{[\![#1]\!]_{#2,#3}}}
\DeclareMathOperator{\var}{\mathsf{var}}
\DeclareMathOperator{\fvar}{\mathsf{fvar}}
\DeclareMathOperator{\matr}{\mathsf{matrix}}
\newcommand{\bigO}{\ensuremath{{\mathcal O}}}
\newcommand{\eqdef}{\ensuremath{\,\mathrel{\mathop:}=}}
\DeclareMathOperator{\width}{\mathsf{width}}
\DeclareMathOperator{\children}{\mathsf{cld}}
\newcommand{\pushright}[1]{\ifmeasuring@#1\else\omit\hfill$\displaystyle#1$\fi\ignorespaces}
\newcommand{\pushleft}[1]{\ifmeasuring@#1\else\omit$\displaystyle#1$\hfill\fi\ignorespaces}
\title{Lower Bounds for QBFs of Bounded Treewidth}}
\begin{document}

\longversion{
\author[1]{Johannes K. Fichte\thanks{johannes.fichte@tu-dresden.de}}
\author[2,3]{Markus Hecher\thanks{hecher@dbai.tuwien.ac.at}}
\author[3]{Andreas Pfandler\thanks{pfandler@dbai.tuwien.ac.at}}

\affil[1]{International Center for Computational Logic, TU Dresden, Germany}
\affil[2]{Institute of  Logic and Computation, TU Wien,  Austria}
\affil[3]{Institute of Computer Science, University of Potsdam, Germany}
}

\shortversion{\title[Lower Bounds for QBFs of Bounded Treewidth]{Lower Bounds for QBFs of Bounded Treewidth}         %
\author{Johannes K. Fichte}
\affiliation{
  \institution{TU Dresden}            %
  \country{Germany}                    %
}
\email{johannes.fichte@tu-dresden.de}          %

\author{Markus Hecher}
\affiliation{
  \institution{TU Wien}           %
  \country{Austria}                   %
}
\email{hecher@dbai.tuwien.ac.at}         %
\author{Andreas Pfandler}
\affiliation{
  \institution{TU Wien}           %
  \country{Austria}                   %
}
\email{pfandler@dbai.tuwien.ac.at}         %
}

\longversion{\maketitle}

\begin{abstract}
The problem of deciding the validity (\QBFSAT) of quantified Boolean formulas (QBF) is a vivid research area in both theory and practice. 
In the field of parameterized algorithmics, the well-studied graph measure treewidth turned out to be a successful parameter. 
\FIXCAM{A well-known result by Chen~\cite{Chen04a} is that \QBFSAT when parameterized by the treewidth of 
the primal graph  and the quantifier rank of the input formula is fixed-parameter tractable.}
More precisely, the runtime of such an algorithm is polynomial in the formula size and exponential in the treewidth, 
where the exponential function in the treewidth is a tower, whose height is the quantifier rank. 
A natural question is whether one can significantly improve these results and decrease the tower while assuming the 
Exponential Time Hypothesis (ETH). In the last years, there has been a growing interest in the quest of establishing lower bounds under ETH, 
showing mostly problem-specific lower bounds up to the third level of the polynomial hierarchy. Still, an important question is to settle this 
as general as possible and to cover the whole polynomial hierarchy. 
In this work, we show lower bounds based on the ETH for arbitrary QBFs parameterized by treewidth and quantifier rank. 
More formally, we establish lower bounds for \QBFSAT and treewidth, namely, that under ETH there cannot be an algorithm that solves 
\QBFSAT of quantifier rank i in runtime significantly better than $i$-fold exponential in the treewidth and polynomial in the input size. 
In doing so, we provide a reduction technique to compress treewidth that encodes dynamic programming on arbitrary tree decompositions. 
Further, we describe a general methodology for a more fine-grained analysis of problems parameterized by treewidth that are at higher levels of the polynomial hierarchy. 
Finally, we illustrate the usefulness of our results by discussing various applications of our results to problems that are located higher on the polynomial hierarchy, in particular,
various problems from the literature such as projected model counting problems.  

\end{abstract}

\if\shortversion{1}1{
\begin{CCSXML}
<ccs2012>
   <concept>
       <concept_id>10003752.10003809.10010052</concept_id>
       <concept_desc>Theory of computation~Parameterized complexity and exact algorithms</concept_desc>
       <concept_significance>500</concept_significance>
       </concept>
<concept>
<concept_id>10003752.10003777.10003787</concept_id>
<concept_desc>Theory of computation~Complexity theory and logic</concept_desc>
<concept_significance>500</concept_significance>
</concept>
<concept>
<concept_id>10003752.10003777.10003778</concept_id>
<concept_desc>Theory of computation~Complexity classes</concept_desc>
<concept_significance>300</concept_significance>
</concept>
   <concept>
       <concept_id>10002950.10003624.10003633.10010917</concept_id>
       <concept_desc>Mathematics of computing~Graph algorithms</concept_desc>
       <concept_significance>500</concept_significance>
       </concept>
   <concept>
ncept>
       <concept_id>10002950.10003624.10003633.10003634</concept_id>
       <concept_desc>Mathematics of computing~Trees</concept_desc>
       <concept_significance>300</concept_significance>
       </concept>
       <concept_id>10010147.10010178.10010187</concept_id>
       <concept_desc>Computing methodologies~Knowledge representation and reasoning</concept_desc>
       <concept_significance>300</concept_significance>
       </concept>
</ccs2012>
\end{CCSXML}
\ccsdesc[500]{Theory of computation~Parameterized complexity and exact algorithms}
  \ccsdesc[500]{Theory of computation~Complexity theory and logic}
  \ccsdesc[300]{Theory of computation~Complexity classes}
\ccsdesc[500]{Mathematics of computing~Graph algorithms}
\ccsdesc[300]{Mathematics of computing~Trees}
\ccsdesc[300]{Computing methodologies~Knowledge representation and reasoning}
\keywords{Parameterized Complexity, ETH, Lower Bounds, Quantified Boolean Formulas, Tree\-width, Pathwidth}  %
}
\fi

\longversion{
\newcommand{\grantsponsor}[3]{#2}
\newcommand{\grantnum}[2]{#2}

}

\shortversion{\maketitle}

\noindent\textbf{Acknowledgments.} {\small
\longversion{This is an extended version of a paper that appeared at the 35th Annual ACM/IEEE Symposium on Logic in Computer Science (LICS 2020).}
This work has been supported by the \grantsponsor{GS100000001}{Austrian Science Fund (FWF)}{fwf.ac.at},
   Grants \grantnum{GS100000001}{P26696}, \grantnum{GS100000001}{P30930-N35}, \grantnum{GS100000001}{P32830}, and \grantnum{GS100000001}{Y698}, and the \grantsponsor{GS100000002}{German Science Fund (DFG)}{dfg.de}, Grant \grantnum{GS100000002}{HO 1294/11-1}. Hecher is also affiliated with the University of Potsdam, Germany. We would like to thank the reviewers as well as Stefan Szeider and Stefan Woltran for their support.
Special appreciation goes to Michael Morak for early discussions.}

\section{Introduction}

\emph{Treewidth}, which was introduced specifically for graph problems
by Robertson and Seymour in a series of
papers~\cite{RobertsonSeymour83,RobertsonSeymour84,RobertsonSeymour85,RobertsonSeymour86,RobertsonSeymour91},
is a popular parameter in the community of parameterized
complexity~\cite{CyganEtAl15,DowneyFellows13,FlumGrohe06} and
\FIXCAM{according to Google Scholar
 mentioned in 20,000
  results (queried on April 27, 2020).
Treewidth is a} combinatorial invariant that renders a large variety of {\sc
  NP}-complete or \#P-complete graph problems
tractable~\cite{BodlaenderKoster08,ChimaniMutzelZey12}. Among these
problems
are for example deciding whether a graph has a Hamiltonian cycle,
whether a graph is 3-colorable, or determining the number of perfect
matchings of a graph~\cite{CourcelleMakowskyRotics01}.
Still, treewidth has also been widely employed for important
\FIX{applications} that are defined on more general input structures
such as \FIX{Boolean} satisfiability (\SAT)~\cite{SamerSzeider10b} and
constraint satisfaction (CSP)~\cite{Dechter06,Freuder85}.
Even problems that are located ``beyond NP'' such as probabilistic
inference~\cite{OrdyniakSzeider13}, problems in knowledge
representation and
reasoning~\cite{GottlobPichlerWei10,PichlerRuemmeleWoltran10,DvorakPichlerWoltran12a}
as well as deciding the validity (\QBFSAT) of quantified Boolean formulas
(QBF)  can be turned tractable using
treewidth\FIX{; for~\QBFSAT we also parameterize by the number of alternating quantifier blocks (quantifier rank)~\cite{Chen04a}}.
\FIX{However, \QBFSAT remains intractable when parameterized by treewidth alone~\cite{AtseriasOliva14a}, which is established using a particular fragment 
of path decompositions for QBF.}
\QBFSAT is also known as the prototypical problem for the polynomial
hierarchy in descriptive complexity~\cite{Grohe12,Immerman99}. Indeed,
an encoding in QBF allows the characterization of problems on certain
levels of the hierarchy using results by Fagin~\cite{Fagin74}.~This
has, for instance, been done for %
reasoning
problems~\cite{EiterGottlob95,EiterGottlob95b,EglyEtAl00}.

\emph{The} meta results on treewidth are the well-known
\emph{Courcelle's theorem}~\cite{Courcelle90} and its \FIXCAM{logspace} %
version~\cite{ElberfeldJakobyTantau10a}, which states that whenever one
can encode a problem into a formula in monadic second order logic
(MSO), then the problem can be decided in time linear in the input
size \FIX{and some function in the treewidth}. While Courcelle's theorem
provides a full framework for classifying problems concerning the existence of a tractable algorithm, its practical application is limited due to potentially huge constants, and
the exponential runtime in the treewidth (upper bound) may result in a
tower of exponents that is far from optimal.
In contrast, the available upper bounds are more immediate for
\QBFSAT:  Chen~\cite{Chen04a} showed that one can decide validity for
a given QBF in time exponential in the treewidth where the treewidth
is on top of a \emph{tower}\footnote{%
  \label{footnote:tower}
  Function~$\tower(\ell,k)$ is a tower of iterated exponentials of
  $2$ of height~$\ell$ with $k$ on top. More precisely, for integer~$k$, we %
  define %
  $\tower: \NAT \times \NAT \rightarrow \NAT$ by $\tower(1,k) = 2^k$
  and $\tower(\ell + 1,k) = 2^{\tower(\ell,k)}$ for all
  $\ell \in \NAT$.
}%
 of iterated exponentials of height that
equals the quantifier rank in the formula. %
Since the quantifier rank required to
encode a problem directly matches the level on which the
problem is located in the polynomial hierarchy, reductions to \FIXCAM{\QBFSAT
seem natural.}
Lampis, Mitsou, and Mengel~\cite{LampisMitsouMengel18} employed this
fact and proposed reductions from a collection of reasoning problems
in AI to \QBFSAT that \FIX{yield} quite precise (up to \FIX{a constant} factor)
upper bounds on the runtime.  In consequence, these results highlight QBF
encodings as a very handy and precise alternative to Courcelle's
theorem.
A natural question is whether one can significantly improve existing
algorithms or establish limits that, unless very bad things happen in
computational complexity theory, an algorithm with a certain runtime
cannot exist.
Lampis, Mitsou, and Mengel also consider this question using
results~\cite{LampisMitsou17} for QBF of quantifier rank two
(2-\QBFSAT).
While these results for the second level are applicable to numerous
important problems, there is also a plethora of interesting problems
that are even harder, which underlines the need for further research
in this direction.

In this paper, we address lower bounds for the runtime of an algorithm
that exploits treewidth in a more general setting. We establish
results for QBFs of bounded treewidth and of \emph{arbitrary} quantifier
rank, thereby providing a novel method to generalize the result for
2-\QBFSAT in a non-incremental \FIXCAM{way}.

\FIXCAM{A way to establish tight lower bounds in parameterized
complexity} theory is to assume the \emph{exponential time hypothesis
  (ETH)~\cite{ImpagliazzoPaturiZane01}} and construct reductions.  ETH is a widely accepted standard
hypothesis in the fields of exact and parameterized algorithms. ETH
states that there is some real~$s > 0$ such that we cannot decide
satisfiability of a given 3\hy \FIXCAM{CNF formula~$F$} in
time~$2^{s\cdot\Card{F}}\cdot\CCard{F}^{\mathcal{O}(1)}$~\cite[Ch.14]{CyganEtAl15}\FIXCAM{, where $\Card{F}$ refers to the number of variables and  $\CCard{F}$ to  
the \emph{size} of~$F$, which is number of variables plus number of clauses in~$F$.}
Recently, Lampis and Mitsou~\cite{LampisMitsou17} established that
2-\QBFSAT ($\exists\forall$\hy\SAT and $\forall\exists$\hy\SAT) cannot
be solved by an algorithm that runs in time single exponential in the
treewidth of the primal graph (\emph{primal treewidth}) when assuming
ETH.
\FIX{The primal graph of a QBF~$\Q$ has as vertices the variables of~$\Q$
and there is an edge between two variables if they occur together in a
clause \FIXCAM{or term}, respectively.}
\FIX{\citex{PanVardi06} \FIXCAM{mention in an earlier work  %
that this extends to 3-\QBFSAT
($\forall\exists\forall$\hy\SAT and~$\exists\forall\exists$\hy\SAT),
and~$\ell$-\QBFSAT, if~$\ell$ is an \emph{odd} number.
But it  does not extend constructively to the case,} where~$\ell$ is even.
Therefore, a new approach is needed to show the complete picture for \QBFSAT. }%
While Marx and Mitsou~\cite{MarxMitsou16} considered certain graph
problems that are located on the third level of the polynomial
hierarchy~\cite{StockmeyerMeyer73}, they emphasize that the classical
complexity results do not provide sufficient explanation why double-
or triple-exponential dependence on treewidth is needed and one
requires quite involved proofs for each problem separately. However, they
state that intuitively the quantifier rank of the problem
definitions are the common underlying reason for being on higher
levels of the polynomial hierarchy and for requiring \FIX{high
dependence on treewidth}.
A natural generalization of the statement to arbitrary QBFs is
formally stated in the following hypothesis.

\begin{CLAIM}%
\label{hyp:teeth}
  \FIXCAM{Under ETH, \QBFSAT} for a closed formula~$\Q$ in prenex normal form
  with $n$ variables, primal treewidth~$k$, and quantifier
  rank~$\ell$ \emph{cannot} be decided in
  time~$\tower(\ell,o(k))\cdot \BUGFIX{\poly(n)}{2^{o(n)}}$.
\end{CLAIM}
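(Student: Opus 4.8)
The plan is to prove the contrapositive: assuming an algorithm that decides rank-$\ell$ \QBSAT in time $\tower(\ell,o(k))\cdot\poly(n)$, I would refute ETH by deciding $3$-\SAT in subexponential time. The engine is a single \emph{compression lemma} that trades one quantifier alternation for an exponential reduction of treewidth, applied iteratively.

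\textbf{Compression lemma (the crux).} Given a closed prenex QBF $\Phi$ of quantifier rank $\ell\ge 1$ with $n$ variables, together with a tree decomposition of its primal graph of width $w$ and $\poly(n)$ nodes, I would construct in polynomial time an equivalent QBF $\Phi'$ of quantifier rank $\ell+1$ with $\poly(n)$ variables and primal treewidth $O(\log w)$, returning a width-$O(\log w)$ decomposition of $\Phi'$ as a byproduct. The construction encodes the bag-by-bag dynamic program (DP) that decides $\Phi$ on the given decomposition. A bag assignment is a vector in $\{0,1\}^{w}$, so materialising whole tables at once would force bags of width $w$; instead I would store each assignment \emph{indexed by a} $\lceil\log w\rceil$-\emph{bit address} and verify the introduce/forget/join transitions \emph{position by position} under a fresh quantifier block ranging over addresses. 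Reading the value at a given address is realised by a multiplexer/selector gadget, whose primal graph has treewidth $O(\log w)$; since every consistency check then mentions only the $O(\log w)$ address bits, a constant number of value bits, and local transition data, following the shape of the original decomposition yields a decomposition of $\Phi'$ of width $O(\log w)$. The new block (universal when one must certify that \emph{all} positions are consistent, existential when one must \emph{locate} a witnessing position) is what raises the rank by one.

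\textbf{Iteration and accounting.} For the base case I would start from $3$-\SAT on $N$ variables (after sparsification, with $O(N)$ clauses), viewed as a rank-$1$ QBF equipped with the trivial one-bag decomposition of width $N-1$; under ETH this has no $2^{o(N)}$ algorithm. Applying the compression lemma $\ell-1$ times, each step consuming the decomposition produced by the previous one, yields a rank-$\ell$ QBF $\Phi_\ell$ with $\poly(N)$ variables and primal treewidth $k=O(\log^{(\ell-1)} N)$ that is valid iff the original instance is satisfiable. The numerical heart is that the tower absorbs the iterated logarithm: for any $g(k)=o(k)$ one checks level by level from the definition of $\tower$ that $\tower(\ell,g(k))=\tower\!\big(\ell,o(\log^{(\ell-1)} N)\big)=2^{o(N)}$. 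Hence the hypothetical algorithm runs on $\Phi_\ell$ in time $\tower(\ell,o(k))\cdot\poly(N)=2^{o(N)}\cdot\poly(N)=2^{o(N)}$, which with the polynomial-time reduction decides $3$-\SAT subexponentially, contradicting ETH.

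I expect the main obstacle to be the compression lemma, and within it two points. First, encoding the DP of an \emph{arbitrary} decomposition with only $O(\log w)$ simultaneously-active variables, in particular the join nodes, which must relate two child tables addressed by the same $\log w$-bit pointer without reintroducing width-$w$ bags. Second, and most delicate, ensuring that exactly \emph{one} alternation is added with the \emph{correct} quantifier type irrespective of the parity of $\ell$: the guess of tables must merge into the existing prefix so that only the position-checking block is genuinely new. This parity-robust placement is precisely what the earlier \citex{PanVardi06}-style argument could not achieve for even $\ell$, and getting it right uniformly is where the real work lies.
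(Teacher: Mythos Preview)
Your proposal is correct and follows essentially the same route as the paper. The paper's reduction~$R$ is precisely your compression lemma: it introduces per-node copies of the bag variables together with $\lceil\log k\rceil$-bit ``pointer'' variables that are universally quantified (merged into the innermost $\forall$ block), plus a fresh innermost $\exists$ block for the addressed value bits and satisfaction flags; consistency between copies is enforced pointer-by-pointer along the decomposition (the guess/check/propagate clauses), and the resulting decomposition $\mathcal{T}'$ of width $O(\log k)$ is produced as a byproduct exactly as you require for iterating. The paper then presents the lower-bound argument as an induction on~$\ell$ (one compression step per inductive step, with a negation via Lemma~\ref{lem:inverse} to toggle between innermost $\forall$-DNF and $\exists$-CNF and thus handle the parity issue you flag), rather than your explicit $(\ell{-}1)$-fold iteration from $3$-\SAT, but these are equivalent unrollings of the same construction.
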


\noindent\textbf{Contributions.} %
\noindent In this paper, we prove Claim \ref{hyp:teeth}, which strengthens the importance of
QBF encodings for problems parameterized by
treewidth, %
and establish a general methodology to obtain treewidth lower bounds for
problems \FIX{of} the polynomial hierarchy. %
Our \emph{contributions} are as follows:

\begin{enumerate}[leftmargin=0pt, itemindent=10pt, itemsep=5pt]
\item We consider ETH and \QBFSAT and establish \FIX{the full picture of} runtime lower bounds for
  algorithms parameterized by treewidth in connection to the quantifier rank of the formula.

  We present a reduction \FIX{that significantly
  \emph{compresses}} treewidth and applies to any instance of \QBFSAT %
  without restricting the quantifier rank while only
  assuming ETH. \FIX{Note that this ``compression'' is constructive and independent of the original instance size, which is different from existing methods,~e.g., \cite{LampisMitsou17,MarxMitsou16,PanVardi06}. In fact, compression only depends on the original parameter treewidth.}

\item We provide a novel methodology for a more fine-grained analysis of
  algorithms parameterized by treewidth. \FIXCAM{This methodology relies only on %
the ETH and allows:} %
  \begin{enumerate}[leftmargin=4pt,itemindent=10pt,topsep=2pt,itemsep=2pt]
  \item \FIXCAM{for simply using reductions from QBF to exclude runtime
    results (height of the tower) for treewidth and}
  \item \FIX{for directly concluding lower bounds for projected model counting problems (\PQBF), that is, $\text{\#}\Sigma_\ell\SAT$ and $\text{\#}\Pi_\ell\SAT$~\cite{DurandHermannKolaitis05},
  which serve as canonical problems in
    counting complexity, as well as for various other problems}.

\FIXCAM{
Instead of establishing problem-specific reductions from SAT for problems higher in the polynomial hierarchy,~e.g., \cite{LampisMitsou17,MarxMitsou16}, our reduction and methodology are very general. 
We illustrate its applicability in the showcases in Section~\ref{sec:metho}.\vspace{-.5em}
}

  \end{enumerate}

\end{enumerate}

\noindent\textbf{Novel Techniques.} %
\FIXCAM{We} \FIX{constructively} encode core ideas of a
dynamic programming algorithm on tree decompositions into a QBF that
expresses solving an instance of \QBFSAT by means of a \QBFSAT oracle
of \FIX{\emph{one level}} higher in the hierarchy (self-reduction)
while achieving a certain compression of treewidth. More precisely, we
provide a reduction that reduces any instance~$\Q$ of \QBFSAT of
treewidth~$t$ and quantifier rank~$\ell$ into an instance~$\Q'$ of
\QBFSAT of treewidth~$\bigO(\log t)$ and quantifier rank~$\ell +1$,
while the size of $\Q'$ is linearly bounded in the size
of~$\Q$. \FIX{Notice that the treewidth of~$\Q'$ only depends on the
  treewidth of~$\Q$, but is independent of, e.g., the number of
  variables and quantifier rank of~$\Q$.  \FIX{Hence,} %
\emph{treewidth} of~$\Q'$ is \emph{compressed}
  compared to the original treewidth of~$\Q$.
  \citex{AtseriasOliva14a} cover a related setting: compressing
  pathwidth\FIX{\footnote{\label{foot:pathwidth}\FIX{Pathwidth is
        similar to treewidth, but admits only \emph{certain} tree
        decompositions, whose tree is just a path.}}}  for a fragment
  of path decompositions of QBFs thereby increasing the quantifier
  rank by two.  However, we require a general, constructive method to
  compress the width of \emph{arbitrary} tree decompositions of any
  QBF, thereby increasing quantifier rank by only one, 
and improve their result~(Corollary~\ref{cor:twhardness}).
}

  Our reduction is novel in the following sense:
\begin{enumerate}[leftmargin=0pt, itemindent=10pt, itemsep=5pt]
  \item We use \FIX{a given tree decomposition} to guide the evaluation of
    the considered formula, which allows us to decouple the variables
    sufficiently to decrease treewidth and we thereby achieve exponential \emph{compression} of the parameter treewidth. \FIX{By construction of the reduction, the lower bound results carry over to the \emph{larger} parameter \emph{pathwidth}$^\text{\ref{foot:pathwidth}}$
    and even \emph{$2$-local pathwidth}, where each %
    variable occurs at most twice.
    Note that this direction is by construction and does not hold in general. 
    However, particular novelty lies in encoding essentials of dynamic programming, which will be presented in the more general context of treewidth (tree decompositions).}
  \item In the proof, we use a reduction approach that \FIXCAM{balances} redundancy and structural dependency (captured by \FIX{treewidth or pathwidth}), which
    allows us to apply this method to QBFs of \FIX{\emph{arbitrary}} quantifier
    rank\FIXCAM{, thereby increasing quantifier rank by only one}.
   \item \FIX{
Our approach might help to improve solvers utilizing treewidth, as instances of huge treewidth might become solvable in practice (cf.,~\cite{CharwatWoltran19,FichteEtAl20}) after applying our reduction.
   Indeed, our reduction encodes  dynamic programming on tree decompositions into a Boolean formula, namely, \emph{guessing} of finite states for table entries of decomposition nodes,
   \emph{checking} whether certain entries sustain, and \emph{propagating} entries among different nodes.
   {As this technique, although presented for QBFs, does not explicitly encode quantifier dependencies into the Boolean formula, %
   the technique is \FIXCAM{hopefully of general use.}}
}%
\end{enumerate}

\paragraph{Connection to kernels.} Note that our approach is orthogonal to kernelization as
    kernelizations tackle bounds of the \emph{instance size} by the
    considered parameter, whereas here we target reducing  the parameter itself and not the size of the input instance.

\section{Preliminaries}\label{sec:prelim}
\noindent\textbf{Basics.} For a set~$X$, let $\ta{X}$ be the \emph{power set of~$X$}.
The function~$\tower(\ell,k)$ is defined as in
Footnote~\ref{footnote:tower}.
The domain~$\mathcal{D}$ of a
function~$f:\mathcal{D} \rightarrow \mathcal{A}$ is given
by~$\dom(f)$. By $f^{-1}: \mathcal{A} \rightarrow \mathcal{D}$ we
denote the inverse
function~$f^{-1}:=\SB f(d) \mapsto d \SM d \in \dom(f) \SE$ of a given
function~$f$, if it exists.  To permit operations such as~$f \cup g$
for functions~$f$ and $g$, the functions may be viewed as relations.
\FIXCAM{%
  We use the symbol~``$\cdot$'' as placeholder for a value of an
  argument, which is clear from the context and the actual value is
  negligible.
}
We let~$\NAT$ contain all positive integers and $\NAT_0$ all
non-negative integers. \FIX{Throughout this paper, we refer by~$\log(\cdot)$ to the binary logarithm.}

\smallskip
\noindent \textbf{Computational Complexity.}
\FIX{We assume familiarity with standard notions in computational
computational complexity~\cite{Papadimitriou94},
counting complexity classes~\cite{DurandHermannKolaitis05}, and
parameterized complexity~\cite{CyganEtAl15,DowneyFellows13,FlumGrohe06}.}
We recall some basic notions.
Let $\Sigma$ and $\Sigma'$ be some finite alphabets.  We call
$I \in \Sigma^*$ an \emph{instance} and $\CCard{I}$ denotes the size
of~$I$.  %
Let $L \subseteq \Sigma^* \times \Nat$ and
$L' \subseteq {\Sigma'}^*\times \Nat$ be two parameterized problems. An
\FIX{fpt-reduction~$r$ using~$g$} from $L$ to $L'$ is a many-to-one reduction
from $\Sigma^*\times \Nat$ to ${\Sigma'}^*\times \Nat$ such that for all
$I \in \Sigma^*$ we have $(I,k) \in L$ if and only if
$r(I,k)=(I',k')\in L'$ \FIXCAM{with} $k' \leq g(k)$, \FIXCAM{where~$f,g: \Nat \rightarrow \Nat$ 
are fixed computable functions
such that $r$ is computable in time
$f(k)\cdot\poly(\CCard{I})$}. 
\FIX{We call~$r$ also an \emph{$f$-bounded} fpt-reduction using~$g$ for given~$f$ and~$g$.}

\smallskip
\noindent\textbf{Quantified Boolean Formulas (QBFs).}
We define \emph{Boolean formulas} and their evaluation in the usual
way and \emph{literals} are variables or their negations.  For a
Boolean formula~$F$, we denote by~$\var(F)$ the set of variables
of~$F$. \FIXCAM{Logical operators~$\wedge, \vee, \neg, \rightarrow, \leftrightarrow$
are used in the usual meaning.}
{%
  A \emph{term} is a conjunction of literals and a \emph{clause} is a
  disjunction of literals. $F$ is in \emph{conjunctive normal form
    (CNF)} if $F$ is a conjunction of clauses and $F$ is in
  \emph{disjunctive normal form (DNF)} if $F$ is a disjunction of
  terms.
  In both cases, we identify $F$ by its \FIXCAM{set of clauses or terms, respectively.}
  From now on assume that a Boolean formula is either in CNF or DNF.
  A formula is in \emph{$c$-CNF} or \emph{$c$-DNF} if each set in~$F$
  consists of at most $c$ many literals.
}%
  Let $\ell\geq 0$ be integer. A \emph{quantified Boolean
    formula}~$\Q$ \emph{(in prenex normal form)} is of the form
  $Q_{1} V_1.  Q_2 V_2.\cdots Q_\ell V_\ell. F$ where
  $Q_i \in \{\forall, \exists\}$ for $1 \leq i \leq \ell$ and
  $Q_j \neq Q_{j+1}$ for $1 \leq j \leq \ell-1$; and where $V_i$ are
  disjoint, non-empty sets of Boolean variables with
  $\bigcup^\ell_{i=1}V_i \subseteq \var(F)$; and $F$ is a Boolean
  formula.
  We call $\ell$ the \emph{quantifier rank} of~$Q$ and let
  $\matr(\Q)\eqdef F$.
  Further, we denote the set~$\fvar(Q)$ of \emph{free variables}
  of~$Q$ by
  $\fvar(\Q)\eqdef \var(\matr(\Q)) \setminus (\bigcup^\ell_{i=1}V_i)$.
If~$\fvar(\Q)=\emptyset$, then~$\Q$ is referred to as~\emph{closed},
otherwise we say~$\Q$ is~\emph{open}.  Unless stated otherwise, we
assume open QBFs. 
  The truth (evaluation) of QBFs is defined in the standard way.
  An \emph{assignment} is a mapping~$\iota: X \rightarrow \{0,1\}$
  defined for a set~$X$ of variables.
\FIXCAM{An assignment~$\iota'$ \emph{extends}~$\iota$
  (by~$\dom(\iota')\setminus\dom(\iota)$) if
  $\dom(\iota') \supseteq \dom(\iota)$ and~$\iota'(y) = \iota(y)$ for
  any~$y\in\dom(\iota)$.
Given a Boolean formula~$F$ and an assignment~$\iota$ for~$\var(F)$.
Then, for~$F$ in CNF, $F[\iota]$ is a Boolean formula obtained by removing every~$c\in F$
with~$x\in c$ and $\neg x\in c$ if %
$\iota(x)=1$ and $\iota(x)=0$, respectively,
and by removing from every remaining clause~$c\in F$ literals~$x$ and $\neg x$
with~$\iota(x)=0$ and $\iota(x)=1$, respectively.
Analogously, for~$F$ in DNF values $0$ and~$1$ are swapped. %
  For a given QBF~$\Q$ and an assignment~$\iota$, $\Q[\iota]$ is a
  QBF obtained from~$\Q$, where variables~$x\in\dom(\iota)$ are removed from preceding
  quantifiers accordingly, and~$\matr(\Q[\iota])\eqdef (\matr(\Q))[\iota]$.
A Boolean formula~$F$ \emph{evaluates to true}
if there exists an assignment~$\iota$ for~$\var(F)$
such that~$F[\iota]=\emptyset$ if~$F$ is in CNF or~$F[\iota]=\{\emptyset\}$
if~$F$ is in DNF. 
  A closed QBF~$\Q$ \emph{evaluates to true (or is valid)} if~$\ell=0$
  and the Boolean formula $\matr(Q)$ evaluates to true.}
Otherwise,
  i.e., if~$\ell \neq 0$, we distinguish according to~$Q_1$.
  If~$Q_1=\exists$, then~$\Q$ evaluates to true if and only if there
  exists an assignment~$\iota: V_1\rightarrow \{0,1\}$ such
  that~$\Q[\iota]$ evaluates to true.  If~$Q_1=\forall$,
  then~$\Q[\iota]$ evaluates to true if for any
  assignment~$\iota: V_1 \rightarrow\{0,1\}$, $\Q[\iota]$ evaluates to
  true.  An \FIX{(open or closed)} QBF~$\Q$ is \emph{satisfiable} if
  there is a truth assignment~$\iota: \fvar(\Q) \rightarrow \{0,1\}$
  such that resulting closed QBF~$\Q[\iota]$ evaluates to
  true. Otherwise~$\Q$ is \emph{unsatisfiable}.
\FIX{Given a closed QBF~$\Q$, the \emph{evaluation problem~\QBFSAT} of QBFs asks
  whether $\Q$ evaluates to true; $\ell$\hy\QBFSAT refers to the problem~\QBFSAT on QBFs of quantifier rank~$\ell$.}
  The problem~\QBFSAT is \PSPACE-complete and is therefore believed to
  be computationally harder than
  \SAT~\cite{KleineBuningLettman99,Papadimitriou94,StockmeyerMeyer73}.
  \FIXCAM{For more details on QBFs we refer to %
\cite{BiereHeuleMaarenWalsh09,KleineBuningLettman99}.}

  The \emph{projected model counting problem}~$\PQBF$ takes an open
  QBF~$\Q$ and asks to output the number of distinct
  assignments~$\iota: \fvar(\Q) \rightarrow \{0,1\}$ such
  that~$\Q[\iota]$ evaluates to true.

\begin{EX}\label{ex:running1}
  Consider the closed QBF~$\Q=\exists w,x. \forall y,z. D$,
  where~$D\eqdef d_1 \vee d_2 \vee d_3 \vee d_4$,
  and~$d_1\eqdef w \wedge x \wedge \neg y$,
  $d_2\eqdef \neg w \wedge \neg x \wedge y$,
  $d_3\eqdef w \wedge y \wedge \neg z$, and
  $d_4\eqdef w \wedge y \wedge z$.  
  Observe that~$\Q[\iota]$ is valid under
  assignment~$\iota = \{w \mapsto 1, x \mapsto 1\}$.
  In particular, $\Q[\iota]$ can be simplified
  to~$\forall y,z. (\neg y) \vee (y \wedge \neg z) \vee (y \wedge z)$,
  which is valid, since for any
  assignment~$\kappa: \{y,z\} \rightarrow\{0,1\}$
  {%
    the formula~$Q[\iota][\kappa]$ (and therefore~$Q$) evaluates to true.
  }%
\end{EX}

\smallskip
\noindent\textbf{Tree Decompositions (TDs). }%
For basic terminology on graphs and digraphs, we refer to standard
texts~\cite{Diestel12,BondyMurty08}.  \FIXCAM{For an \emph{arborescence}~$T=(N,A,r)$,
which is a directed, rooted tree with root~$r$ and a node~$t \in N$, 
we let $\children(t, T)$ be the
set of all \emph{child nodes}~$t'$, which have an outgoing edge~$(t,t') \in A$ from~$t$ to~$t'$.}
Let $G=(V,E)$ be a graph.
\FIXCAM{A \emph{tree decomposition (TD)} of graph~$G$ is a pair
$\TTT=(T,\chi)$ where $T=(N,A,r)$ is an arborescence with root $r\in N$,
and $\chi$ is} a mapping that assigns to each node $t\in N$ a set
$\chi(t)\subseteq V$, called a \emph{bag}, such that the following
conditions hold:
(i) $V=\bigcup_{t\in N}\chi(t)$ and
$E \subseteq\bigcup_{t\in N}\SB \{u,v\} \SM u,v\in \chi(t)\SE$; and (ii)
for each \FIXCAM{$q, s, t$,} such that $s$ lies on the path from $q$ to
$t$, we have $\chi(q) \cap \chi(t) \subseteq \chi(s)$.
Then, $\width(\TTT) \eqdef \max_{t\in N}\Card{\chi(t)}-1$.  The
\emph{treewidth} $\tw{G}$ of $G$ is the minimum $\width({\TTT})$ over
all tree decompositions $\TTT$ of $G$.
For arbitrary but fixed $w \geq 1$, it is feasible in linear time to
decide if a graph has treewidth at most~$w$ and, if so, to compute a
tree decomposition of width $w$~\cite{Bodlaender96}.
\FIX{Further, we call a tree decomposition~$\mathcal{T}=(T,\chi)$ a \emph{path decomposition (PD)} if~$T=(N,\cdot,r)$ 
and~$\Card{\children(t)}\leq 1$ for each node~$t\in N$. %
\FIX{Analogously, we} define~\emph{pathwidth} $\pw{G}$ as the minimum~$\width(\mathcal{T})$
over all path decompositions of~$G$.}
Similarly, for~$m\geq 2$, let \emph{$m$-local pathwidth} of~$G$ refer to the pathwidth
over all path decompositions of~$G$, where \FIXCAM{each vertex} in~$V$ occurs
in at most~$m$ bags.
For a given tree decomposition~$\mathcal{T}=(T,\chi)$
with~$T=(N,A,r)$, and an element~$x\in\bigcup_{t \in N}\chi(t)$,
we denote by~$\mathcal{T}[x]$ the result~$\mathcal{T}'$ 
of restricting~$\mathcal{T}$ to nodes, whose 
bags contain~$x$.
Formally, $\mathcal{T}'\eqdef (T',\chi')$, where~$T'\eqdef (N',A',r')$,
$N'\eqdef\{t\mid t\in N, x\in\chi(t)\}$, $A'\eqdef A\cap(N'\times N')$, and for each $t\in N'$, $\chi'(t)=\chi(t)$.
Finally, $r'\in N'$ is the first node reachable from $r$.
\begin{figure}[t]%
\centering
\includegraphics[scale=.793]{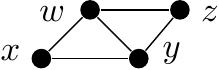}
\includegraphics[scale=.793]{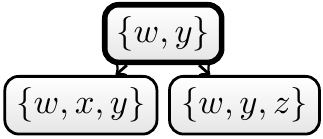}
\includegraphics[scale=.793]{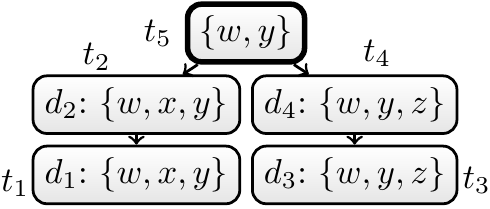}
\caption{Primal graph~$P_{\Q}$ of~$\Q$ from Example~\ref{ex:running1}
  (left) with TDs~${\mathcal{T}}_1, {\mathcal{T}}_2$ of graph~$P_{\Q}$
  (right).}%
\label{fig:graph-td}%
\end{figure}
The literature distinguishes so-called nice tree decompositions, which can be
computed in linear time without increasing the width~\cite{Kloks94a}.
For our purposes, the \FIXCAM{following} relaxed variant of almost nice tree decompositions \FIXCAM{suffices}. %
\begin{DEF}
  Given an integer~$c \in \NAT$. A tree
  decomposition~$\mathcal{T}=(T, \chi)$, where~$T=(N,\cdot,r)$, is
  called \emph{almost $c$-nice}, if for each node~$t\in N$ with
  $\children(t)=\{ t_1, \ldots, t_s\}$, the following conditions
  are true (i) $s\leq 2$ and
  (ii)~$\Card{\chi(t) \setminus \bigcup_{i=1}^{i=s}\chi(t_i)} \leq c$.
\end{DEF}

\trash{Almost nice tree decompositions are a true relaxation of nice tree
decompositions and can similarly be computed in linear time without
increasing the width.}

In order to use tree decompositions for QBFs, we need a
graph representation of Boolean formulas~\cite{SamerSzeider10b}.
The \emph{primal graph}~$P_F$ of a Boolean formula~$F$ in CNF or DNF
has the variables~$\var(F)$ of~$F$ as vertices and an edge~$\{x,y\}$
if there exists a term or clause~$f \in F$ such that
$x,y \in \var(f)$, respectively.  For a QBF~$\Q$, {we identify
  its primal graph with the primal graph of its matrix, i.e., let}
$P_{\Q}\eqdef P_{\matr(\Q)}$.

\begin{EX}
  Figure~\ref{fig:graph-td} illustrates the primal graph~$P_\Q$
  {of the QBF from Example~\ref{ex:running1}} and two tree
  decompositions of~$P_\Q$ of width~$2$. 
  The graph $P_\Q$ has treewidth~$2$, since the vertices $w$,$x$,$y$
  \FIXCAM{are completely connected and hence width~$2$ is optimal~\cite{Kloks94a}}.
\end{EX}

\begin{DEF}\label{def:labeledTD}
  Let $\mathcal{T}=(T,\chi)$ be a tree decomposition of a graph~$G$.
  A \emph{labeled tree decomposition (LTD)}~$\TTT$ of a Boolean
  formula~$F$ in CNF or DNF is a tuple~$\TTT=(T,\chi,\delta)$ where
  $(T,\chi)$ is a tree decomposition of $P_F$ with~$T=(N,\cdot,r)$,
  and $\delta: N' \rightarrow {F}$ with~$N'\subseteq N$ is a 
\FIXCAM{bijective mapping from TD nodes to clauses or terms in~$F$ such that for every~$t\in N'$,
  $\var(\delta(t))\subseteq \chi(t)$.}
\end{DEF}

\begin{OBS}\label{obs:uniqueremoval}
  Given an almost $c$-nice tree decomposition~$\mathcal{T}$ of a
  primal graph~$P_F$ for a given 3-CNF or 3-DNF formula~$F$ of
  width~$k$. Then, one can easily create a labeled almost $c$-nice
  tree decomposition~$\mathcal{T'}$ of~$P_F$ with a linear number of
  nodes in the number of nodes in~$\mathcal{T}$ such
  that~$\width(\mathcal{T}')=\width(\mathcal{T})$.
\end{OBS}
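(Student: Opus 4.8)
The plan is to leave the tree shape and the bags of $\mathcal{T}=(T,\chi)$ essentially intact and only duplicate nodes so that each term or clause of $F$ receives its own dedicated node on which to sit. The starting observation is the standard fact that every clique of $P_F$ lies inside some bag of any tree decomposition of $P_F$. Since $F$ is in 3-CNF or 3-DNF, each $f\in F$ satisfies $\Card{\var(f)}\le 3$, and by the definition of the primal graph the variables in $\var(f)$ are pairwise adjacent in $P_F$; hence $\var(f)$ is a clique and there is at least one node $t$ of $\mathcal{T}$ with $\var(f)\subseteq\chi(t)$. First I would fix, for every $f\in F$, one such host node $t_f$.

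Next I would turn the assignment $f\mapsto t_f$ into an injection by splitting overloaded nodes. Consider a node $t$ with $\children(t)=\{t_1,\dots,t_s\}$, $s\le 2$, to which $m\ge 1$ terms $f_1,\dots,f_m$ were assigned. I replace $t$ by a path $u_1\to u_2\to\dots\to u_m$ of fresh nodes, each carrying the identical bag $\chi(u_j)\eqdef\chi(t)$: the edge from the parent of $t$ is redirected into $u_1$, the original children $t_1,\dots,t_s$ become the children of $u_m$, and I set $\delta(u_j)\eqdef f_j$. A node hosting no term is kept unchanged and left outside $N'$. Performing this at every node yields $\mathcal{T}'=(T',\chi',\delta)$ whose $\delta$ is, by construction, a bijection from $N'$ onto $F$ with $\var(\delta(u_j))=\var(f_j)\subseteq\chi(t)=\chi'(u_j)$, exactly as Definition~\ref{def:labeledTD} demands.

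It then remains to verify three things. First, $\mathcal{T}'$ is still a tree decomposition of $P_F$: every new bag equals an old one, so vertex and edge coverage are inherited, and since the copies of a node form a connected path the nodes containing any fixed variable remain connected. Second, $\width(\mathcal{T}')=\width(\mathcal{T})$, as no bag changes size. Third, $\mathcal{T}'$ stays almost $c$-nice: each internal copy $u_j$ with $j<m$ has the single child $u_{j+1}$ of identical bag, so $\Card{\chi'(u_j)\setminus\chi'(u_{j+1})}=0\le c$; the last copy $u_m$ inherits both the bag and the children of $t$, so its condition is the one $t$ already satisfied; and the parent of $t$ now sees the child $u_1$ with an unchanged bag, leaving its condition untouched.

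Finally, for the node count, the splitting introduces exactly one node per term of $F$, so $\mathcal{T}'$ has at most $\Card{N}+\Card{F}$ nodes. Because $F$ is a \emph{set}, its terms are distinct, and each is hosted by a bag of size at most $k+1$, which admits only $\bigO(k^3)$ distinct terms of size at most $3$; hence $\Card{F}\le\Card{N}\cdot\bigO(k^3)$ and $\mathcal{T}'$ has $\bigO(\Card{N})$ nodes for fixed width $k$, as claimed. I expect the only genuinely delicate point to be checking that almost $c$-niceness survives the splicing; the identical-bag path is chosen precisely so that the inserted nodes have an empty introduce-set and every original adjacency condition is preserved, which is what makes the construction ``easy''.
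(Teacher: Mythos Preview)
Your proposal is correct and follows essentially the same approach as the paper. The paper's own proof is a single sentence observing that each bag can host at most $\binom{k}{3}\cdot 2^{3}$ distinct clauses or terms of size at most~$3$; your argument supplies exactly this counting bound at the end, together with the explicit path-splitting construction and the verification of the tree-decomposition, width, and almost $c$-niceness conditions that the paper leaves implicit.
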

\begin{EX}
Consider again Figure~\ref{fig:graph-td} (right). Observe that %
$\mathcal{T}_2$ is a labeled almost $3$-nice tree decomposition of~$P_\Q$,
\FIXCAM{where
labeling function~$\delta$ sets~$\delta(t_i)=d_i$, for~$1\leq i\leq 4$.}
\end{EX}

\section{Decomposition-Guided Compression}
\footnoteitext{\label{foot:tprimenotdefinedorisit}For the sake of readability, we defer the discussion of formal details on the construction of~$\mathcal{T}'$ to
the proof of Lemma~\ref{lem:compr}.}%
\begin{figure}[t]%
\centering%
\includegraphics[scale=0.85]{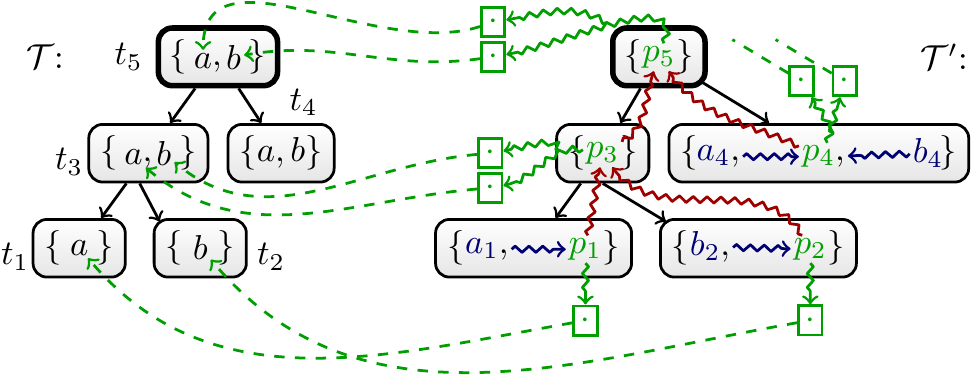}
\caption{Simplified illustration of a certain
tree decomposition $\mathcal{T}'=(T,\chi')$ of~$P_{R(\Q)}$ (yielded$^{\ref{foot:tprimenotdefinedorisit}}$
by reduction~$R$), and its relation to
tree decomposition~$\mathcal{T}=(T,\chi)$ of~$P_\Q$.
Each bag~$\chi'(t_i)$ of a node~$t_i$ of $\mathcal{T}'$ contains variable~$x_i$ for any variable~$x$ introduced in~$\chi(t_i)$
and $\ceil{\log(\width(\mathcal{T}))}$ many (green) pointer variables~$p_i$ selecting \emph{one} variable in~$\chi(t_i)$ of~$\mathcal{T}$.
Squiggly red arrows indicate the propagation between pointers~$p_i, p_j$ and ensure consistency.
In particular, although truth values for variable~$a$ are ``guessed'' using~$a_1$ and~$a_4$
(and ``propagated'' via blue squiggly arrows to corresponding pointers~$p_1$ and $p_4$, respectively),
these red arrows ensure via pointers~$p_1,p_3,p_4,p_5$ that truth values for~$a_1$ and~$a_4$ %
coincide.}
\label{fig:transformation}%
\end{figure}
Next, we present our %
approach %
to transform a given input instance of treewidth $k$ into an
instance of exponentially smaller treewidth (``compression'')
\FIX{compared to the original treewidth $k$}.  Thereby, we trade the
compression of the parameter for the cost of additional computation
power required to solve the compressed instance.
For the canonical $\QBFSAT$ problem, we require  an increased
quantifier~rank.

First, we introduce the reduction~$R$ that takes an instance~$\Q$ of
$\ell\hy\QBFSAT$, and computes a corresponding tree
decomposition~$\mathcal{T}$ of the primal graph~$P_\Q$,
where~$\width(\mathcal{T})=\tw{P_\Q}$.  Then, it returns a compressed
instance~$R(\Q)$ of~$(\ell+1)\hy\QBFSAT$ of
treewidth~$\bigO(\log(\width(\mathcal{T})))$.  The reduction~$R$,
which is guided by TD~$\mathcal{T}$,
yields$^{\ref{foot:tprimenotdefinedorisit}}$ a new \emph{compressed
  tree decomposition}~$\mathcal{T}'$ of~$P_{R(\Q)}$ of
width~$\bigO(\log(\width(\mathcal{T})))$.
For that it is crucial to \FIXCAM{balance}
introducing copies of variables (redundancy) and saving treewidth
(structural dependency), such that, intuitively, we can still
evaluate~$R(\Q)$ given the limitation of
treewidth~$\bigO(\log(\width(\mathcal{T})))$.
\FIXCAM{To keep this balance, we can only analyze %
in a bag in
$\mathcal{T}'$ a \emph{constant} number of elements of the corresponding original bag
of~$\mathcal{T}$.}
Still, {considering} $\log(\width(\mathcal{T}))$ many %
elements in a bag at \FIXCAM{once allows} us to represent one ``pointer'' to
address at most~$\width(\mathcal{T})$ many elements of each bag
of~$\mathcal{T}$ and, consequently, the restriction to
$\bigO(\log(\width(\mathcal{T})))$ many elements in a bag at once
enables \emph{constantly} many such pointers.
To give a first glance at the idea of the reduction~$R$,
Figure~\ref{fig:transformation} provides an intuition and
illustrates a tree decomposition~$\mathcal{T}$
of~$P_\Q$ together with a corresponding compressed tree
decomposition~$\mathcal{T}'$ of~$P_{R(\Q)}$, whose bags contain
pointers to original bags of~$\mathcal{T}$.  Actually, we can encode
the \emph{propagation} of information from one bag of~$\mathcal{T}'$
to its parent bag with the help of these pointers. Thereby, we ensure
that information is consistent and
this consistency can be preserved, even though we \emph{guess} in~$R$ truth values
for copies of the same variable in~$\Q$ {independently}.
\FIX{Note that these ``local'' pointers for each bag 
are essential to achieve treewidth compression.}

Below, we discuss \FIXCAM{the} reduction~$R$ in more detail.
Then, Section~\ref{sec:metho} provides a description of a general
methodology for establishing lower bounds for problems parameterized
by treewidth.  In more detail, equipped with our lower bound results
for~\QBFSAT, we propose reductions from~\QBFSAT as a general toolkit
for proving lower bounds assuming ETH.  Further, we discuss several
showcases to illustrate %
\FIXCAM{this methodology}.

\subsection{The Reduction}\label{sec:reduction}

The formula~$R(\Q)$ constructed by~$R$ mainly consists of three interacting parts.
In the presentation, we  \FIXCAM{refer to} them as \emph{guess}, \emph{check},  and \emph{propagate} part.
\begin{itemize}[leftmargin=0pt, itemindent=10pt, itemsep=5pt]
	\item "Guess"~($\mathcal{G}$): Contains clauses responsible for guessing truth values of variables occurring in the original QBF~$\Q$.
	\item "Check"~($\mathcal{CK}$): These clauses ensure that there is at least one 3-DNF term in~$\Q$ that is satisfied, thereby maintaining 3 pointers for each node as discussed above.
	\item "Propagate"~($\mathcal{P}$): These clauses ensure consistency using a pointer for each node of the tree decomposition. %
\end{itemize}%
We commence with the formal description of~$R$.
Given a QBF~$\Q$ of the form~$\Q\eqdef Q_1 V_1. {{Q_2}} V_2. \cdots \forall V_\ell. D$, where~$D$ is in 3-DNF
such that the quantifier blocks are alternating,~i.e., quantifiers of quantifier blocks with even indices are equal, which are
different from those of blocks with odd indices.
Further, assume a labeled almost $c$-nice tree decomposition~$\mathcal{T} = (T, \chi, \delta)$, where~$T=(N, \cdot, \cdot)$ of the primal graph~$P_\Q$ of~$D$,
which always exists by Observation~\ref{obs:uniqueremoval}.
Notice that by Definition~\ref{def:labeledTD} for all terms~$d\in D$, the inverse function $\delta^{-1}(d)$ is well-defined. %
\FIX{Further, actually $R$ can deal with open QBFs, i.e, QBF $\Q$ does not necessarily have to be closed. Open formulas are needed later to simplify the correctness proof~of~Section~\ref{sec:correctness}.}

We use the following sets of variables.
Let~$\nint{\mathcal{T}}{x}{:=}$ $\{t \mid t \in N, x
\in \chi(t)\setminus({\mathsmaller\bigcup_{t_i\in\children(t)}\chi(t_i)})\}$ be the
set of nodes, where a given element~$x$ is
\emph{introduced}. %
For a set~$V\subseteq\var(D)$ of variables, we denote
by~$\lint{\mathcal{T}}{V} \eqdef \{x_t \mid x \in V, t \in
\nint{\mathcal{T}}{x}\}$ the set of fresh variables generated for each
original variable~$x$ and node~$t$, where~$x$ is introduced.  Later, we
need to distinguish whether the set~$V_i$ of variables is universally
or existentially quantified.  Universal quantification requires to
shift for each~$x\in V_i$ all but one representative
of~$\{x_t \mid t \in \nint{\mathcal{T}}{x}\}$ to the next existential quantifier block $Q_{i+1}$.
The representative variable that is not shifted is denoted by~$\master{x}$.
In particular, given a quantifier block~$Q_2$, its variables~$V_2$ and
the variables~$V_1$ of the preceding quantifier block, we define:
\FIXCAM{$\lint{\mathcal{T}}{Q_2,V_2,V_1} \eqdef \{x_t \mid x \in V_2, t \in
\nint{\mathcal{T}}{x}, Q_2=\exists\} \cup \{x_t \mid x \in V_2, t \in\nint{\mathcal{T}}{x},
x_t = \master{x}, Q_2 = \forall\} \cup \{x_t \mid x \in
V_1, t \in \nint{\mathcal{T}}{x}, x_t \neq \master{x}, Q_2{=}$ $\exists\}$}.  We denote
by~$\lsat \eqdef \{sat_t, sat_{\leq t} \mid t \in N\}$ the set of
fresh decision variables responsible for storing for each
node~$t\in N$ whether any term at $t$ or at any node below~$t$ is satisfied, respectively.
Finally, we denote by~$\lbv{\mathcal{T}} \eqdef \{b_t^0, \ldots, {\mathsmaller b_t^{\ceil{\log(\Card{\chi(t)})}-1}} \mid t\in N\}$,
and~$\lbvv{\mathcal{T}} \eqdef \{v_t \mid t\in N\}$ the set of fresh variables
for each node~$t\in N$ that will be used to address particular elements of the corresponding bags (pointer as depicted in Figure~\ref{fig:transformation} in binary representation), and to assign truth values for these elements, respectively.
Overall, the variables in~$\lbv{\mathcal{T}}$ allow us to guide the evaluation of formula~$D$ along the tree decomposition~$\mathcal{T}$.
For checking 3-DNF terms, we need the same functionality three more times,
resulting in the sets~$\lbvd{\mathcal{T}} \eqdef \{b_{t,j}^0, \ldots, {\mathsmaller b_{t,j}^{\ceil{\log(\Card{\chi(t)} + 1)}-1}} \mid t\in N, 1 \leq j \leq 3\}$ that additionally may refer to a special fresh element~$nil$ (therefore the $+1$ in the exponent in definition of~$\lbvd{\mathcal{T}}$),
and~$\lbvvd{\mathcal{T}} \eqdef \{v_{t,j} \mid t\in N, 1 \leq j \leq 3\}$ of fresh variables.
Notice that the construction is designed in such a way that the focus lies only on certain elements of the bag
(one at a time, and independent of other elements within the same bag).
In the end, this ensures that the treewidth of our reduced instance is only logarithmic in the original treewidth of the primal graph of~$D$.
Reduction~$R(\Q)$ creates~$\Q'\eqdef$
{\smallalign{\normalfont\small}
\begin{align*}
	&Q_1\ \lint{\mathcal{T}}{Q_1, V_1, \emptyset}.\ Q_2\ \lint{\mathcal{T}}{Q_2, V_2, V_1}.\cdots %
	\forall\ \lint{\mathcal{T}}{\forall, V_\ell, V_{\ell-1}}, \lbv{\mathcal{T}}.\notag\\%\tag{R}\\
	&\exists\ \lint{\mathcal{T}}{\exists, \emptyset, V_{\ell}},\ \lbvv{\mathcal{T}}, \lbvvd{\mathcal{T}}, \lbvd{\mathcal{T}}, \lsat.\ C,%
\end{align*}}%
\FIXCAM{where~$C$ is a CNF formula consisting of 
guess, check and~propagate parts, i.e., sets~$\mathcal{G}, \mathcal{CK}$, and~$\mathcal{P}$ of~clauses, respectively.} %
\begin{EX}\label{ex:red}
Consider again~$\Q$ from Example~\ref{ex:running1}.
The resulting instance~$R(\Q)$ looks as follows assuming that~$\master{y}=y_{t_1}$, where~$C$ consists of a guess, check and, propagate part.
{\smallalign{\footnotesize}
	\begin{align*}
		&\exists \underbrace{w_{t_1}, w_{t_3}, x_{t_1}.}_{\lint{\mathcal{T}}{\exists, \{w,x\},\emptyset}}\, \forall\hspace{-.5em} 
		\underbrace{y_{t_1}, z_{t_3},}_{\lint{\mathcal{T}}{\forall, \{y,z\},\{w,x\}}} 
		\hspace{-1em}\underbrace{b_{t_1}^0, b_{t_1}^1, b_{t_2}^0, \ldots,  b_{t_4}^1, b_{t_5}^0.}_{\lbv{\mathcal{T}}}\, \exists 
		\hspace{-.5em}\underbrace{y_{t_3},}_{\lint{\mathcal{T}}{\exists, \emptyset, \{y,z\}}}\\
		&%
		\underbrace{v_{t_1}, \ldots, v_{t_5}}_{\lbvv{\mathcal{T}}},%
		\underbrace{v_{t_1, 1}, v_{t_1, 2}, v_{t_1, 3}, v_{t_2, 1}, \ldots, v_{t_5,3},}_\lbvvd{\mathcal{T}}\\
		&\underbrace{b_{t_1, 1}^0, b_{t_1, 1}^1, b_{t_1, 2}^0, \ldots, b_{t_5, 3}^1,}_{\lbvd{\mathcal{T}}} %
		\underbrace{sat_{t_1}, \ldots, sat_{t_5}, sat_{\leq t_1}, \ldots, sat_{\leq t_5}.}_\lsat C\\[-2.5em]
	\end{align*}%
}%
\end{EX}%
In the following, we define
sets~${\mathcal{G}}$, ${\mathcal{CK}}$, and ${\mathcal{P}}$ of clauses.
To this end, we require for the pointers a bit-vector (binary) representation of the elements in a bag of~$\mathcal{T}$,
and a mapping that assigns bag elements to its corresponding binary representation.
In particular, we assume an arbitrary, but fixed total order~$\prec$ of elements of a bag~$\chi(t)$ of any given node~$t\in N$. %
With~$\prec$, we can then assign each element~$x$ in~$\chi(t)$ its unique (within the bag) induced ordinal number~$o(t,x)$.
This ordinal number~$o(t,x)$ is expressed in binary.
For that we need precisely~$\ceil{\log(\Card{\chi(t)})}$ many
bit-variables~$B\eqdef \{ b_t^0, \ldots, {\mathsmaller b_t^{\ceil{\log(\Card{\chi(t)})}-1}}\}$.
We denote by~$\bval{x}{t}$ the (consistent) set of literals over variables in~$B$ that encode (in binary)
the ordinal number~$o(t,x)$ of~$x\in\chi(t)$ in~$t$, such that whenever a literal~$b_t^i$ or~$\neg b_t^i$ is contained in the set~$\bval{x}{t}$, the $i$-th bit in the unique binary representation of~$o(t,x)$ is 1 or 0, respectively.
Analogously, for~$1\leq j\leq 3$ we denote by~$\bvali{x}{t}{j}$ the (consistent) set of literals over variables in~$B_{j}\eqdef \{ b_{t,j}^0, \ldots, {\mathsmaller b_{t,j}^{\ceil{\log(\Card{\chi(t)} + 1)}-1}}\}$ that \FIXCAM{either} binary-encode \FIXCAM{the} ordinal number~$o(t,x)$ of~$x\in\chi(t)$ in~$t$, \FIXCAM{or these literals binary-encode number~$\text{max}_{y\in\chi(t)}o(t,y)+1$ for~$x=nil$}.

\medskip
\noindent\textbf{The guess part~$\mathcal{G}$.} The clauses in~$\mathcal{G}$, which we denote as implications, are defined as follows.
{
\smallalign{\normalfont\small}
\begin{align}
	\label{red:guessatom}&x_t \wedge \bigwedge_{b \in \bval{x}{t}} b \longrightarrow v_t&\pushright{\text{for each } x_t\in \lint{\mathcal{T}}{\var(D)}}\\%\tag{$\mathcal{G}$1} \\
	\label{red:guessnegatom}\neg &x_t \wedge \bigwedge_{b \in \bval{x}{t}} b \longrightarrow \neg v_t&\pushright{\text{for each } x_t\in \lint{\mathcal{T}}{\var(D)}}%
\end{align}
}

\noindent Intuitively, this establishes that whenever a certain variable~$x_t$ for an introduced variable~$x\in\chi(t)$ is assigned to true (false)
and all the corresponding literals in~$\bval{x}{t}$ of the binary representation of~$o(t,x)$ are satisfied (i.e, $x$ is ``selected''),
then also~$v_t\in\lbvv{\mathcal{T}}$ of node~$t$ has to be set to true (false).

Analogously, %
set~$\mathcal{G}$ further contains the following clauses: %
{
\vspace{-.5em}
\smallalign{\normalfont\small}
\begin{align}
	\label{red:guessatomj}&x_t \wedge \bigwedge_{b \in \bvali{x}{t}{j}} b \longrightarrow v_{t,j}&\pushright{\text{for each } x_t\in \lint{\mathcal{T}}{\var(D)}, 1 \leq j \leq 3}\raisetag{1.15em}\\%\tag{$\mathcal{G}$3} \\
	\label{red:guessnegatomj}\neg &x_t \wedge \bigwedge_{b \in \bvali{x}{t}{j}} b \longrightarrow \neg v_{t,j}&\pushright{\text{for each } x_t\in \lint{\mathcal{T}}{\var(D)}, 1 \leq j \leq 3}\raisetag{1.15em}%
\end{align}
}
\begin{EX}
Consider formula~$C$ from Example~\ref{ex:red}.
Let~$1\leq j\leq 3$. Further, assume the following mapping of bag contents to bit-vector assignments.
For any variable~$a\in\var(D)$ with $t\in\nint{\mathcal{T}}{a}$ and for~$a=nil$ with~$t\in N$,
we arbitrarily fix the total ordering~$\prec$ and have $\bval{a}{t}$ and $\bvali{a}{t}{j}$ as follows. %
{%
	\begin{table}[h]%
		\shortversion{\footnotesize}
		\fontsize{7.7}{8}\selectfont
		\longversion\centering
		\vspace{-.75em}%
		\begin{tabular}[h]{@{\hspace{0em}}r@{\hspace{.15em}}|@{\hspace{.1em}}c@{\hspace{.1em}}|@{\hspace{.1em}}c@{\hspace{.1em}}|@{\hspace{.1em}}c@{\hspace{.1em}}|@{\hspace{.1em}}c@{\hspace{.1em}}|@{\hspace{.1em}}c@{\hspace{.1em}}|@{\hspace{.1em}}c@{\hspace{0em}}}%
			\multirow{2}{*}{$a$} &
			\multicolumn{2}{c}{$t\in\{t_1,t_2\}$} & \multicolumn{2}{c}{$t\in\{t_3,t_4\}$} & \multicolumn{2}{c}{$t=t_5$}\\
			& \bval{a}{t} & \bvali{a}{t}{j} & \bval{a}{t} & \bvali{a}{t}{j} & \bval{a}{t} & \bvali{a}{t}{j} \\\hline
			{\scriptsize$w$} & $\{\neg b_{t}^0, \neg b_{t}^1\}$ & $\{\neg b_{t,j}^0, \neg b_{t,j}^1\}$ & $\{\neg b_{t}^0, \neg b_{t}^1\}$ & $\{\neg b_{t,j}^0, \neg b_{t,j}^1\}$ & $\{\neg b_{t}^0\}$ & $\{\neg b_{t,j}^0, \neg b_{t,j}^1\}$ \\
			{\scriptsize$x$} & $\{\neg b_{t}^0, b_{t}^1\}$ & $\{\neg b_{t,j}^0, b_{t,j}^1\}$ & \multicolumn{2}{c|@{\hspace{.1em}}}{-}& \multicolumn{2}{c}{-} \\
			{\scriptsize$y$} & $\{b_{t}^0, \neg b_{t}^1\}$ & $\{b_{t,j}^0, \neg b_{t,j}^1\}$ & $\{\neg b_{t}^0, b_{t}^1\}$ & $\{\neg b_{t,j}^0, b_{t,j}^1\}$ & $\{b_{t}^0\}$ & $\{\neg b_{t,j}^0, b_{t,j}^1\}$ \\
			{\scriptsize$z$} & \multicolumn{2}{c|@{\hspace{.1em}}}{-} & $\{b_{t}^0, \neg b_{t}^1\}$ & $\{b_{t,j}^0, \neg b_{t,j}^1\}$ & \multicolumn{2}{c}{-} \\
			{\scriptsize$nil$} & - & $\{b_{t,j}^0, b_{t,j}^1\}$ & - & $\{b_{t,j}^0, b_{t,j}^1\}$ & - &  $\{b_{t,j}^0, \neg b_{t,j}^1\}$
		\end{tabular}\vspace{-.75em}%
	\end{table}%
}

\noindent The guess part of~$C$ contains for example for variable~$w\in\var(D)$ the following clauses.%
{
\smallalign{\small}
	\begin{align*}%
		w_{t_1} \wedge \neg b_{t_1}^0 \wedge \neg b_{t_1}^1 \longrightarrow v_{t_1},\quad %
		\neg w_{t_1} \wedge \neg b_{t_1}^0 \wedge \neg b_{t_1}^1 \longrightarrow \neg v_{t_1},\end{align*}\begin{align*} %
		w_{t_3} \wedge \neg b_{t_3}^0 \wedge \neg b_{t_3}^1 \longrightarrow v_{t_3}, \quad
		\neg w_{t_3} \wedge \neg b_{t_3}^0 \wedge \neg b_{t_3}^1 \longrightarrow \neg v_{t_3}.
	\end{align*}%
}%
Thereby, whenever we guess a certain truth value for~$w_{t_1}$ ($w_{t_3}$) it is ensured that there is a certain bit-vector, namely $\bval{w}{t_1}$ ($\bval{w}{t_3}$) such that~$v_{t_1}$ ($v_{t_3}$) has to be set to the same truth value.
Analogously, clauses of the form~(\ref{red:guessatomj}) and~(\ref{red:guessnegatomj}) are 
in~$\mathcal{G}$.
\end{EX}

\medskip
\noindent\textbf{The check part~$\mathcal{CK}$.} In the following, we assume an arbitrary, but fixed total order of the (three) literals of each (3-DNF) term~$d\in D$.
We refer to the first, second, and third literal of~$d$ by~$\lit{d}{1}, \lit{d}{2}$, and $\lit{d}{3}$, respectively.
Analogously, $\ato{d}{1}, \ato{d}{2},$ and~$\ato{d}{3}$
refers to the variable of the first, second, and third literal, respectively. 
Further, for a given term~$d\in D$ and~$1 \leq j \leq 3$,
let~$\bvv{d}{t}{j}$ denote $v_{t,j}$ if~$\lit{d}{j}$ is a variable, and~$\neg v_{t,j}$ otherwise.
Set~$\mathcal{CK}$ contains the following clauses:
{
\smallalign{\normalfont\small}
\begin{align}
	\label{chk_leqtright}&sat_{\leq t} \longrightarrow sat_{\leq t_1} \vee \cdots \vee sat_{\leq t_s} \vee sat_t&\pushleft{\text{for each } t\in N,}\notag\vspace{-0.4em}\\
	&&\pushleft{\children(t)=\{ t_1, \ldots, t_s\}} %
\end{align}
}%
Informally speaking, for any node~$t$ this ensures the propagation of whether
we satisfied at least one term directly in node~$t$, or in any descendant of~$t$.

In order to check whether a particular term is satisfied, we add for each term~$d\in D$ clauses encoding the implication
$sat_{\lnode{d}} \longrightarrow \bigwedge_{1 \leq j \leq 3}\hspace{-.3em}\left[\bigwedge_{b\in \bvali{\ato{d}{j}}{\lnode{d}}{j}} b \wedge \bvv{d}{\lnode{d}}{j}\right]$ as follows:
{
\smallalign{\normalfont\small}
\begin{align}
	\label{chk_satcb}&\pushleft{sat_{\lnode{d}} \longrightarrow b}&\pushleft{\text{for each } d\in D, 1\leq j\leq 3,}\notag\vspace{-0.4em}\\
	&&\pushleft{b\in \bvali{\ato{d}{j}}{\lnode{d}}{j}}\\%\tag{$\mathcal{CK}$5}\\
	\label{chk_satcbv}&sat_{\lnode{d}} \longrightarrow \bvv{d}{\lnode{d}}{j}&\pushleft{\text{for each } d\in D, 1\leq j\leq 3}%
\end{align}
}%

\noindent Finally, we add~$sat_{\leq r}$ for root~$r$, and~$\neg sat_t$ for each node~$t$ in~$N\setminus\bigcup_{d\in D}\{\lnode{d}\}$
\FIXCAM{since these nodes are not used for checking satisfiability of any term.} 
{
\smallalign{\normalfont\small}
\begin{align}
	\label{chk_satn}&sat_{\leq r}\\%\tag{$\mathcal{CK}$7} \\
	\label{chk_negsatt}&\neg sat_t&\pushright{\text{for each } t\in N\setminus\bigcup_{d\in D}\{\lnode{d}\}} %
\end{align}
}
\begin{EX}
	Consider again formula~$C$ from Example~\ref{ex:red}.
	We discuss clauses of the check part for node~$t_2=\lnode{d_2}$ and root node~$t_5$. Thereby, we encode satisfiability of term~$d_2=\neg w \wedge \neg x \wedge y$ assuming~$\lit{d_2}{1}=\neg w, \lit{d_2}{2}=\neg x$, and $\lit{d_2}{3}=y$.
\shortversion{\vspace{-1.25em}}
{
\smallalign{\small}
	\begin{align*}
		sat_{\leq {t_2}} \longrightarrow sat_{\leq t_1} \vee sat_{t_2},\\ %
		sat_{t_2} \longrightarrow \neg b_{t_2,1}^0, \quad sat_{t_2} \longrightarrow  \neg b_{t_2,1}^1, \quad sat_{t_2} \longrightarrow \neg v_{t_2,1},\\
		sat_{t_2} \longrightarrow \neg b_{t_2,2}^0, \quad  sat_{t_2} \longrightarrow b_{t_2,2}^1, \quad sat_{t_2} \longrightarrow \neg v_{t_2,2},\\%\end{align*}\begin{align*}
		sat_{t_2} \longrightarrow b_{t_2,3}^0, \quad  sat_{t_2} \longrightarrow \neg b_{t_2,3}^1, \quad  sat_{t_2} \longrightarrow v_{t_2,3}, \\\\[-1em]
		 sat_{\leq t_5} \longrightarrow sat_{\leq t_2} \vee sat_{\leq t_4} \vee sat_{t_5}, \quad %
     sat_{\leq t_5},\quad \neg sat_{t_5}\\[-3em]
	\end{align*}%
}%
\end{EX}

\noindent\textbf{The propagate part~$\mathcal{P}$.} The sets~$\mathcal{G}$ and~$\mathcal{CK}$ contain clauses responsible for guessing truth values and checking that at least one term of the original formula~$D$ is satisfied accordingly.
In particular, the guess of truth values for $\var(D)$ happens at different tree decomposition nodes ``independently'', whereas checking whether at least one term~$d\in D$ is satisfied is achieved
in exactly one tree decomposition node~$\lnode{d}$.
Intuitively, in order to ensure that these independent guesses of truth values for~$\var(D)$,
are consistent, clauses in~$\mathcal{P}$ make use of the connectedness condition of TDs 
in order to guide the comparison of these independent guesses along the TD.
More precisely, for each tree decomposition node~$t\in N$, every node~$t_i\in\children(t)$,
and every variable~$x\in \chi(t)\cap\chi(t_i)$ that both nodes~$t$ and~$t_i$ have in common,
the set~$\mathcal{P}$ contains clauses: %
{
\smallalign{\normalfont\small}
\begin{align}
	\label{red:propatom}&v_t \wedge \bigwedge_{b \in \bval{x}{t}} b \wedge \bigwedge_{b \in \bval{x}{t_i}} b \longrightarrow v_{t_i}&\pushleft{\text{for each } t\in N, t_i\in\children(t),} \notag\vspace{-1.5em}\\ &&\pushleft{x\in \chi(t)\cap\chi(t_i)}\raisetag{1.15em}%
\end{align}\begin{align}	
\label{red:negpropatom}\neg &v_t \wedge \bigwedge_{b \in \bval{x}{t}} b \wedge \bigwedge_{b \in \bval{x}{t_i}} b \longrightarrow \neg v_{t_i}&\pushleft{\text{for each } t\in N, t_i\in\children(t),} \notag\vspace{-1.5em}\\ &&\pushleft{x\in \chi(t)\cap\chi(t_i)}\raisetag{1.15em}%
\end{align}
}%
\noindent Further, for each clause~$d\in D$, every node~$t_i$ \FIXCAM{in} $\children(\lnode{d})$, and~$1 \leq j \leq 3$ such that~$\ato{d}{j}\in \chi(t_i)$,
set~$\mathcal{P}$ contains: %
{
\smallalign{\normalfont\small}
\begin{align}
	\label{red:propatomj}&\bigwedge_{b' \in \bvali{\ato{d}{j}}{t}{j}} b' \longrightarrow b&\pushleft{\text{for each } d\in D \text{ with } 1\leq j\leq 3, }\notag\vspace{-1.65em}\\
	&&\pushleft{t=\lnode{d}, t_i\in\children(t), \ato{d}{j}\in \chi(t_i), }\notag\vspace{-.2em}\\%\vspace{-1.2em}\\%\tag{$\mathcal{P}$3}\\
	&&\pushleft{b \in \bvali{\ato{d}{j}}{t_i}{j}}\raisetag{1.15em}\\
	\label{red:propvright}&v_{t,j} \longleftrightarrow v_{t_i,j}&\pushleft{\text{for each } t\in N, t_i\in\children(t), 1 \leq j\leq 3}\raisetag{.4em}%
\end{align}
}%
\noindent Vaguely speaking, this construction ensures that
whenever a bag element (using~$\lbvd{\mathcal{T}}$) or a truth value (using~$\lbvvd{\mathcal{T}}$) is ``selected'' in node~$t$,
we also have to select the same (if exists) below in children of~$t$. %

\begin{EX}
	Consider once more~$C$ from Example~\ref{ex:red}.
	We illustrate the propagate part for node~$t_4=\lnode{d_4}$ and variable~$w$ assuming that~$w=\ato{d_4}{1}$. Observe that~$d_4=w \wedge y \wedge z$, and $w \in \chi(t_4)\cap \chi(t_3)$.
{
\smallalign{\small}
	\begin{align*}
		v_{t_4} \wedge \underbrace{\neg b_{t_4}^0 \wedge \neg b_{t_4}^1}_{\bval{w}{t_4}} \wedge \underbrace{\neg b_{t_3}^0 \wedge \neg b_{t_3}^1}_{\bval{w}{t_3}} \longrightarrow v_{t_3}, \quad\\
		\neg v_{t_4} \wedge \underbrace{\neg b_{t_4}^0 \wedge \neg b_{t_4}^1}_{\bval{w}{t_4}}\wedge \underbrace{\neg b_{t_3}^0 \wedge \neg b_{t_3}^1}_{\bval{w}{t_3}} \longrightarrow \neg v_{t_3},\\
		\neg b_{t_3,1}^0 \wedge \neg b_{t_3,1}^1 \longrightarrow \neg b_{t_4,1}^0,\quad \neg b_{t_3,1}^0 \wedge \neg b_{t_3,1}^1 \longrightarrow \neg b_{t_4,1}^1,\\
		v_{t_4,1} \longleftrightarrow v_{t_3,1}, \quad v_{t_4,2} \longleftrightarrow v_{t_3,2}, \quad v_{t_4,3} \longleftrightarrow v_{t_3,3}\\[-3em]
	\end{align*}
}%
\end{EX}

\begin{REM}
Recalling Figure~\ref{fig:transformation}, we would like to highlight the relation between elements of the figure and variables or clauses of reduction~$R$ introduced above.
Blue elements~$a_1, b_2, a_4, b_4$ represent ``introduce variables''~$\lint{\mathcal{T}}{\var(D)}$ and the blue squiggly arrows visualize the guess part~$\mathcal{G}$.
Green elements~$p_1, p_2, p_3,$ $p_4,p_5$ represent ``pointer variables''~$\lbv{\mathcal{T}}$ and~$\lbvd{\mathcal{T}}$ and the green squiggly arrows point to elements of tree decomposition~$\mathcal{T}$.
Finally, red squiggly arrows visualize the propagate part~$\mathcal{P}$. (The check part~$\mathcal{CK}$ is not explicitly visualized.)
\end{REM}

\noindent\textbf{Converting~$C$ to 3-CNF formula~$C'$.}  Observe that by similar arguments (cf.,~\cite{LampisMitsou17}) one can transform using an additional reduction~$R'$ the CNF formula~$C$ of the QBF~$R(\Q)$ into~3-CNF, resulting in~$\Q''=R'(R(\Q))$ such that~$\tw{P_{\Q''}} \leq \tw{P_{R(Q)}} + 2$.
To this end, one has to perform the following standard reduction (cf.,~\cite{LampisMitsou17}):
As long as there exists a clause~$c\in C$ consisting of more than~$3$ literals, we introduce a fresh existentially quantified variable~$v$, remove~$c$ from~$C$ and replace it with two new clauses. 
The first new clause contains~$v$ and two literals of~$c$,
while the second clause contains~$\neg v$ and the remaining literals of~$c$.
Note that this standard reduction~$R'$ does not affect satisfiability,
and can be done such that it causes only constant increase of the treewidth
(cf., Lemma~\ref{lem:compr} and \cite{LampisMitsou17}).
\FIXCAM{Observe that by construction the same argument actually holds for pathwidth.} %

\subsection{Methodology for Lower Bounds}\label{sec:metho}

The reduction discussed in the previous subsection allows us to establish \FIXCAM{our main result, which is the following theorem.} %

\begin{THM}[QBF lower bound]\label{lab:primqbflb}
  Given \FIXCAM{an arbitrary QBF} of the
  form~$Q=Q_1 V_1. Q_2 V_2. Q_3 V_3 \longversion\allowbreak \cdots Q_\ell V_\ell. F$
  where~$\ell\geq 1$, and $F$ is a 3-CNF formula
  (if~$Q_\ell=\exists$), or~$F$ is a 3-DNF formula
  (if~$Q_\ell=\forall$).  Then, unless ETH fails, $\Q$ cannot be
  solved in time~$\tower(\ell,o(k))\cdot \BUGFIX{\poly(\Card{\var(F)})}{2^{o(\Card{\var(F)})}}$,
  where~$k$ is the treewidth of the primal graph~$P_\Q$.
\end{THM}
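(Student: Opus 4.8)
The plan is to prove the bound by induction on the quantifier rank~$\ell$, using the decomposition-guided compression reduction~$R$ (followed by the $3$-CNF conversion~$R'$) as the engine that climbs one level of the hierarchy. Throughout I would rely on the three properties of~$R$ established above: (i)~it maps an $\ell$-\QBFSAT instance~$\Q$ to an $(\ell+1)$-\QBFSAT instance~$R(\Q)$ with $\Q$ valid iff $R(\Q)$ valid (correctness, Section~\ref{sec:correctness}); (ii)~$\tw{P_{R(\Q)}}=\bigO(\log \tw{P_\Q})$ (compression, Lemma~\ref{lem:compr}), which after~$R'$ grows by only~$2$; and (iii)~$R$ and~$R'$ run in time linear in~$\CCard{\Q}$ and produce a matrix over $\bigO(\Card{\var(\matr(\Q))})$ variables. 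Since~$R$ first computes an optimal-width, labeled almost $c$-nice tree decomposition of~$P_\Q$ (via Bodlaender's algorithm and Observation~\ref{obs:uniqueremoval}), it is well defined. The contrapositive of the induction is what I would actually use: assuming a hypothetical algorithm that is too fast at level~$\ell+1$, I compose it with~$R$ to obtain a too-fast algorithm at level~$\ell$, contradicting the inductive hypothesis.

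For the base case~$\ell=1$, an instance $\exists V_1.F$ with~$F$ in $3$-CNF is exactly \SAT, while $\forall V_1.F$ with~$F$ in $3$-DNF is its dual; writing $n=\Card{\var(F)}$ and noting that the primal treewidth satisfies $k\le n$, any algorithm running in $\tower(1,o(k))\cdot 2^{o(n)}=2^{o(k)}\cdot 2^{o(n)}$ would run in $2^{o(n)}$ on these instances, contradicting ETH. For the inductive step the crucial point is the arithmetic of the tower. Applying the assumed level-$(\ell+1)$ algorithm to~$R(\Q)$, whose treewidth is $k'=\bigO(\log k)$ and whose number of variables is $n'=\bigO(n)$, costs
\[
  \tower(\ell+1,o(k'))\cdot 2^{o(n')}=\tower(\ell+1,o(\log k))\cdot 2^{o(n)}.
\]
Here I would invoke the identity $\tower(\ell+1,\log k)=\tower(\ell,k)$: the innermost exponential $2^{\log k}$ collapses to~$k$, removing one level of the tower. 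I then check that it survives the little-o, since $2^{o(\log k)}=k^{o(1)}=o(k)$, so that $\tower(\ell+1,o(\log k))=\tower(\ell,o(k))$. Hence the composed algorithm (dominated by the algorithm's runtime, as $R,R'$ cost only linear time) solves level-$\ell$ instances in $\tower(\ell,o(k))\cdot 2^{o(n)}$, precisely the running time excluded by the inductive hypothesis.

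It remains to reconcile the two parities allowed in the statement ($Q_\ell=\exists$ with $3$-CNF versus $Q_\ell=\forall$ with $3$-DNF), since~$R$ is stated for a $\forall$-ending, $3$-DNF input and yields an $\exists$-ending, $3$-CNF output. I would close this gap with the trivial negation duality: negating a closed QBF flips every quantifier and, by De Morgan, turns its $3$-CNF matrix into a $3$-DNF one (and vice versa) at no cost in treewidth or variable count, while complementing validity. As a decision algorithm and its complement run within the same time bound, lower bounds transfer between a level and its dual, so proving the bound for one canonical parity per level suffices, and~$R$ composed with this duality supplies the level-$\ell$-to-$(\ell+1)$ reduction in every needed configuration. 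I expect the genuinely hard work to lie not in this theorem but in the two properties of~$R$ it consumes—namely the correctness of~$R$ and, above all, the treewidth bound $\tw{P_{R(\Q)}}=\bigO(\log\tw{P_\Q})$ of Lemma~\ref{lem:compr}, whose proof must exhibit the compressed tree decomposition~$\mathcal{T}'$ (Figure~\ref{fig:transformation}) and verify that each of its bags needs only the logarithmically many pointer bits plus a constant number of introduced copies. Within the present proof itself, the only real subtlety is ensuring that the little-o factors compose as claimed through the tower collapse.
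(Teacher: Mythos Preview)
Your approach is exactly the paper's: induction on~$\ell$, with~$R'\circ R$ as the level-raising engine, the tower collapse $\tower(\ell+1,o(\log k))=\tower(\ell,o(k))$, and negation duality (the paper's Lemma~\ref{lem:inverse}) to swap parities.

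Two small inaccuracies are worth fixing. First, your claim~(iii) that $R,R'$ run in time linear in~$\CCard{\Q}$ and output only $\bigO(n)$ variables is not what Theorem~\ref{lem:reduction_linear} gives: Bodlaender plus the labeled almost $c$-nice conversion costs $2^{\bigO(k^4)}\cdot\CCard{D}$, and the number of nodes (hence variables) carries a $2^{\bigO(k^3)}$ factor. With the $\poly(n)$ form of the theorem this is harmless, since $\poly(f(k)\cdot n)=f(k)^{\bigO(1)}\cdot\poly(n)$ is absorbed into $\tower(\ell,o(k))$ for $\ell\ge 2$; but with the $2^{o(n)}$ factor you wrote, $2^{o(f(k)\cdot n)}$ is \emph{not} $2^{o(n)}$, so the composition breaks. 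This is precisely why the paper's stated bound uses $\poly(\Card{\var(F)})$ rather than $2^{o(\Card{\var(F)})}$. Second, even with $\poly(n)$, the reduction overhead $2^{\bigO(k^4)}$ is not $\tower(1,o(k))=2^{o(k)}$, so the step $\ell=1\to\ell=2$ does not go through via~$R$ alone; the paper covers this by noting that the case $\ell=2$ is already established in~\cite{LampisMitsou17}, effectively making $\ell=2$ a second base case.
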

In the following, we first use this theorem to establish a full methodology to obtain lower bound results for bounded treewidth and then provide a proof for the theorem in the next section.
The result \FIXCAM{for~$\ell=2$}~(cf., \cite{LampisMitsou17}) has already been applied as a strategy to show lower bound results for problems in artificial intelligence, as for example abstract argumentation, abduction, circumscription, and projected model counting, that are hard for the second level of the polynomial hierarchy when parameterized by treewidth~\cite{FichteHecher18c,FichteEtAl18,LampisMitsouMengel18}.
With the generalization to an \emph{arbitrary} quantifier rank in Theorem~\ref{lab:primqbflb}, one can obtain lower bounds for variants of these problems and even more general problems on the third level or higher levels of the polynomial hierarchy.

\smallskip%
\noindent\textbf{Methodology.}
\noindent This motivates \FIXCAM{our methodology} to show lower bounds for
problems parameterized by treewidth.  To this end, we make use of a
stricter notion of fpt-reductions, which \emph{linearly preserves} the
parameter.
\FIX{%
  Given functions~$f,g: \Nat \rightarrow \Nat$, where~$g$ is linear,
  and an
  $f$-bounded fpt-reduction $r$ using~$g$. Then, we call~$r$ an
  $f$-bounded~\emph{fptl-reduction} using~$g$. %
}%
 Next, we discuss the methodology
for proving lower bounds of a problem~$\mathtt{P}$ for treewidth 
consisting of the following.
\begin{enumerate}[leftmargin=0pt, itemindent=10pt, itemsep=5pt]
\item {\bf{Graph Representation:}} Pick a graph
  representation $G(I)$ for a given instance~$I$ of %
  problem~$\mathtt{P}$. 
\item {\bf Quantifier Rank:} Fix a quantifier rank~$\ell$ such that
  there is a function~$f: \Nat \rightarrow \Nat$ with
  $f(k) \in \bigO(\tower(\ell,$ $k))$ and %
  aim for
  establishing lower bound~$\tower(\ell, \Omega(k))\cdot \poly(\CCard{I})$.
\item {\bf Establish Reduction:} Establish an $f$\hy
  bounded fptl-reduct\-ion from an arbitrary QBF~$\Q$ of quantifier rank~$\ell$
  parameterized by treewidth of the \FIXCAM{primal graph} %
of~$\Q$ %
  to an instance~$I$ of~$\mathtt{P}$ parameterized by treewidth as well.
\item {\bf Conclude lower bound:} Then, by applying
  Theorem~\ref{lab:primqbflb} conclude that unless ETH fails, an~arbitrary
  instance~$I$ of problem~$\mathtt{P}$ cannot be solved in
  time~$\tower(\ell,$ $o(k))\cdot \BUGFIX{\poly(\CCard{I})}{2^{o(\CCard{I})}}$
  where~$k=\tw{G(I)}$.
\end{enumerate}
\noindent We can generalize this to \FIXCAM{``non-canonical''} lower bounds. To this end, one aims in Step 2 for a lower bound of the \FIXCAM{form $\tower\left(\ell, \Omega(g^{-1}(k)\right)\cdot \poly(\CCard{I})$} for some function~$g: \Nat \rightarrow \Nat$ such that~$g^{-1}$ is well-defined, and
\FIXCAM{$f(k)\in \bigO\left(\tower(\ell, g^{-1}(k))\right)$}.
Then, in Step 3 one needs to establish an \FIX{$f$-bounded} fpt-reduction
using~$g$ accordingly, in order to conclude in Step 4 that under ETH an
arbitrary instance~$I$ of~$\mathtt{P}$ cannot be solved in
time~\FIX{$\tower(\ell, o(g^{-1}(k)))\cdot \BUGFIX{\poly(\CCard{I})}{2^{o(\CCard{I})}}$},
where~$k=\tw{G(I)}$. %

With the help of this methodology one can show lower bounds~$f(k)$ for certain problems~$\mathtt{P}$, parameterized by tree\-width, by reducing from the canonical $\ell$\hy\QBFSAT problem
parameterized by treewidth~$k$ as well.
Thus, one avoids directly using ETH via tedious reductions from \SAT, which involves problem-tailored constructions of instances of~$\mathtt{P}$ whose treewidth is $\ell$-fold logarithmic in the number of variables or clauses of the given \SAT formula.

\FIXCAM{Note that the methodology naturally extends to pathwidth,
since the result of Theorem~\ref{lab:primqbflb} 
easily extends to ($2$-local) pathwidth by construction of our reduction~$R$, which works for \emph{any} tree decomposition including the special case of path decompositions.}
\FIXCAM{Formal details will be provided in Section~\ref{sec:correctness} on correctness in Corollary~\ref{cor:pw}, followed by further consequences of Theorem~\ref{lab:primqbflb}.}

\smallskip\medskip\noindent\textbf{Showcases.}
\noindent \FIX{Table~\ref{tab:metho} gives a brief overview of selected problems and their respective runtime lower bounds under ETH.}
Then, the proof of Theorem~\ref{thm:countingsucks} below serves as an example for applying the methodology, 
showing that Theorem~\ref{lab:primqbflb} also allows for quite general results on projection.
\FIX{Note that this bounds are tight under ETH.}
{%
	\begin{table}[t]%
		\fontsize{8}{9}\selectfont
		\centering%
		\begin{tabular}[t]{@{\hspace{0em}}\shortversion{p{17.25em}}\longversion{p{20em}}@{\hspace{.1em}}|@{\hspace{.1em}}p{2.2em}@{\hspace{.1em}}|@{\hspace{.05em}}p{3.8em}@{\hspace{.1em}}|@{\hspace{.1em}}p{2.22em}@{\hspace{.1em}}|@{\hspace{.1em}}p{1.5em}@{\hspace{.1em}}|@{\hspace{.1em}}p{1.5em}@{\hspace{.1em}}|@{\hspace{.1em}}p{1em}@{\hspace{0em}}}%
			\multirow{2}{*}{Problem~$\mathtt{P}$} &
			\multicolumn{6}{c}{LB \FIX{$\tower(i,\Omega(k)) \cdot \poly(\CCard{I})$}}\\
			& {\centering$i{=}1$} & \centering$i{=}2$ & \centering$i{=}3$ & \centering$i{=}4$ & \centering$i{=}5$ & $i{=}\ell$ \\\hline
			\problemFont{Min Vertex Cover / Dominating Set~\cite{GareyJohnson79}} & $\triangledown$\cite{ImpagliazzoPaturiZane01} &&&&\\
			\problemFont{Max Indep.\ Set, Hamilt. Cycle~\cite{GareyJohnson79}} & $\triangledown$\cite{ImpagliazzoPaturiZane01}  &&&&\\
			\problemFont{3-Colorability, \SAT, \sharpSAT~\cite{GareyJohnson79,SamerSzeider10b}} & $\triangledown$\cite{ImpagliazzoPaturiZane01} &&&&\\
			\problemFont{$\Circ$, $\PAP$~\cite{LampisMitsouMengel18}} & & $\triangledown_t$\cite{LampisMitsouMengel18} $\blacktriangledown$ &&&\\
			\problemFont{MUS~\cite{LampisMitsouMengel18}} & & $\triangledown_t$\cite{LampisMitsouMengel18} $\blacktriangledown$ &&&\\
			\problemFont{$\Skep_{\preferred}, \Skep_{\semistable}, \Cred_{\semistable}$~\cite{FichteHecherMeier18c}} & & $\triangledown_t$\cite{FichteHecherMeier18c} $\blacktriangledown$ &&&\\
			\problemFont{\PMC~\cite{FichteEtAl18}} & & $\triangledown_t$\cite{FichteEtAl18} $\blacktriangledown$ &&&\\
			\problemFont{\cASP, \sharpASP~\cite{JaklPichlerWoltran09,FichteHecher18c}} & & $\triangledown_t$\cite{FichteHecher18c} $\blacktriangledown$ &&&\\
			\problemFont{$k$-Choosability~\cite{MarxMitsou16}, $k\geq 3$} & & $\triangledown$\cite{MarxMitsou16} $\blacktriangledown$ &&&\\
			\problemFont{$k$-Choosability Deletion~\cite{MarxMitsou16}, $k\geq 4$} & & & $\triangledown$\cite{MarxMitsou16} &&\\
			\problemFont{\PPAP} & & & $\blacktriangledown$ &&\\
			\problemFont{\PASP~\cite{Aziz15a,FichteHecher18c}} & & & $\blacktriangledown$\cite{FichteHecher18c} &&\\
			\problemFont{$\PCC_\SEM$~\cite{FichteHecherMeier18c}, $\SEM \in \{\preferred, \semistable, \stage\}$} & & & $\blacktriangledown$\cite{FichteHecherMeier18c} &&\\
			\problemFont{Candidate World View Check~\cite{EiterShen17}} & & & $\blacktriangledown$ &&\\
			\problemFont{World View Check~\cite{EiterShen17}} & & & & $\blacktriangledown$ &\\
			\problemFont{\#Projected Guesses to World Views} & & & & & $\blacktriangledown$\\
			\problemFont{$\ell\hy\QBFSAT$, $\text{\#}\ell\hy\QBFSAT, \ell \geq 1$~\cite{Chen04a}} & & & & & & $\blacktriangledown$\\
			\problemFont{\PQBF~\cite{DurandHermannKolaitis05}: $\text{\#}\Sigma_{\ell - 1}\SAT, \text{\#}\Pi_{\ell - 1}\SAT, \ell\geq 2$} & & & & & & $\blacktriangledown$\\
		\end{tabular}%
		\caption{\FIX{Runtime lower bounds (under ETH) for selected problems, where $I$ denotes an instance of problem~$\mathtt{P}$ and~$k$ refers to the treewidth (``$\triangledown_t$''), pathwidth (``$\triangledown$''), or $2$-local pathwidth (``$\blacktriangledown$'') %
		 of the corresponding (primal) graph of~$I$.
		Results known from the literature are marked by ``$\triangledown_t$'' and ``$\triangledown$''.
		By ``$\blacktriangledown$'', we indicate that the result holds due to lower bound advancements and the methodology described in this paper. 
		We obtain results for ``$\blacktriangledown$'', with known lower bound (``$\triangledown_t$'', ``$\triangledown$''), by the existing lower bound proof together with our methodology~for~$2$-local pathwidth.
		Bounds are asymptotically tight unter ETH; 
		corresponding upper bounds (e.g.,\cite{SamerSzeider10b,JaklPichlerWoltran09,LampisMitsouMengel18,MarxMitsou16,Chen04a,Bodlaender88,FichteHecherMeier18c,FichteHecher18c,HecherMorakWoltran20}) are out of scope.  
		\FIXCAM{For definitions, we refer to the %
		problem compendium in an online self-archived version.} %
		}
		}
		\shortversion{\vspace{-3em}}
		\label{tab:metho}
	\end{table}%
}

\begin{THM}\label{thm:countingsucks}
Given an open QBF of the form~$\Q=Q_1 V_1. {Q_2} V_2.$ ${Q_3} V_3 \cdots Q_\ell V_\ell. F$
where~$\ell\geq 1$, and $F$ is a 3-CNF formula (if~$Q_\ell=\exists$), or a 3-DNF formula (if~$Q_\ell=\forall$).
Then, under ETH, \PQBF is indeed harder than deciding validity of~$\Q[\iota]$ for any assignment~$\iota: \fvar(\Q) \rightarrow \{0,1\}$.
In particular, assuming ETH, %
\PQBF cannot be solved in time~$\tower(\ell+1,o(k))\cdot \BUGFIX{\poly(\Card{\var(F)})}{2^{o(\Card{\var(F)})}}$, where~$k$ is the pathwidth of the primal graph~$P_\Q$.
\end{THM}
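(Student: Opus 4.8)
The plan is to transfer the pathwidth lower bound one tower level upward by reducing $(\ell+1)$\hy\QBFSAT to \PQBF. The guiding idea is that projecting away the free variables of an open QBF of quantifier rank~$\ell$ simulates an additional outermost quantifier block: the number of satisfying projections is positive exactly when the \emph{existential} closure is valid, and attains its maximum exactly when the \emph{universal} closure is valid.

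Concretely, I would take an arbitrary closed QBF $\Phi = Q_0 V_0.\, \Psi$ of quantifier rank $\ell+1$ satisfying the hypotheses of Theorem~\ref{lab:primqbflb}, where $\Psi \eqdef Q_1 V_1.\, Q_2 V_2 \cdots Q_\ell V_\ell.\, F$. I then reinterpret $\Psi$ as an \emph{open} QBF by declaring the block $V_0$ to be free, so that $\fvar(\Psi) = V_0$. Since the matrix $F$ is left untouched, $\Psi$ has quantifier rank $\ell$, its innermost quantifier equals that of $\Phi$, and hence $F$ is $3$-CNF if $Q_\ell = \exists$ and $3$-DNF if $Q_\ell = \forall$; thus $\Psi$ is a legitimate instance of \PQBF of precisely the form demanded by the statement. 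Crucially, the primal graph is literally unchanged, $P_\Psi = P_F = P_\Phi$, so $\pw{P_\Psi} = \pw{P_\Phi} = k$ and $\Card{\var(F)}$ is preserved, and the transformation is computable in linear time.

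The correctness of the reduction is the semantic equivalence governing the outermost block. By the evaluation rules for QBFs, if $Q_0 = \exists$ then $\Phi$ is valid iff there is some $\iota : V_0 \to \{0,1\}$ with $\Psi[\iota]$ valid, i.e.\ iff the projected model count of $\Psi$ is at least~$1$; if instead $Q_0 = \forall$ then $\Phi$ is valid iff $\Psi[\iota]$ is valid for every such $\iota$, i.e.\ iff that count equals $2^{\Card{V_0}}$. In either case a single threshold comparison against the computed count decides the validity of $\Phi$. Consequently, an algorithm for \PQBF running in time $\tower(\ell+1,o(k)) \cdot 2^{o(\Card{\var(F)})}$ would decide validity of the rank-$(\ell+1)$ formula $\Phi$ of pathwidth $k$ over $\Card{\var(F)}$ variables within the same asymptotic bound, contradicting the pathwidth analogue of Theorem~\ref{lab:primqbflb} (Corollary~\ref{cor:pw}). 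This establishes the stated lower bound. The qualitative ``harder than'' claim then follows immediately: for a fixed $\iota$, deciding validity of $\Psi[\iota]$ is a closed $\ell$\hy\QBFSAT instance of pathwidth at most~$k$, hence solvable in time $\tower(\ell, \bigO(k)) \cdot \poly(\Card{\var(F)})$ by Chen's algorithm~\cite{Chen04a}, one full tower level below the $\tower(\ell+1,\Omega(k))$ barrier just shown for \PQBF.

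The content of the argument lives entirely in the already-established lower bound, so the only point needing care is the alignment of problem families: I must check that Corollary~\ref{cor:pw} indeed supplies hard \emph{closed} instances of quantifier rank $\ell+1$ whose matrix is in the required $3$-CNF/$3$-DNF normal form, so that stripping the outermost block yields an open QBF lying exactly inside the \PQBF instances covered here. Treating both possibilities for $Q_0$ (the ``$\geq 1$'' and ``$=2^{\Card{V_0}}$'' tests) makes the argument independent of which leading quantifier the hard instances happen to carry, so this is bookkeeping rather than a genuine obstacle; the reduction itself incurs neither treewidth nor size blow-up.
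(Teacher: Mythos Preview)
Your argument is correct and follows essentially the same route as the paper: strip the outermost quantifier block $Q_0 V_0$ from a closed rank-$(\ell{+}1)$ QBF to obtain an open rank-$\ell$ instance with identical primal graph, then decide validity of the original formula via a threshold test on the projected count ($\geq 1$ if $Q_0=\exists$, $=2^{\Card{V_0}}$ if $Q_0=\forall$), contradicting Corollary~\ref{cor:pw}. Your treatment is in fact slightly more thorough than the paper's in that you also spell out the ``harder than'' clause via Chen's upper bound.
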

\begin{proof}%
Assume towards a contradiction that under ETH one can solve projected model counting of~$\Q$ in time~$\tower(\ell+1,o(k))\cdot \BUGFIX{\poly(\Card{\var(F)})}{2^{o(\Card{\var(F)})}}$.
\FIXCAM{In the following, we define \FIXCAM{an fptl-reduc\-tion~$r$} from~$\QBFSAT$ to the decision variant~$\PQBF$-at-least-$u$ of~$\PQBF$, where a given open QBF~$\Q$ is a yes instance if and only if the solution (count) to~$\PQBF$ of~$\Q$ is at least~$u$.}

\FIXCAM{In particular, we transform a closed QBF~$\Q'={{Q_0}} V_0. Q_1 V_1.$ ${Q_2} V_2. {Q_3} V_3 \cdots Q_\ell V_\ell. F$, where~$k$ is the pathwidth of~$P_\Q$ to an instance~$\Q=Q_1. V_1. {Q_2} V_2. {Q_3} V_3 \cdots Q_\ell V_\ell. F$ of~$\PQBF$-at-least-$u$, where~$\fvar(\Q) = V_0$, and we set~$u\eqdef 1$ if ${Q_0}=\exists$ and~$u\eqdef 2^{\Card{V_0}}$, otherwise.
The reduction is indeed correct, since~$\Q'$ is a yes-instance of~$\QBFSAT$ if and only if~$\Q=r(\Q')$ is a yes-instance of~$\PQBF$-at-least-$u$.}
Then, one can solve~$\Q'$ of quantifier rank~$\ell+1$ in time~$\tower(\ell+1,o(k))\cdot \BUGFIX{\poly(\Card{\var(F)})}{2^{o(\Card{\var(F)})}}$, which contradicts Theorem~\ref{lab:primqbflb} and Corollary~\ref{cor:pw}.
\end{proof}

\begin{COR}
Assuming ETH, an instance~$Q$ of the problem~$\text{\#}\Sigma_\ell\SAT$ or~$\text{\#}\Pi_\ell\SAT$ cannot be solved in time~$\tower(\ell+1,o(k))\cdot \BUGFIX{\poly(\Card{\var(\matr(\Q))})}{2^{o(\Card{\var(\matr(Q))})}}$, where~$k$ is the pathwidth of~$P_Q$.
\end{COR}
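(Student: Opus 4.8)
The plan is to obtain this corollary as a direct specialization of Theorem~\ref{thm:countingsucks}, by recognizing $\text{\#}\Sigma_\ell\SAT$ and $\text{\#}\Pi_\ell\SAT$ as the two natural restrictions of~$\PQBF$. Recall that an instance of $\text{\#}\Sigma_\ell\SAT$ (resp.\ $\text{\#}\Pi_\ell\SAT$) consists of a formula whose bound variables carry a prefix of $\ell$ alternating quantifier blocks starting with~$\exists$ (resp.~$\forall$), together with a set~$X$ of free (counted) variables, and asks for the number of assignments to~$X$ making the resulting closed QBF valid~\cite{DurandHermannKolaitis05}. Setting $\fvar(\Q)=X$, such an instance is exactly an open QBF $\Q=Q_1 V_1.\cdots Q_\ell V_\ell.F$ in the sense of Theorem~\ref{thm:countingsucks}, with $Q_1=\exists$ in the $\Sigma$-case and $Q_1=\forall$ in the $\Pi$-case; and $\PQBF$ on this instance returns precisely the desired count.

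First I would check that the matrix normal form assumed by Theorem~\ref{thm:countingsucks}---a $3$-CNF when $Q_\ell=\exists$ and a $3$-DNF when $Q_\ell=\forall$---is the form already dictated by the parity of~$\ell$: in a $\Sigma_\ell$-prefix the innermost block is existential iff~$\ell$ is odd, and dually for a $\Pi_\ell$-prefix, so the innermost quantifier always matches the required normal form. Should an input be presented in the opposite form, the standard $3$-normalization~$R'$ of Section~\ref{sec:reduction} brings it into shape while preserving the projected count and increasing pathwidth by at most a constant, hence leaving the asymptotic bound $\tower(\ell+1,o(k))$ intact. With the identification complete, an algorithm deciding $\text{\#}\Sigma_\ell\SAT$ or $\text{\#}\Pi_\ell\SAT$ in time $\tower(\ell+1,o(k))\cdot 2^{o(\Card{\var(\matr(\Q))})}$ with $k=\pw{P_\Q}$ would in particular solve the corresponding $\PQBF$ instances within the same bound, contradicting Theorem~\ref{thm:countingsucks} together with the pathwidth statement of Corollary~\ref{cor:pw}.

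The step I expect to require the most care is purely definitional: reconciling the exact formulation of $\text{\#}\Sigma_\ell\SAT$/$\text{\#}\Pi_\ell\SAT$ with the $\PQBF$ setup. Concretely, I would verify that the counted variables coincide with $\fvar(\Q)$, that alternations are counted identically so that the $+1$ in the tower lands on the intended level (matching the $\text{\#}\Sigma_{\ell-1}\SAT$ entry in the $i{=}\ell$ column of Table~\ref{tab:metho}), and that the primal graph $P_\Q$ and its pathwidth are taken over the same variable set in both formalisms. Once these conventions are aligned no new combinatorial argument is needed; the lower bound is inherited wholesale from Theorem~\ref{thm:countingsucks}.
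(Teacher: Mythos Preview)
Your proposal is correct and matches the paper's approach: the corollary is stated in the paper without proof, as an immediate specialization of Theorem~\ref{thm:countingsucks}, since the compendium explicitly defines $\text{\#}\Sigma_\ell\SAT$ and $\text{\#}\Pi_\ell\SAT$ as the two instances of $\PQBF$ with leading quantifier $\exists$ and $\forall$, respectively, already in the required $3$-CNF/$3$-DNF normal form. Your extra care about the $R'$ normalization and the quantifier-parity check is sound but unnecessary here, as the paper's own definitions align by construction; likewise, note that the proof of Theorem~\ref{thm:countingsucks} reduces from $(\ell{+}1)$-\QBFSAT for \emph{both} leading quantifiers $Q_0\in\{\exists,\forall\}$, so the lower bound lands separately on each of the two subproblems, which is what makes your ``special case'' argument go through for each of $\text{\#}\Sigma_\ell\SAT$ and $\text{\#}\Pi_\ell\SAT$ individually.
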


\noindent Next, we provide further examples (listed in Table~\ref{tab:metho}) of the applicability of our methodology.
\FIXCAM{We provide brief definitions of the problems discussed below in an online self-archived version.} %

\begin{PROP}[cf., \cite{FichteHecher18c}]
Unless \FIXCAM{ETH fails, \PASP} for given ASP program~$\Pi$ and a set~$P\subseteq\var(\prog)$ of projection variables cannot be solved in time~$\tower(3,o(k))\cdot \BUGFIX{\poly(\Card{\Pi})}{2^{o(\CCard{\Pi})}}$, where~$k$ is the pathwidth of the primal graph\footnote{For a definition of the primal graph of a program, we refer to~\cite{JaklPichlerWoltran09}.} of~$\Pi$.
\end{PROP}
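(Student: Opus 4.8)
The plan is to apply the methodology of Section~\ref{sec:metho}, instantiated with tower height $3$. Since \PASP is a counting problem, I would establish a parsimonious fptl-reduction from the canonical projected problem $\text{\#}\Sigma_2\SAT$ (equivalently, projected model counting \PQBF of a rank-$2$ open QBF), whose $\tower(3,\Omega(k))$ lower bound under ETH for pathwidth $k$ is exactly the $\text{\#}\Sigma_2\SAT$ bound obtained from Theorem~\ref{thm:countingsucks} and Corollary~\ref{cor:pw} (taking $\ell=2$, so $\tower(\ell+1,\cdot)=\tower(3,\cdot)$). Thus it suffices to map an instance of $\text{\#}\Sigma_2\SAT$ to an ASP program $\prog$ with projection set $P\subseteq\var(\prog)$ whose number of projected answer sets equals the counted number of satisfying assignments, while blowing up the structural parameter only linearly.

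First I would give the core encoding. Given an open QBF $\exists V_1. \forall V_2. F$ with free variables $V_0$ (we may take $Q_1=\exists$, $Q_2=\forall$, as Theorem~\ref{lab:primqbflb} covers both polarities), I would build a disjunctive program $\prog$ with projection set $P\eqdef V_0$ using the standard guess-and-saturate encoding (cf.~\cite{EiterGottlob95,JaklPichlerWoltran09,FichteHecher18c}): answer sets guess an assignment to $V_0\cup V_1$, and the inner universal check $\forall V_2. F$ is enforced by a saturation gadget that derives a distinguished atom whenever the guessed assignment fails some clause, collapsing all failing candidates into a single saturated model so that a non-saturated answer set exists iff $\forall V_2. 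F$ holds. Projecting onto $P=V_0$ then makes the projected answer sets correspond bijectively to those $V_0$-assignments that extend to a witness of $\exists V_1.\forall V_2. F$, hence the projected answer set count equals the $\text{\#}\Sigma_2\SAT$ count. (Equivalently, one can phrase this through the decision variant \PASP-at-least-$u$ and the $u$-trick from the proof of Theorem~\ref{thm:countingsucks}.)

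The second step is the parameter bookkeeping: I must verify that the reduction is an $f$-bounded fptl-reduction with $f(k)\in\bigO(\tower(3,k))$, i.e.\ that $\pw{P_\prog}$ is \emph{linear} in $\pw{P_\Q}$ and, moreover, $2$-local. Each clause of $F$ and each QBF variable contributes only a constant-size set of rules and fresh atoms, and every atom occurs in a bounded number of rules, which is what keeps the construction $2$-local. Given a path decomposition of $P_\Q$, one obtains a path decomposition of $P_\prog$ by inflating each bag with the constantly many fresh atoms associated with the variables and the (at most three) literals it contains, raising the width only by an additive constant.

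The main obstacle is precisely this width control through saturation: a naive single saturation atom is shared by rules touching every variable of $F$, turning $P_\prog$ into essentially a clique and destroying bounded pathwidth. To avoid this I would not saturate globally but propagate the ``spoiled'' signal locally along the given decomposition of $P_\Q$ — the same guess/check/propagate idea underlying reduction~$R$ — replacing the global atom by per-node markers threaded through shared variables, so that each bag stays of bounded size and each atom remains $2$-local. Once the reduction is established as an $f$-bounded fptl-reduction, Step~4 of the methodology together with Theorem~\ref{lab:primqbflb} and Corollary~\ref{cor:pw} yields that, unless ETH fails, \PASP cannot be solved in time $\tower(3,o(k))\cdot\poly(\Card{\prog})$, as claimed.
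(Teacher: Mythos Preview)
Your approach is correct and lands on the same core encoding as the paper, but you route through a different source problem. The paper's (one-line) proof reduces directly from the decision problem $\forall\exists\forall$-\SAT, whereas you start from the counting problem $\text{\#}\Sigma_2\SAT$ and invoke Theorem~\ref{thm:countingsucks} for its $\tower(3,\Omega(k))$ bound. These are equivalent: the proof of Theorem~\ref{thm:countingsucks} is precisely the $u$-trick that turns $\forall\exists\forall$-\SAT into a $\text{\#}\Sigma_2\SAT$-at-least-$2^{|V_0|}$ instance, so your chain and the paper's direct reduction collapse to the same width-preserving encoding of an $\exists\forall$-QBF into disjunctive ASP, followed by projection onto the outermost block. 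Your route is a bit more modular (it reuses Theorem~\ref{thm:countingsucks}); the paper's is more direct.

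You correctly isolate the real technical obstacle---a single global saturation atom would blow up pathwidth---and your fix of threading a local ``spoiled'' marker along the decomposition is exactly what the cited construction in~\cite{FichteHecher18c} does. One small inaccuracy: in the standard saturation encoding the constraint $\leftarrow \mathsf{not}\ w$ forces $w$ into every answer set, so the (unique) answer set for a good guess \emph{is} the saturated model, which is minimal in the reduct iff $\forall V_2.F$ holds; there is no ``non-saturated answer set.'' This does not affect the projected count or the argument.
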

\begin{proof}[Proof (Idea).]
Fptl-reduction from~$\forall\exists\forall$\hy\SAT to \PASP, both parameterized by the pathwidth of \FIXCAM{its primal graph}.
\end{proof}

\begin{PROP}[cf., \cite{FichteHecherMeier18c}]
Let $\SEM \in \{\preferred, \semistable, \stage\}$ and~$F$ be an argumentation framework. Unless
  \FIXCAM{ETH fails, we} cannot solve the problem~$\PCC_{\SEM}$ in
  time~$\tower(3,o(k)) \cdot \BUGFIX{\poly(\CCard{F})}{2^{o(\CCard{F})}}$ where~$k$ is the
  pathwidth of~$F$ (underlying graph).
\end{PROP}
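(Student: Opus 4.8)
The plan is to instantiate the four-step methodology of Section~\ref{sec:metho} with $\mathtt{P}=\PCC_\SEM$ and to reduce from a quantifier-rank-$3$ QBF instead of directly from \SAT. In Step~1 I would take as graph representation the underlying (directed) graph of the argumentation framework~$F$ and measure its pathwidth~$k$; in Step~2 I fix quantifier rank $\ell=3$ and target the bound $\tower(3,\Omega(k))\cdot\poly(\CCard{F})$, which is compatible with some $f$ satisfying $f(k)\in\bigO(\tower(3,k))$.

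The core of the argument is Step~3: I would construct an $f$-bounded fptl-reduction from an arbitrary rank-$3$ QBF, say $\Q=\forall V_1.\,\exists V_2.\,\forall V_3.\,F'$ parameterized by $\pw{P_\Q}$, to the decision variant $\PCC_\SEM$-at-least-$u$ parameterized by $\pw{F}$, and build it in two layers. The inner two blocks $\exists V_2.\,\forall V_3.\,F'$ form a rank-$2$ ($\Sigma_2^p$) QBF whose validity I encode by the standard credulous-acceptance hardness construction for~$\SEM$ from \cite{FichteHecherMeier18c,DvorakPichlerWoltran12a}, so that a designated argument~$a$ is credulously accepted in~$F$ under~$\SEM$ precisely when the inner QBF is valid; here the maximality inherent to $\preferred$, $\semistable$, and $\stage$ extensions realizes the innermost universal block. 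The outermost block $\forall V_1$ is then captured by \emph{projection}, exactly as in the proof of Theorem~\ref{thm:countingsucks}: I project onto the arguments representing~$V_1$ and set the threshold $u\eqdef 2^{\Card{V_1}}$ (and $u\eqdef 1$ in the dual $\exists V_1$ case). Correctness follows because $\Q$ is valid iff at least~$u$ distinct projections of $\SEM$-extensions credulously accept~$a$.

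The main obstacle is to ensure that this reduction is genuinely an \emph{fptl}-reduction, i.e.\ that $\pw{F}\in\bigO(\pw{P_\Q})$ while the reduction runs in $f(k)\cdot\poly$ time. Classical hardness gadgets are not built to control width, so I would use the width-aware variant of the construction in \cite{FichteHecherMeier18c}: each variable and each clause or term of~$\Q$ is translated into a constant-size local cluster of arguments and attacks, and a given path decomposition of $P_\Q$ of width~$w$ is turned into one of~$F$ of width $\bigO(w)$ by inserting the constantly many gadget arguments into the bags in which the corresponding variable or clause already occurs; the projection arguments for~$V_1$ likewise only augment the bags that already contain~$V_1$. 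The delicate point---and where I would lean most heavily on \cite{FichteHecherMeier18c}---is verifying that the encoding of extension maximality introduces no long-range attacks that would inflate the width beyond a constant factor.

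Finally, for Step~4 I would conclude: an algorithm for $\PCC_\SEM$ running in $\tower(3,o(k))\cdot 2^{o(\CCard{F})}$ composed with this reduction would decide the rank-$3$ QBF~$\Q$ in time $\tower(3,o(\pw{P_\Q}))\cdot 2^{o(\CCard{F'})}$, contradicting Theorem~\ref{lab:primqbflb} and its pathwidth version Corollary~\ref{cor:pw}. Hence, unless ETH fails, no such algorithm for $\PCC_\SEM$ exists.
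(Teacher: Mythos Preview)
Your proposal is correct and follows essentially the same approach as the paper: an fptl-reduction from $\forall\exists\forall$\hy\SAT (parameterized by pathwidth of the primal graph) to $\PCC_{\SEM}$ (parameterized by pathwidth of the underlying graph), invoking the methodology of Section~\ref{sec:metho} together with Corollary~\ref{cor:pw}. The paper's proof is a one-line sketch stating exactly this reduction; your two-layer elaboration (inner $\exists\forall$ via the credulous-acceptance hardness construction, outer $\forall$ via projection with threshold $u=2^{\Card{V_1}}$ as in Theorem~\ref{thm:countingsucks}) simply makes explicit what the paper leaves to the cited reference~\cite{FichteHecherMeier18c}.
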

\begin{proof}[Proof (Idea).]
Fptl-reduction from~$\forall\exists\forall$\hy\SAT parameterized by pathwidth of primal graph, to $\PCC_{SEM}$ (parameterized by pathwidth of the underlying graph).
\end{proof}

\begin{THM}
Unless ETH fails, \PPAP for given instance $(T,H,M)$ and set~$P$ of projection variables cannot
be solved in time~$\tower(3,o(k))\cdot \BUGFIX{\poly(\Card{\var(T)})}{2^{o(\Card{\var(T)})}}$, where~$k$ is the pathwidth of the primal graph of~$T$.
\end{THM}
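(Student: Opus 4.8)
The plan is to apply the general methodology of Section~\ref{sec:metho} with quantifier rank~$\ell=3$, taking the primal graph of the theory~$T$ as the graph representation~$G((T,H,M))$ and aiming for the bound~$\tower(3,\Omega(k))\cdot\poly(\Card{\var(T)})$. Concretely, I would establish an $f$-bounded fptl-reduction (with~$f(k)\in\bigO(\tower(3,k))$ and linear~$g$) from an arbitrary rank-three QBF~$\Q=\forall V_1.\exists V_2.\forall V_3.F$, parameterized by the pathwidth of~$P_\Q$, to the decision variant \PPAP-at-least-$u$ of \PPAP, mirroring the threshold trick of Theorem~\ref{thm:countingsucks}: a yes-instance is mapped to an abduction instance whose projected solution count reaches~$u$ exactly when~$\Q$ is valid.

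First I would encode the matrix~$F$ together with the innermost alternation into the theory~$T$, so that for a fixed choice of hypotheses the consistency of~$T\cup E$ and the entailment~$T\cup E\models M$ jointly simulate one~$\exists\forall$ block over~$V_2$ and~$V_3$: the hypotheses~$H$ are identified with (copies of) the existential variables~$V_2$, while~$M$ forces that every completion over~$V_3$ satisfies~$F$. The projection set~$P$ is then set to the outermost block~$V_1$, so that projected counting over~$P$ ranges over all assignments of~$V_1$; following Theorem~\ref{thm:countingsucks}, the threshold is~$u\eqdef 2^{\Card{V_1}}$ for the leading universal quantifier (and~$u\eqdef 1$ in the existential case). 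Thus~$(T,H,M)$ with~$P$ is a yes-instance of \PPAP-at-least-$u$ iff for every assignment of~$V_1$ there is an explanation over~$V_2$ entailing~$M$ under all~$V_3$, i.e.\ iff~$\Q$ is valid. Standard auxiliary gadgets put~$T$ into the required input format without altering the semantics.

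The decisive point is that~$g$ must be linear, i.e.\ the pathwidth of~$P_{T}$ must stay within a constant factor of the pathwidth of~$P_\Q$. I would therefore reuse the variables of~$\Q$ inside~$T$ and attach only fixed-width selection and copy gadgets per clause/variable (in the spirit of those already used in the reduction~$R$), so that each introduced auxiliary variable interacts with only a bounded number of bag elements; this enlarges every bag of a path decomposition of~$P_{T}$ by at most an additive constant, yielding a genuine fptl-reduction. Composing with Theorem~\ref{lab:primqbflb} and its pathwidth version Corollary~\ref{cor:pw}, any algorithm solving \PPAP in time~$\tower(3,o(k))\cdot\poly(\Card{\var(T)})$ would decide~$\Q$ in time~$\tower(3,o(k'))\cdot\poly(\Card{\var(F)})$ for~$k'=\Theta(k)$, contradicting ETH.

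The main obstacle I expect is twofold: getting the abduction encoding to faithfully realize all three alternations---the projected count over~$V_1$, the explanation search over~$V_2$, and the entailment check over~$V_3$---within the rigid $(T,H,M)$ format, and simultaneously controlling the pathwidth so that the gadgets contribute only an additive constant rather than scaling with the width of~$\Q$. Verifying the threshold argument (that the projected count equals the number of~$V_1$-assignments under which the~$\Sigma_2$ abduction condition holds) and confirming that the normalization to the \PPAP input format preserves both correctness and the linear pathwidth bound are the steps I would scrutinize most carefully.
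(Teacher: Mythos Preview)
Your high-level strategy matches the paper's: an fptl-reduction from $\forall\exists\forall$-\SAT to the decision variant of \PPAP, lifting the known $\exists\forall$-\SAT-to-\PAP reduction of~\cite{LampisMitsouMengel18} and handling the outer $\forall$ via the threshold trick of Theorem~\ref{thm:countingsucks}. The paper's proof idea says precisely this, so you have the intended approach.

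There is, however, a concrete gap in your encoding. By definition of \PPAP, a solution~$S$ is a subset of the hypotheses~$H$, and the projected count is $\Card{\{S\cap P \mid S\in\mathcal{S}(T,H,M)\}}$. If you take $H$ to be (copies of)~$V_2$ and $P$ to be~$V_1$, then $H$ and $P$ are disjoint, so every explanation~$S\subseteq H$ has $S\cap P=\emptyset$ and the projected count collapses to $0$ or~$1$ regardless of~$\Q$; the threshold~$u=2^{\Card{V_1}}$ is then unreachable. For the count to range over $V_1$-assignments, the hypotheses~$H$ must also contain (copies of)~$V_1$, so that an explanation~$S$ encodes a joint choice over $V_1\cup V_2$; projecting onto~$P=V_1$ then indeed counts the distinct $V_1$-assignments for which some $V_2$-witness makes the entailment over~$V_3$ succeed. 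With this fix the remainder of your plan---consistency plus entailment simulating the inner $\exists\forall$, and the bag-local gadget analysis ensuring only an additive pathwidth increase---goes through and coincides with what the paper intends.
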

\begin{proof}[Proof (Idea).]
Fptl-reduction from~$\forall\exists\forall$\hy\SAT to the decision variant of \PPAP, 
either from scratch or by lifting
 existing reduction~\cite{LampisMitsouMengel18} from~$\exists\forall$\hy\SAT.
\end{proof}

\begin{THM}
Given an epistemic program~$\prog$ and a variable~$a\in\var(\prog)$. 
Then, unless ETH fails, deciding the problem \problemFont{Candidate World View Check}
cannot be solved in time $\tower(3,o(k))\cdot \BUGFIX{\poly(\Card{\prog})}{2^{o(\Card{\prog})}}$, and
the problem \problemFont{World View Check} for~$a$ cannot be solved in time~$\tower(4,o(k))\cdot \BUGFIX{\poly(\Card{\prog})}{2^{o(\Card{\prog})}}$, where $k$ is the pathwidth of the primal graph of~$\prog$.
\end{THM}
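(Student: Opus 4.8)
The plan is to instantiate the four-step methodology of Section~\ref{sec:metho} once for each of the two problems, fixing quantifier rank $\ell=3$ for \problemFont{Candidate World View Check} and $\ell=4$ for \problemFont{World View Check}, matching the respective entries marked ``$\blacktriangledown$'' in Table~\ref{tab:metho}. In Step~1 the graph representation is the primal graph $P_\prog$ of the epistemic program. In Step~2, since $f(k)\in\bigO(\tower(\ell,k))$ for the relevant bound, I aim for the lower bounds $\tower(3,\Omega(k))\cdot\poly(\CCard{\prog})$ and $\tower(4,\Omega(k))\cdot\poly(\CCard{\prog})$, respectively.

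The heart of the argument is Step~3: constructing $f$-bounded fptl-reductions that preserve pathwidth linearly. For \problemFont{Candidate World View Check} I would reduce from a quantifier-rank-$3$ QBF (e.g.\ $\forall\exists\forall$\hy\SAT), and for \problemFont{World View Check} from a quantifier-rank-$4$ QBF (e.g.\ $\exists\forall\exists\forall$\hy\SAT). The encoding exploits the layered structure of epistemic logic programs: the stable-model semantics of the underlying program contributes a guess-and-minimize block accounting for one $\Sigma_2$-style alternation, the epistemic modal operators range over the collection of answer sets forming a (candidate) world view and account for one further alternation, and---only in the \problemFont{World View Check} case---the additional test of whether the designated variable $a$ holds across the resulting world views contributes the final alternation, which is exactly what lifts the tower height from $3$ to $4$. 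Concretely, I would build on existing QBF-to-epistemic-program correspondences (cf.\ \cite{EiterShen17}) and on the width-preserving ASP constructions already used for \PASP and the argumentation checks above, encoding each quantifier block of the input QBF by a bounded-width gadget whose variables interact only locally along a path decomposition.

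Given such reductions, Step~4 is immediate. Applying Theorem~\ref{lab:primqbflb} together with its pathwidth counterpart Corollary~\ref{cor:pw} yields that, unless ETH fails, \problemFont{Candidate World View Check} cannot be solved in time $\tower(3,o(k))\cdot\BUGFIX{\poly(\Card{\prog})}{2^{o(\CCard{\prog})}}$ and \problemFont{World View Check} for $a$ cannot be solved in time $\tower(4,o(k))\cdot\BUGFIX{\poly(\Card{\prog})}{2^{o(\CCard{\prog})}}$, where $k=\pw{P_\prog}$.

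The step I expect to be the main obstacle is the width preservation in Step~3. Epistemic operators couple all answer sets of the program globally---the epistemic reduct that defines a world view depends on what is known or believed throughout the entire candidate---so a naive encoding introduces long-range dependencies that inflate pathwidth far beyond linear in $k$. The delicate task is therefore to \emph{localize} this coupling, propagating the relevant modal information along the path decomposition through bounded-width gadgets (in the same spirit as the propagate part $\mathcal{P}$ of reduction~$R$ propagates pointer information between neighbouring bags) rather than through clauses that directly connect distant variables. Simultaneously certifying the semantic correctness of the epistemic encoding and the linear pathwidth bound is where the technical effort concentrates.
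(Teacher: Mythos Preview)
Your proposal follows the paper's approach: instantiate the methodology with quantifier rank~$3$ and~$4$ respectively and invoke Theorem~\ref{lab:primqbflb}/Corollary~\ref{cor:pw}. The paper's proof idea is even terser than yours and differs only in two minor respects. First, for \problemFont{Candidate World View Check} the paper reduces from $\exists\forall\exists$\hy\SAT rather than $\forall\exists\forall$\hy\SAT, matching the $\Sigma_3^P$-hardness of the problem directly; your choice would require an extra negation step via Lemma~\ref{lem:inverse}, which is harmless but unnecessary. Second, and more importantly, the paper sidesteps your ``main obstacle'' entirely: it observes that the \emph{existing} hardness reductions of~\cite{EiterShen17} that establish $\Sigma_3^P$- and $\Sigma_4^P$-hardness already happen to be fptl-reductions, i.e., they only linearly increase pathwidth. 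So there is no need to design new width-preserving gadgets or to localize epistemic coupling along a path decomposition---one only has to \emph{verify} that the known reductions have the required structural property, which is a much lighter task than constructing fresh encodings from scratch. Your intuition about where the difficulty would lie if one had to build such reductions is reasonable, but the paper's point is precisely that one does not.
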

\begin{proof}[Proof (Idea).]
Fptl-reduction from~$\exists\forall\exists$\hy\SAT, or~$\exists\forall\exists\forall$\hy\SAT (parameterized by pathwidth of primal graph), respectively.
Actually the reductions from the literature for showing~$\Sigma_3^P$-hardness and~$\Sigma_4^P$-hardness~\cite{EiterShen17} form fptl-reductions.
\end{proof}

\noindent \FIX{Note that our work focuses on lower bounds.
However, \FIXCAM{the corresponding upper bounds} for treewidth can be established by reductions to $\ell$\hy\QBFSAT to obtain asymptotically tight results under ETH, see Table \ref{tab:metho}.}

\section{Correctness, Compression and Runtime}\label{sec:correctness}

In the following, we show correctness and properties of our reduction presented in Section~\ref{sec:reduction}. Therefore, we assume a given QBF~$\Q\eqdef  Q_1 V_1. Q_2 V_2.\cdots \forall V_\ell. D$, where~$D$ is in~3-DNF.
Further, let~$\mathcal{T}=(T,\chi,\delta)$ such that~$T=(N,\cdot,r)$ be a \FIXCAM{labeled almost $c$-nice tree decomposition of primal graph~$P_\Q$ of width~$k$}.
The reduced instance is addressed by~$R(\Q)$, where reduction~$R$ is defined as in Section~\ref{sec:reduction}.
The resulting \FIX{QBF of quantifier rank $\ell+1$} is referred to by~$R'(R(\Q))$ and its matrix in 3-CNF is given by~$C=\matr(R'(R(\Q)))$.

To simplify presentation, we introduce the following definitions.
Let~$d\in D$ be a term, $t\in N$ be a node of the tree decomposition,
and~$1\leq j\leq 3$, then
$\bterm{d}{t}\eqdef \bigcup_{1\leq j\leq 3}[\bvali{\ato{d}{j}}{t}{j}
\cup \{\bvv{d}{t}{j}\}]$.  Further, given an
assignment~$\alpha: \var(D) \rightarrow \{0,1\}$, we define a
function~$\local{\cdot}$ that produces new assignments to copies of the
variables, therefore let
$\local{\alpha} \eqdef \{x_t \mapsto \alpha(x)\mid x\in \dom(\alpha),
t\in\nint{\mathcal{T}}{x}\}$ denote the \emph{matching} assignment of
the corresponding guess variables.
\FIXCAM{
  Further, for a set~$\mathcal{S}$ of literals and an assignment~$\iota$,
we say assignment~$\iota$ \emph{respects~$\mathcal{S}$}, if~$(\bigwedge_{l\in\mathcal{S}}l)[\iota]$ evaluates to true. %
}

\smallskip\noindent\textbf{Correctness.}
\FIXCAM{Next}, we establish correctness of reduction~$R$.
\begin{LEM}\label{lem:slavesandmasterequiv}
  \FIXCAM{Let $\kappa$ be any
  assignment of at least two variables $x_t, x_{t'}\in\lint{\mathcal{T}}{\var(D)}$
  such that $\kappa(x_t)\neq \kappa(x_{t'})$ for nodes $t, t' \in \nint{\TTT}{x}$
  with~$x\in\var(D)$. %
  Then, $R(\Q)[\kappa]$ is invalid.}
\end{LEM}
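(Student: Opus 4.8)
The plan is to exploit that in $R(\Q)$ the pointer bits $\lbv{\mathcal{T}}$ lie in a \emph{universal} block that strictly precedes the final \emph{existential} block carrying the selector variables $\lbvv{\mathcal{T}}=\{v_t\}$. Substituting $\kappa$ turns $x_t$ and $x_{t'}$ into constants, so to establish invalidity I would exhibit a single choice of the (untouched) universal pointer bits for which the existential player cannot satisfy the matrix $C$. Note first that $t\neq t'$, since otherwise $x_t=x_{t'}$ and $\kappa(x_t)\neq\kappa(x_{t'})$ is impossible.

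First I would invoke the connectedness condition (ii) of tree decompositions. As $t,t'\in\nint{\TTT}{x}$ we have $x\in\chi(t)\cap\chi(t')$, so along the nontrivial path $t=s_0,s_1,\ldots,s_m=t'$ in $T$ every node $s_i$ satisfies $x\in\chi(s_i)$; in particular each encoding $\bval{x}{s_i}$ is defined. I then let the universal player choose the assignment to $\lbv{\mathcal{T}}$ so that it respects each $\bval{x}{s_i}$, i.e.\ the pointer at every path node selects $x$, with all remaining universal bits set arbitrarily.

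Next I would read off the forced selector values. At the endpoints, the guess clauses \eqref{red:guessatom}--\eqref{red:guessnegatom} for $x_t$ and $x_{t'}$ have their pointer premises $\bval{x}{t}$, $\bval{x}{t'}$ satisfied, so after substituting the constants $\kappa(x_t),\kappa(x_{t'})$ they force $v_t=\kappa(x_t)$ and $v_{t'}=\kappa(x_{t'})$; for any \emph{other} bag element $y$ at these nodes the corresponding guess clauses are vacuous, since the pointer does not respect $\bval{y}{\cdot}$. Along each edge $\{s_i,s_{i+1}\}$ the two propagate clauses \eqref{red:propatom}--\eqref{red:negpropatom}, whose pointer premises both hold (and for which $x\in\chi(s_i)\cap\chi(s_{i+1})$), together yield the biconditional $v_{s_i}\leftrightarrow v_{s_{i+1}}$; transitivity along the path gives $v_t=v_{t'}$. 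Combining these, $\kappa(x_t)=v_t=v_{t'}=\kappa(x_{t'})$, contradicting $\kappa(x_t)\neq\kappa(x_{t'})$. Hence no assignment to the final existential block satisfies $C$ under this universal choice, so $R(\Q)[\kappa]$ is invalid.

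I expect the combinatorial core above to be routine; the one point demanding care is the quantifier bookkeeping. I must confirm that fixing $\kappa$ (whose domain lies in $\lint{\mathcal{T}}{\var(D)}$ and therefore does not touch $\lbv{\mathcal{T}}$) leaves the pointer bits in a universal block that genuinely precedes the $v$-block, so that the universal ``expose'' move can answer the now-constant copies before the existential player responds. This is exactly the placement of $\lbv{\mathcal{T}}$ in block $\ell$ against $\lbvv{\mathcal{T}}$ in block $\ell+1$ in the definition of $R(\Q)$, and verifying that this placement survives the substitution is the main obstacle.
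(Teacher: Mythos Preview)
Your proposal is correct and follows essentially the same approach as the paper: fix the universally quantified pointer bits in $\lbv{\mathcal{T}}$ to select $x$ at the relevant nodes, then combine the guess clauses (\ref{red:guessatom})--(\ref{red:guessnegatom}) at the endpoints with the propagate clauses (\ref{red:propatom})--(\ref{red:negpropatom}) to force $v_t=v_{t'}$, contradicting $\kappa(x_t)\neq\kappa(x_{t'})$. The only cosmetic difference is that the paper sets the pointers at \emph{every} node of $\mathcal{T}[x]$ and argues downward from its root, whereas you restrict attention to the $t$--$t'$ path; both yield the same contradiction, and your remark on the placement of $\lbv{\mathcal{T}}$ versus $\lbvv{\mathcal{T}}$ matches the paper's terse ``since $\lbv{\mathcal{T}}$ is universally quantified''.
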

\begin{proof}
  \FIX{We construct an assignment
  $\kappa'$ %
  which extends~$\kappa$ and sets certain
  variables in~$\lbv{\mathcal{T}}$. Then, we show that %
  $R(\Q)[\kappa']$ is invalid, which suffices 
  since~$\lbv{\mathcal{T}}$ is universally quantified.}
  The construction of~$\kappa'$ is as follows: for every~$b\in \bval{x}{t''}$
  and every~$t''\in N$ where~$x\in\chi(t'')$, we set~$\kappa'(b)\eqdef 1$,
  if~$b$ is a variable; and~$\kappa'(b)\eqdef 0$, otherwise.
  Assume towards contradiction that there is an
  assignment~$\kappa'': \lint{\mathcal{T}}{\var(D)} \cup
  \lbv{\mathcal{T}} \cup \lbvv{\mathcal{T}}\rightarrow \{0,1\}$ that
  extends~$\kappa'$ such that~$R(\Q)[\kappa'']$ is valid.  In
  particular, the assignment~$\kappa''$ sets variables
  in~$\lbvv{\mathcal{T}}$ such that every clause \FIXCAM{in the assigned variables of}~$R(\Q)$, in
  particular, parts $\mathcal{G}$ and $\mathcal{P}$, is valid under
  the assignment~$\kappa''$.
  By Condition~(ii) of the definition of a tree decomposition
  (connectedness), $\mathcal{T}[x]$ induces a connected tree as well.
  In consequence, irrelevant of how $\kappa''$ assigns
  variable~$v_{r'}$ for the root node~$r'$ of~$\mathcal{T}[x]$, the
  clauses in Formulas~(\ref{red:propatom})
  \FIXCAM{and~(\ref{red:negpropatom}) enforce that exactly the same truth
  value~$v=\kappa''(v_{n})$ has to be set for any
  node~$n\in\nint{\mathcal{T}}{x}$.
  Then, $\kappa''(v_{t})=v$ and~$\kappa''(v_{t'})=v$ holds. By
  part~$\mathcal{G}$ of our reduction, more precisely,
  Formulas~(\ref{red:guessatom}) and~(\ref{red:guessnegatom}), we
  conclude that both~$\kappa''(x_t)=v$ and~$\kappa''(x_{t'})=v$, which
  contradicts that~$\kappa(x_t) \neq \kappa(x_{t'})$.}
\end{proof}
\begin{LEM}\label{lem:satbyfindingterm}
Given an assignment~$\iota: \lint{\mathcal{T}}{\var(D)} \cup \lbv{} \cup \lbvv{} \rightarrow \{0,1\}$.
\FIXCAM{Then, for any assignment~$\kappa: \lint{\mathcal{T}}{\var(D)} \cup \lbv{} \cup \lbvv{} \cup \lbvd{} \cup \lbvvd{} \rightarrow \{0,1\}$ that extends~$\iota$,
$R(\Q)[\kappa]$ is invalid, if (a) there is no term~$d_i\in D$ with~$t=\lnode{d_i}$ such that~$\kappa$ respects~$\bterm{d_i}{t}$.}
Now assume that there is~$d_i\in D$ \FIXCAM{with~$\kappa$} respecting~$\bterm{d_i}{\lnode{d_i}}$, then $R(\Q)[\kappa]$ is also invalid, if (b) $\kappa(v_{t,j}) \neq \kappa(x_{t'})$, where $x=\ato{d_i}{j}$ for \FIXCAM{some}~$1\leq j \leq 3$ and~$t'\in\nint{\mathcal{T}}{x}$.
\end{LEM}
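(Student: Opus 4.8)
The plan is to note that $\kappa$ fixes every variable of $R(\Q)$ except those in $\lsat$, so $R(\Q)[\kappa]$ collapses to $\exists\,\lsat.\,C[\kappa]$. The clauses of $\mathcal{G}$ and $\mathcal{P}$ mention no $\lsat$-variable while those of $\mathcal{CK}$ do, so $R(\Q)[\kappa]$ is valid if and only if (i)~$\kappa$ satisfies every clause of $\mathcal{G}\cup\mathcal{P}$ and (ii)~some assignment to $\lsat$ satisfies $\mathcal{CK}[\kappa]$. Thus, under hypothesis~(a) I would refute~(ii), and under hypothesis~(b) I would refute~(i).

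For part~(a), I would first show that any $\lsat$-assignment satisfying $\mathcal{CK}[\kappa]$ must set $sat_t=1$ for at least one node~$t$: the unit clause~(\ref{chk_satn}) forces $sat_{\leq r}=1$, and following the implications~(\ref{chk_leqtright}) from the root one descends along a branch until, at the latest, a leaf (whose disjunction over children is empty) forces some $sat_t=1$. By~(\ref{chk_negsatt}) this can only occur at a node $t=\lnode{d}$, where~(\ref{chk_satcb}) and~(\ref{chk_satcbv}) read $sat_{\lnode{d}}\longrightarrow l$ for every literal $l\in\bterm{d}{\lnode{d}}$. Since $\kappa$ assigns all variables in these literals, failing to respect $\bterm{d}{\lnode{d}}$ falsifies some $l$ and forces $sat_{\lnode{d}}=0$. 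Hypothesis~(a) makes this hold for every term, so all $sat$-variables are forced to $0$, contradicting the existence of a true one; hence $\mathcal{CK}[\kappa]$ is unsatisfiable and $R(\Q)[\kappa]$ invalid.

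For part~(b), write $t=\lnode{d_i}$ and $x=\ato{d_i}{j}$, and assume towards a contradiction that $\kappa$ satisfies $\mathcal{G}\cup\mathcal{P}$; I would then derive $\kappa(v_{t,j})=\kappa(x_{t'})$, which contradicts the hypothesis and hence shows $\kappa$ violates some clause of $\mathcal{G}\cup\mathcal{P}$, making $C[\kappa]$ false outright. The argument has three links. First, the biconditionals~(\ref{red:propvright}) together with connectedness of $T$ make $v_{\cdot,j}$ constant, so $\kappa(v_{t,j})=\kappa(v_{t',j})$. Second, since $\kappa$ respects $\bterm{d_i}{t}$ it respects $\bvali{x}{t}{j}$, and I would carry this selection of $x$ from $t$ to $t'$ via the bit-propagation clauses~(\ref{red:propatomj}) along the connected subtree $\mathcal{T}[x]$ (Condition~(ii) of a tree decomposition), obtaining that $\kappa$ respects $\bvali{x}{t'}{j}$. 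Third, as $t'\in\nint{\mathcal{T}}{x}$ the guess variable $x_{t'}$ exists, so clauses~(\ref{red:guessatomj}) and~(\ref{red:guessnegatomj}) fire and force $\kappa(v_{t',j})=\kappa(x_{t'})$. Composing the three yields $\kappa(v_{t,j})=\kappa(x_{t'})$, the required contradiction.

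The main obstacle is the second link of part~(b): transporting the pointer's selection of $x$ from the checking node $\lnode{d_i}$ to the introduce node $t'$, since the bit-propagation clauses are anchored at $\lnode{d}$ and relate only a node to its children. I expect to handle this exactly as in Lemma~\ref{lem:slavesandmasterequiv}, by exploiting that $\mathcal{T}[x]$ is a connected subtree and iterating the propagation along the path joining $\lnode{d_i}$ and $t'$, invoking Lemma~\ref{lem:slavesandmasterequiv} if needed to fix the common value $\kappa(x_{t'})$ shared by all copies of $x$. The surrounding bookkeeping --- that the literals of $\bterm{d}{t}$ are exactly the consequents of~(\ref{chk_satcb}) and~(\ref{chk_satcbv}), and that $v_{t,j}\in\lbvvd{}\subseteq\dom(\kappa)$ --- is routine.
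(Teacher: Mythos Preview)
Your proposal is correct and follows essentially the same approach as the paper: for part~(a) you trace the chain $(\ref{chk_satn})\to(\ref{chk_leqtright})\to(\ref{chk_negsatt})\to(\ref{chk_satcb}),(\ref{chk_satcbv})$ exactly as the paper does, and for part~(b) your three links (constancy of $v_{\cdot,j}$ via~(\ref{red:propvright}), carrying the bit-selection via~(\ref{red:propatomj}) along $\mathcal{T}[x]$, and firing~(\ref{red:guessatomj}),(\ref{red:guessnegatomj}) at the introduce node) are precisely the paper's argument, merely reordered. If anything you are more careful than the paper in separating what $\kappa$ fixes from the remaining existential choice over $\lsat$, and in explicitly flagging the bit-propagation step as the delicate one; the paper handles that step with the same appeal to ``(\ref{red:propatomj}) and connectedness of $\mathcal{T}[x]$''. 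Your fallback mention of Lemma~\ref{lem:slavesandmasterequiv} is unnecessary here---the paper's proof of part~(b) does not invoke it---but it does no harm.
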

\begin{proof}
Assume towards a contradiction that (a) is not the case, i.e., there is no~$d_i\in D$ such that~$\kappa$ respects~$\bterm{d_i}{t}$ for~$t=\lnode{d_i}$ and still~$R(\Q)[\kappa]$ is valid. Observe that by~$R(\Q)$, in particular, by construction of the check part~$\mathcal{CK}$ of~$R$, $\kappa(sat_{\leq r})=1$ by~(\ref{chk_satn}) and therefore~$\kappa(sat_t)=1$ by~(\ref{chk_leqtright}) for at least one node~$t\in N$ has to be set in~$\kappa$. 
This, however, implies by~(\ref{chk_negsatt}) that~$t=\lnode{d_i}$ for some~$d_i\in D$.
In consequence, by construction of~(\ref{chk_satcb}) and~(\ref{chk_satcbv}), $\kappa$ respects~$\bterm{d_i}{t}$, where~$t{=}$ $\lnode{d_i}$, contradicting the assumption.

Towards contradicting~(b), assume that there is~$d_i\in D$ with~$t=\lnode{d_i}$ and~$x=\ato{d_i}{j}$ as well as~$t'\in\nint{\mathcal{T}}{x}$ such that~$\kappa(v_{t,j}) \neq \kappa(x_{t'})$ and still~$R(\Q)[\kappa]$ is valid. 
Observe that for any two nodes~$t'',t'''\in\mathcal{T}[x]$, $\kappa$ respects~$\bvali{x}{t''}{j}$ and~$\bvali{x}{t'''}{j}$ by~(\ref{red:propatomj}) and connectedness of~$\mathcal{T}[x]$.
Further, for any $t'',t'''\in\mathcal{T}[x]$, \FIXCAM{$\kappa(v_{t'',j})=\kappa(v_{t''',j})$} by~(\ref{red:propvright}).
Then, since~(\ref{red:guessatomj}) and~(\ref{red:guessnegatomj}) ensure that~\FIXCAM{$\kappa(x_{t'}) = \kappa(v_{t',j})$},
ultimately by connectedness of~$\mathcal{T}[x]$,%
~$\kappa(v_{t,j})=\kappa(x_{t'})$ holds. %
\end{proof}
\begin{THM}[Correctness]%
\label{lem:correctness}
  Let~$\Q$ be a QBF of the
  form~$Q= Q_1 V_1. {{Q_2}} V_2. \cdots \forall V_\ell. D$ where $D$
  is in DNF. Then, \FIXCAM{for any 
  assignment~$\alpha: {\fvar(\Q)} \rightarrow \{0,1\}$, we have $\Q[\alpha]$ is valid if and only if
  $R(\Q)[\alpha']$ is valid, where assignment: %
$\fvar(R(\Q)) \rightarrow \{0,1\}$ is such that
$\alpha' = \local{\alpha}$}.
\end{THM}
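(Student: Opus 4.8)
The plan is to prove the biconditional by induction on the quantifier rank~$\ell$, peeling off the outermost block~$Q_1 V_1$ and exploiting that the statement is phrased for an \emph{arbitrary} free-variable assignment~$\alpha$. First I would record the structural fact that makes the induction go through, namely that $R$ is compositional in the prefix: writing $Q_{\mathrm{rest}} \eqdef Q_2 V_2 \cdots \forall V_\ell. D$ for the QBF in which $V_1$ has become free, both $R(\Q)$ and $R(Q_{\mathrm{rest}})$ are built over the same tree decomposition and the same copy variables $\lint{\mathcal{T}}{\var(D)}$, the same bookkeeping variables $\lbv{\mathcal{T}}, \lbvv{\mathcal{T}}, \lbvd{\mathcal{T}}, \lbvvd{\mathcal{T}}, \lsat$, and the \emph{identical} matrix~$C$ --- none of the clauses~(\ref{red:guessatom})--(\ref{red:propvright}) refer to the master status of a copy --- so the two reductions differ only in which block the copies of $V_1$ occupy. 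Since copies of free variables are never quantified, $\alpha' = \local{\alpha}$ is well-defined on~$\fvar(R(\Q))$, and after substituting it the residual game still has the pointers $\lbv{\mathcal{T}}$ in its innermost universal block and the value and check/\lsat{} variables in its innermost existential block.

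For the base case $\ell = 1$, so $\Q = \forall V_1. D$, I would analyse the innermost game directly through the two auxiliary lemmas. Fix a universal choice of the master copies (equivalently an assignment $\beta\colon V_1 \to \{0,1\}$) and of the pointers. Lemma~\ref{lem:slavesandmasterequiv} shows that any winning existential response must assign all copies of each variable equally, inducing a single assignment on $\var(D)$; the guess clauses~(\ref{red:guessatom})--(\ref{red:guessnegatom}) and propagate clauses~(\ref{red:propatom})--(\ref{red:negpropatom}) are then satisfiable for \emph{every} pointer choice by setting each $v_t$ to the value of the pointed-to copy. For the check part, Lemma~\ref{lem:satbyfindingterm}(a) forces some term~$d$ to be respected and part~(b) forces its pointer values $v_{t,j}$ to match the copies, so that a respected term is exactly a term \emph{satisfied} by~$\beta$; conversely, whenever $D[\alpha\cup\beta]$ holds one builds the witnessing assignment of $\lbvd{\mathcal{T}}, \lbvvd{\mathcal{T}}, \lsat$ at $\lnode{d}$ and propagates $sat$ to the root via~(\ref{chk_leqtright}). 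Thus the inner game succeeds for a given $\beta$ iff $D[\alpha\cup\beta]$ is true, and quantifying over all $\beta$ yields the claim for $\ell=1$.

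For the inductive step I would peel off $Q_1$ and split on its polarity. If $Q_1 = \exists$, all copies of $V_1$ sit in the first (existential) block, so $R(\Q)$ is $R(Q_{\mathrm{rest}})$ preceded by an existential quantification of these copies; by Lemma~\ref{lem:slavesandmasterequiv} an inconsistent choice can never win, hence a winning choice may be taken of the form $\local{\beta}$ for some $\beta\colon V_1 \to \{0,1\}$, and the induction hypothesis applied to $R(Q_{\mathrm{rest}})[\local{\alpha\cup\beta}]$ closes the case. If $Q_1 = \forall$ (so $Q_2 = \exists$), the master copies of $V_1$ form the first universal block while the non-master copies are shifted into the second, existential block. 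After the universal player fixes the masters to some~$\beta$, the pointers lie in a strictly later (innermost universal) block, so by Lemma~\ref{lem:slavesandmasterequiv} they force every non-master copy to equal its master, i.e.\ to~$\beta$; the residual game then coincides with $R(Q_{\mathrm{rest}})[\local{\alpha\cup\beta}]$, and the induction hypothesis applied uniformly over all~$\beta$ matches $\forall V_1. Q_{\mathrm{rest}}[\alpha]$.

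The main obstacle is the joint treatment of the innermost universal pointer block and the master/non-master mechanism. One must verify that the pointers, although placed at the innermost universal level, still force the \emph{earlier-chosen} existential non-master copies to agree with the universally-chosen masters --- which requires checking that the ``detecting'' pointer assignment underlying Lemma~\ref{lem:slavesandmasterequiv} is always available to the universal player once the relevant copies are decided --- and that this same block never obstructs the check part, which is satisfiable independently of $\lbv{\mathcal{T}}$ because it runs on the disjoint pointer variables $\lbvd{\mathcal{T}}$. Pinning down this quantifier-ordering argument, together with the compositionality claim in the $\forall$-case, is where essentially all the care is needed; the remaining steps are bookkeeping over the clause groups $\mathcal{G}, \mathcal{CK}, \mathcal{P}$.
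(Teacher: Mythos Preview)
Your plan is correct and follows essentially the same route as the paper: induction on~$\ell$, base case~$\ell=1$ handled directly through Lemmas~\ref{lem:slavesandmasterequiv} and~\ref{lem:satbyfindingterm}, and the step by case distinction on~$Q_1$ with Lemma~\ref{lem:slavesandmasterequiv} driving the $\forall$-case. Your compositionality observation --- that the matrix~$C$ is oblivious to the master/non-master status and that~$R(\Q)$ and~$R(Q_{\mathrm{rest}})$ differ only in the prefix --- is exactly what the paper uses when it introduces the ad-hoc formula~$R''(\Q)$ (obtained from~$R(\Q)$ by freeing the non-master copies of~$V_1$); you have simply stated this more uniformly. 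The obstacle you single out, namely that the detecting pointer assignment of Lemma~\ref{lem:slavesandmasterequiv} must be available to the universal player even though non-master copies are chosen \emph{after} the pointers in the innermost blocks, is precisely the point the paper handles with the phrase ``$\kappa$ is the only remaining option'' and is indeed where the care lies.
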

\begin{proof}
Let $\TTT=(T, \chi,\delta)$ be the labeled tree decomposition that is
  computed when constructing~$R$,  where~$T=(N,A,r)$.
  We proceed by induction on the quantifier rank~$\ell$.

  \noindent \emph{Base case}. Assume~$\ell=1$.\nopagebreak

  ``$\Longrightarrow$'': Let $\alpha$ be an assignment to the free
  variables of~$\Q$ for which $\Q[\alpha]$ is valid. Further, let
  $\alpha' \eqdef \local{\alpha}$. We show that $R(\Q)[\alpha']$ is
  valid as well. Let therefore~$\iota$ be an arbitrarily chosen
  assignment to the variables in $V_1$. Since $\ell = 1$, we have
  $Q_1 = \forall$.
  We define an
  assignment~$\kappa: \lint{\mathcal{T}}{\forall, V_1, \emptyset}
  \rightarrow \{0,1\}$ such that $\kappa(x_t) \eqdef \iota(x)$ for
  every~$x_t\in \lint{\mathcal{T}}{\forall, V_1, \emptyset}$ with
  $t \in N$ and $x\in \var(D)$.
  Next, we define an
  assignment~$\kappa': \lint{\mathcal{T}}{\forall, V_1, \emptyset}
  \cup \lint{\mathcal{T}}{\exists,$ $\emptyset, V_1} \rightarrow
  \{0,1\}$ that extends~$\kappa$ and
  sets~$\kappa'(x_{t'}) \eqdef \iota(x)$ for every
  $x_{t'}\in\lint{\mathcal{T}}{\exists, \emptyset, V_1}$ with
  $t' \in N$.
  \FIXCAM{Assignment $\kappa'$ has by construction the same
  truth value for each of the copies~$x_t$ of~$x$, which is needed %
  for $R(\Q)[\alpha'\cup\kappa]$ to be valid
in order to not contradict Lemma~\ref{lem:slavesandmasterequiv}.}

  Then, we construct an assignment~$\kappa''$, which extends~$\kappa'$
  by the variables in~$\lbvd{\mathcal{T}}$, $\lbvvd{\mathcal{T}}$, and
  $\lsat$.
  By construction of~$\iota$ and since $\Q[\alpha]$ is valid,
  $Q[\alpha\cup \iota]$ is valid, which is the same as
  $D[\alpha\cup \iota]$ is valid. In consequence, as $D$ is in DNF,
  there is at least one term~$d \in D$ such
  that~$d [\alpha\cup \iota]$ is valid.
  Depending on the term~$d$, we assign the variables
  in~$\lbvd{\mathcal{T}}$, $\lbvvd{\mathcal{T}}$, and $\lsat$ with
  assignment~$\kappa''$.
  By Definition~\ref{def:labeledTD}, there is a unique
  node~$t=\lnode{d}$ in the labeled tree decomposition for the
  term~$d$.
  Then, we
  set~$\kappa''(sat_{\leq t}) \eqdef \kappa''(sat_t) \eqdef 1$. For
  every ancestor~$t'$ of $t \in N$, we
  assign~$\kappa''(sat_{\leq t'})\eqdef 1$.
  For every node~$s \in N$ that is not an ancestor of~$t$, we
  set~$\kappa''(sat_{\leq s})\eqdef 0$. Finally, for every node~$u$,
  where~$u \neq t$, we set $\kappa''(sat_{u})\eqdef 0$.
  For every node~$t \in N$ and~$1\leq j\leq 3$
  with~$\ato{d}{j}\not\in\chi(t)$, we set~$\kappa''$ such that it
  respects~$\bvali{nil}{t}{j}\cup \{\bvv{d}{t}{j}\}$.  Finally, for
  every node~$t$ and~$1\leq j\leq 3$ with~$\ato{d}{j}\in\chi(t)$, we
  set~$\kappa''$ such that it \FIXCAM{respects~$\bterm{d}{t}$.}

  It remains to prove that for every
  assignment~$\beta: \lbv{\mathcal{T}} \rightarrow \{0,1\}$, there is
  an assignment~$\zeta: \lbvv{\mathcal{T}} \rightarrow \{0,1\}$ for
  which $R(\Q)[\alpha'\cup \kappa''\cup \beta \cup \zeta]$ is valid.
  \FIXCAM{For every variable~$x\in\chi(t)$, if for every node~$t \in N$,}
  assignment~$\beta$ respects $\bval{x}{t}$, then we set
  $\zeta(v_t)\eqdef (\alpha \cup \iota)(x)$. Otherwise,
  $\zeta(v_t) \eqdef 0$, since we can assign any truth value here.
  By construction of~$\alpha'$ and~$\kappa''$, clauses in
  Formulas~(\ref{red:guessatomj}) and~(\ref{red:guessnegatomj}) are
  satisfied of~$\mathcal{G}$.
  Every clause of~$\mathcal{CK}$ of~$R(\Q)$ is satisfied by
  construction of~$\kappa''\setminus \kappa'$ (and also by $\kappa''$)
  and $\zeta$.
  Clauses in Formulas~(\ref{red:propatomj})
  and~(\ref{red:propvright}) of~$\mathcal{P}$ are satisfied
  by~$\kappa''\setminus \kappa'$.
  Further, clauses in Formulas~(\ref{red:guessatom})
  and~(\ref{red:guessnegatom}) of~$\mathcal{G}$ are satisfied because
  of $\beta$, $\kappa''$, $\alpha'$, and $\zeta$.  
  Finally, the clauses in Formulas~(\ref{red:propatom})
  and~(\ref{red:negpropatom}) of~$\mathcal{P}$ are satisfied by
  construction of~$\beta$, $\zeta$, and $\kappa'$.

  ``$\Longleftarrow$'':
  Let $\alpha$ be an assignment to the free variables of $Q$ for which
  $Q[\alpha]$ is invalid.  We show that if QBF~$\Q[\alpha]$ is
  invalid, then~$R(\Q)[\alpha']$ is invalid as
  well. %
  \FIXCAM{Since~$\Q[\alpha]$} is invalid, $\Q[\alpha\cup \iota]$ is invalid for
  \FIXCAM{any} assignment~$\iota: V_1 \rightarrow \{0,1\}$.  Assume towards a
  contradiction that~$R(\Q)[\alpha']$ is valid.
  We define an assignment~$\kappa\eqdef \local{\iota}$,
  which is
  $\kappa: \lint{\mathcal{T}}{\forall, V_1, \emptyset} \cup
  \lint{\mathcal{T}}{\exists, \emptyset, V_1} \rightarrow \{0,1\}$
  such that $\kappa(x_t) \eqdef \iota(x)$ for
  every~$x_t\in \lint{\mathcal{T}}{\forall, V_1, \emptyset} \cup
  \lint{\mathcal{T}}{\exists, \emptyset, V_1}$ with $t \in N$ and
  $x\in \var(D)$.
  \FIXCAM{Observe that %
  by Lemma~\ref{lem:slavesandmasterequiv},
  $\kappa$ is the only remaining option to obtain valid~$R(\Q)[\alpha]$.
  As a result, since~$R(\Q)[\alpha']$ is claimed valid,} $R(\Q)[\alpha'\cup \kappa]$ is valid as well.
  In consequence, by Lemma~\ref{lem:satbyfindingterm} Statement~(a),
  there \FIXCAM{has to exist} an extension~$\kappa'$ of~$\alpha'\cup \kappa$ such that
  for some~$d\in D$, $\kappa'$ respects~$\bterm{d}{t}$,
  where~$t =\lnode{d}$.
  By Lemma~\ref{lem:satbyfindingterm} Statement~(b), for~$1\leq j\leq 3$ and every
  node~$t'\in \nint{\TTT}{y}$, where $y\eqdef \ato{d}{j}$, we have
  $\kappa'(v_{t,j}) = \kappa'(y_{t'})$.
  \FIXCAM{However, }%
  by construction of~$\kappa'$ and connectedness
  of~$\mathcal{T}[y]$, \FIXCAM{then}~$(\alpha\cup \iota)$ respects
  $d$. In consequence, this contradicts our assumption
  that~$\Q[\alpha\cup\iota]$ is invalid.
  \noindent \emph{Induction step} ($\ell > 1$): We assume that the
  theorem holds for a given~$\ell-1$ and it remains to prove that it
  then holds for~$\ell$.

  ``$\Longrightarrow$'': We proceed by case distinction on the first
  quantifier, i.e., (Case~1)~$Q_1 = \exists$ and
  (Case~2)~$Q_1 = \forall$. Thereby, we show that if~$\Q[\alpha]$ is
  valid and has quantifier rank~$\ell$, then~$R(\Q)[\alpha']$ is
  valid as well.

  (Case~1) $Q_1=\exists$: Since~$\Q[\alpha]$ is valid, we can
  construct at least one assignment~$\iota: V_1 \rightarrow \{0,1\}$
  such that~$\Q[\alpha\cup \iota]$ is valid.  By induction hypothesis,
  since the QBF~$\Q[\alpha\cup \iota]$ has quantifier rank~$\ell-1$,
  and is valid, there are~$\alpha',\iota'$ such
  that~$R(\Q)[\alpha'\cup \iota']$ is valid as well.  In particular,
  by induction hypothesis,
  $\alpha'= \local{\alpha}, \iota'= \local{\iota}$ and
  therefore~$R(\Q)[\alpha']$ is valid as well.

  (Case~2) $Q_1=\forall$: Since~$\Q[\alpha]$ is valid, for \FIXCAM{any}
  assignment~$\iota$ of~$V_1$, we obtain that~$\Q[\alpha\cup \iota]$
  is valid. \FIXCAM{In the following, we denote by~$R''(\Q)$ the QBF
that is obtained from~$R(\Q)$, where variables in~$\lint{\mathcal{T}}{\exists, \emptyset, V_1}$
do not appear in the scope of a quantifier, i.e., these variables,
while existentially quantified in~$R(\Q)$,
are free variables in~$R''(\Q)$. }   By induction hypothesis, since the
  QBF~$\Q[\alpha\cup \iota]$ has quantifier rank~$\ell-1$, and is
  valid, there are~$\alpha',\iota'$ \FIXCAM{with $\alpha'= \local{\alpha}, \iota'= \local{\iota}$,}  such
  that~$R''(\Q)[\alpha'\cup \iota']$ is valid as well.
  Then,
  since $\iota$ was chosen arbitrarily, for every assignment~$\kappa$
  of variables
  in~$\dom(\iota')\cap \lint{\mathcal{T}}{Q_1, V_1, \emptyset}$, there
  \FIXCAM{is (by Lemma~\ref{lem:slavesandmasterequiv}, since~$R''(\Q)[\alpha'\cup\iota']$ is valid)
  an} assignment~$\kappa'$ of variables
  in~$\dom(\iota')\cap\lint{\mathcal{T}}{\exists, \emptyset, V_1}$
  such that~$R(\Q)[\alpha' \cup \kappa \cup \kappa']$ is valid.  In
  consequence,~$R(\Q)[\alpha']$ is valid as well.

  ``$\Longleftarrow$'': Again, we proceed by case distinction in order
  to show that if~$\Q[\alpha]$ is invalid and has quantifier
  rank~$\ell$,~$R(\Q)[\alpha']$ is invalid as well.

  (Case 1) $Q_1=\exists$: Since~$\Q[\alpha]$ is invalid, for every
  assignment~$\iota$ of variables~$V_1$ we have
  that~$\Q[\alpha\cup \iota]$ is also invalid.  By induction
  hypothesis, since the QBF~$\Q[\alpha\cup \iota]$ has quantifier
  rank~$\ell-1$, and is invalid, there are assignments~$\alpha'$ and
  \FIXCAM{$\iota'$ such that~$R(\Q)[\alpha'\cup \iota']$ is invalid as well, where
  $\alpha'= \local{\alpha}, \iota'= \local{\iota}$.
  Therefore~$R(\Q)[\alpha']$ is invalid,} \FIXCAM{since~$\iota$ was
  chosen arbitrarily and by Lemma~\ref{lem:slavesandmasterequiv} 
  $\iota'$ covers all relevant cases, where~$R(\Q)[\alpha']$ could be valid.} %

  (Case 2) $Q_1=\forall$: Since~$\Q[\alpha]$ is invalid, there is at
  least one assignment~$\iota$ of variables~$V_1$, such
  that~$\Q[\alpha\cup \iota]$ is invalid.  By induction hypothesis,
  since QBF~$\Q[\alpha\cup \iota]$ has quantifier rank~$\ell-1$, and
  is invalid, there are assignments~$\alpha'$ and $\iota'$ \FIXCAM{with
$\alpha'= \local{\alpha}, \iota'= \local{\iota}$,}
 such
  that~\FIXCAM{$R''(\Q)[\alpha'\cup \iota']$ is invalid ($R''$ defined above)}, either.  %
   By
  Lemma~\ref{lem:slavesandmasterequiv}, even for an
  assignment~$\iota''$ that restricts $\iota'$ to variables
  in~$\dom(\iota')\cap \lint{\mathcal{T}}{Q_1, V_1, \emptyset}$, there
  cannot be an assignment~$\kappa$ to variables
  in~$\dom(\iota')\cap\lint{\mathcal{T}}{\exists, \emptyset, V_1}$
  such that~$R(\Q)[\alpha' \cup \iota'' \cup \kappa]$ is valid.  In
  consequence,~$R(\Q)[\alpha']$ is invalid as well. %
\end{proof}
\noindent\textbf{Compression and Runtime.}
After having established the correctness of reduction $R$, we move on to showing that this reduction indeed compresses the treewidth of the resulting QBF $R(Q)$, as depicted in Figure~\ref{fig:transformation}. 
In particular, we prove this claim by constructing a tree decomposition $\mathcal{T}'$ of the primal graph of~$R(Q)$ and show its relation to \FIXCAM{labeled almost $c$-nice} tree decomposition~$\mathcal{T}$ of~$Q$, where~$\width(\mathcal{T})=\tw{P_Q}$. 
Then, we discuss runtime properties of the reduction.
\begin{LEM}[Compression]\label{lem:compr}
The reduction~$R$ exponentially decreases treewidth. 
In particular, $R'(R(\Q))$ constructs a QBF such that the treewidth of the primal graph of~$R'(R(\Q))$ is~$12\cdot\ceil{\log(k+1)}+7c+6$, where~$k$ is the treewidth of\FIXCAM{~$P_\Q$ and~$c\leq k$}.
\end{LEM}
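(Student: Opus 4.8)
The plan is to build an explicit tree decomposition $\mathcal{T}'=(T,\chi')$ of the primal graph $P_{R(\Q)}$ that re-uses the very tree $T$ of the guiding decomposition $\mathcal{T}=(T,\chi,\delta)$, and then to read off its width. For each node $t\in N$ I would let $\chi'(t)$ consist of two groups. First, the \emph{local} variables created at $t$: the copies $x_t\in\lint{\mathcal{T}}{\var(D)}$ introduced at $t$ (at most $c$ by almost $c$-niceness), the pointer bits of $t$ in $\lbv{\mathcal{T}}$ and $\lbvd{\mathcal{T}}$, the value variables $v_t$ and $v_{t,1},v_{t,2},v_{t,3}$, and the indicators $sat_t,sat_{\leq t}$. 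Second, the \emph{interface} variables of the $\le 2$ children $t_i\in\children(t)$ that co-occur with $t$ in some clause: the pointer bits of each $t_i$, the values $v_{t_i},v_{t_i,j}$, and $sat_{\leq t_i}$. Intuitively, every bag carries its own node's gadget plus just enough of its children's gadgets to absorb the cross-node clauses.

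I would then check the two defining conditions of a tree decomposition, clause family by clause family. Edge coverage: the guess clauses \eqref{red:guessatom}--\eqref{red:guessnegatomj} and the per-term check clauses \eqref{chk_satcb}--\eqref{chk_negsatt} relate only variables attached to a single node $t=\lnode{d}$, hence lie in $\chi'(t)$; the reachability clause \eqref{chk_leqtright} relates $sat_{\leq t}$, $sat_t$ and the $sat_{\leq t_i}$ of the children, all present in $\chi'(t)$; the propagation clauses \eqref{red:propatom}--\eqref{red:propvright} relate pointer and value variables of $t$ to those of a single child $t_i$, again inside $\chi'(t)$. Connectedness is where the \emph{locality} of the pointers pays off: every auxiliary variable is attached to exactly one node $s\in N$ and, by the case analysis above, occurs only in clauses internal to $s$ or in clauses linking $s$ to its parent; hence it appears solely in $\chi'(s)$ and possibly $\chi'(\mathrm{parent}(s))$, a two-node connected subtree of $T$. (For the copies $x_s$ this is immediate, since they occur only in guess clauses at $s$; $sat_s$ likewise occurs only internally.) Thus $\mathcal{T}'$ is a valid tree decomposition of $P_{R(\Q)}$.

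For the width I would bound $\Card{\chi'(t)}$. Each pointer in $\lbv{\mathcal{T}}$ or $\lbvd{\mathcal{T}}$ uses at most $\ceil{\log(k+1)}$ bits (the $\lbvd{\mathcal{T}}$ pointers use $\ceil{\log(\Card{\chi(t)}+1)}$ bits, whose $+1$ slack is absorbed into the additive constant), since $\Card{\chi(t)}\le k+1$. Node $t$ carries $4$ pointers (one from $\lbv{\mathcal{T}}$, three from $\lbvd{\mathcal{T}}$) and inherits the $4$ pointers of each of its $\le 2$ children, yielding the dominant term $12\ceil{\log(k+1)}$. The remaining, non-logarithmic content --- the $\le c$ copies, the constantly many value and indicator variables of $t$ and its children, together with the constant overhead incurred when the long guess and propagation clauses are turned into $3$-CNF by $R'$ (which, as already remarked, raises treewidth only by an additive constant and leaves the construction above intact) --- is bounded by $7c+6$. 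Tallying and subtracting one gives $\width(\mathcal{T}')\le 12\ceil{\log(k+1)}+7c+6$; since $\mathcal{T}$ may be taken almost $c$-nice for a \emph{fixed} constant $c$ (Observation~\ref{obs:uniqueremoval}), this is $\bigO(\log k)$, an exponential compression of the original width $k$.

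The crux is not the arithmetic but simultaneously achieving edge coverage and connectedness while keeping only a \emph{constant} number of $\ceil{\log(k+1)}$-bit pointer blocks in any bag. The danger is that the propagation clauses, which chain information from a node down to its children, could force pointer bits to survive along long root-to-leaf paths and thereby accumulate; the argument must make explicit that because each pointer is \emph{re-created afresh at every node} (rather than being a single global address), its occurrences are confined to one parent--child link, so that in any bag at most a node's own four pointers plus those of its $\le 2$ children ever meet. A secondary technical point to treat with care is the $3$-CNF conversion $R'$: I would route the fresh chain variables of each long clause through the bag of its home node, so that $R'$ contributes only the claimed additive constant and, crucially, does not destroy the connectedness established above.
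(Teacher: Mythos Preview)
Your proposal is correct and follows essentially the same approach as the paper: reuse the tree $T$ of the guiding decomposition, let each bag $\chi'(t)$ hold the node's own pointer bits, values, introduce-copies, and $sat$-flags together with the corresponding interface variables of its (at most two) children, verify edge coverage clause-family by clause-family, and obtain connectedness from the fact that every auxiliary variable is tied to a single node and surfaces only in that node's bag and its parent's. The width count---twelve pointer blocks of $\ceil{\log(k+1)}$ bits plus the $7c+6$ residual---matches the paper's accounting.

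One point where the paper is more explicit than your write-up is the $R'$ step. You say you would ``route the fresh chain variables of each long clause through the bag of its home node''; read literally, that could put all chain variables of all long clauses at $t$ into the \emph{single} bag $\chi'(t)$, which would blow the width up by $\Theta(k\log k)$ rather than a constant. The paper instead \emph{subdivides}: for each node $t$ it inserts $\bigO(k\cdot\ceil{\log(k)}^2)$ intermediate nodes, each carrying $\chi'(t)$ together with at most two fresh chain variables, so that only a constant is added to the width. Your concluding remark that $R'$ ``contributes only the claimed additive constant'' is the right target; just make the subdivision explicit.
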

\begin{proof}
Assume a labeled almost $c$-nice tree decomposition~$\mathcal{T}{=}$ 
$(T,\chi,r)$ of~$P_\Q$ of width~$k$, where~$T=(N,E)$.
From this we will construct a tree decomposition~$\mathcal{T}'=(T,\chi',r)$ of the primal graph of~$R(\Q)$.
For each tree decomposition node~$t\in N$ with~$\children(t)=\{t_1, \ldots, t_s\}$, we set its bag~$\chi'(t)\eqdef \{b\mid x\in\chi(t),  b\in \bval{x}{t} \cup \bval{x}{t_1} \cup \cdots \cup \bval{x}{t_s}\}  \cup\{b \mid x\in\chi(t),  1 \leq j\leq 3, b\in \bvali{x}{t}{j} \cup \bvali{x}{t_1}{j} \cup \cdots \cup \bvali{x}{t_s}{j}\} \cup  \{x_{t'} \mid x_{t'}\in \lint{\mathcal{T}}{V}, t'=t\} \cup \bigcup_{t'\in \{t, t_1, \ldots, t_s\}}\{v_{t',1}, v_{t',2}, v_{t',3},v_{t'},$ $sat_{t'}, sat_{\leq t'}\}$.
Observe that all the properties of tree decompositions are satisfied.
In particular, connectedness is not destroyed since the only elements that are shared among (at most two) different tree decompositions nodes are in~$\lbv{\mathcal{T}}$, $\lbvv{\mathcal{T}}$ and in~$\lbvd{\mathcal{T}}$, and~$\lbvvd{\mathcal{T}}$.

Each bag~$\chi'(t)$ contains bit-vectors~$\bval{x}{t}, \bval{x}{t_1}, \ldots, \bval{x}{t_s}$ for each~$x\in\chi(t)$, resulting in at most~$3\cdot\ceil{\log(k)}$ many elements, since each node can have at most~$s=2$ many children.
Further, each bag additionally consists of bit-vectors~$\bvali{x}{t}{j},$ $\bvali{x}{t_1}{j}, \ldots, \bvali{x}{t_s}{j}$ for each~$x\in\chi(t) \cup\{nil\}$,
where~$1\leq j\leq 3$, which are at most~$3\cdot3\cdot\ceil{\log(k+1)}$ many elements. 
In total everything sums up to at most~$12\cdot\ceil{\log(k+1)}+7c+6$ many elements per node, since~$\Card{\{x_{t'} \mid x_{t'}\in \lint{\mathcal{T}}{V}, t=t'\}}\leq c$,~and moreover~$|\bigcup_{t'\in \{t, t_1, \ldots, t_s\}}\{v_{t',1}, v_{t',2}, v_{t',3},v_{t'},$ $sat_{t'},$ $sat_{\leq t'}\}|$ $\leq 6\cdot (c+1)$ due to~$\mathcal{T}$ being labeled and almost $c$-nice.
Note that the treewidth of~$R'(R(\Q))$ only marginally increases,
since there are at most~$\bigO(k\cdot\ceil{\log(k)})$ many clauses in each bag of~$\chi'(t)$ for any node~$t\in N$, each of size at most~$\bigO(\ceil{\log(k)})$.
However, the fresh variables, that were introduced during the 3-CNF reduction only turn up in at most two new clauses (that is, they have degree two in the primal graph).
Further, the construction can be controlled in such a way, that each new clause consists of at most two fresh variables.
In consequence, one can easily modify~$\mathcal{T}'$, by adding at most~$\bigO(k\cdot\ceil{\log(k)}^2)$ many \FIX{intermediate nodes for each node~$t\in N$}, such that the width of~$\mathcal{T}'$ is at most~$12\cdot\ceil{\log(k+1)}+7c+6$.
\end{proof}

\begin{THM}[Runtime]\label{lem:reduction_linear}
Given a QBF~$\Q$, where~$D{=}\matr(\Q)$, $k$ is the treewidth of the primal graph of~$\Q$. Then, constructing~$R'(R(\Q))$ takes time~$\bigO(2^{k^4}\cdot \FIXCAM{\CCard{D}}\cdot c)$\FIXCAM{, where~$c\leq k$}.
\end{THM}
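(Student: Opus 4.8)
The plan is to split the construction of $R'(R(\Q))$ into three stages and bound each separately: (i) computing a labeled almost $c$-nice tree decomposition $\mathcal{T}=(T,\chi,\delta)$ of $P_\Q$ of width exactly $k=\tw{P_\Q}$; (ii) emitting the guess, check, and propagate clauses to obtain the CNF matrix $C$ of $R(\Q)$; and (iii) applying the standard $3$-CNF transformation $R'$. The key insight I would exploit is that the only cost that is super-polynomial in $k$ lives entirely in stage (i), whereas stages (ii) and (iii) are purely local per decomposition node and contribute only a factor linear in $\CCard{D}\cdot c$ times a polynomial in $k$. The bound $2^{k^4}$ is then chosen loosely enough to dominate both the exponential cost of stage (i) and every polynomial-in-$k$ factor arising in stages (ii) and (iii).

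For stage (i), I would first build $P_\Q$ from the $3$-DNF matrix $D$ in time $\bigO(\CCard{D})$, since each term contributes only a constant number of edges. Then I would compute a width-$k$ decomposition using Bodlaender's linear-time algorithm~\cite{Bodlaender96}: running it for increasing target widths $w=1,2,\dots$ until it first succeeds both determines $k$ and yields a decomposition of width exactly $k$, in total time $\sum_{w\le k}2^{\bigO(w^3)}\cdot\CCard{D}=2^{\bigO(k^3)}\cdot\CCard{D}$, using that the vertex count of $P_\Q$ is $\Card{\var(D)}\le\CCard{D}$. By Observation~\ref{obs:uniqueremoval}, this can be turned into a labeled almost $c$-nice decomposition with $\Card{N}=\bigO(k\cdot\CCard{D})$ nodes in time linear in the node count and without increasing the width. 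Since $2^{\bigO(k^3)}$ is dominated by $2^{k^4}$ and $\CCard{D}\le\CCard{D}\cdot c$ as $c\ge1$, stage (i) runs within $\bigO(2^{k^4}\cdot\CCard{D}\cdot c)$.

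For stage (ii) I would count the work node by node. Each bag has at most $k+1$ elements, so fixing the order $\prec$ and computing the ordinal encodings $\bval{x}{t}$ and $\bvali{x}{t}{j}$ costs $\poly(k)$ per node, and every bit-vector has $\bigO(\log k)$ literals. As the decomposition is almost $c$-nice, at most $c$ variables are introduced per node, hence $\Card{\lint{\mathcal{T}}{\var(D)}}=\bigO(c\cdot\Card{N})$; consequently the guess clauses (\ref{red:guessatom})--(\ref{red:guessnegatomj}) number $\bigO(c\cdot\Card{N})$, the check clauses (\ref{chk_leqtright})--(\ref{chk_negsatt}) number $\bigO(\Card{N}\cdot\log k)$, and the propagate clauses (\ref{red:propatom})--(\ref{red:propvright}) number $\bigO(\Card{N}\cdot k\cdot\log k)$, using $\Card{\children(t)}\le2$ and $\Card{\chi(t)\cap\chi(t_i)}\le k+1$, each with $\bigO(\log k)$ literals. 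With $\Card{N}=\bigO(k\cdot\CCard{D})$ this totals $\poly(k)\cdot c\cdot\CCard{D}$ clauses, emittable in $\poly(k)\cdot c\cdot\CCard{D}$ time. Stage (iii) then replaces each length-$m$ clause by $\bigO(m)$ clauses of length three via fresh variables; since every clause of $C$ has length $\bigO(\log k)$, this multiplies the size by only $\bigO(\log k)$ and stays within $\poly(k)\cdot c\cdot\CCard{D}$ (cf.\ the accounting in Lemma~\ref{lem:compr}).

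Summing the three stages and absorbing every $\poly(k)$ factor into $2^{k^4}$ yields the claimed bound $\bigO(2^{k^4}\cdot\CCard{D}\cdot c)$. The main point to get right --- and the only place where a naive count could destroy the linear dependence on $\CCard{D}$ --- is stage (ii): one must argue that the node count of the labeled almost $c$-nice decomposition stays linear in $\CCard{D}$ up to the poly-$k$ factor guaranteed by Observation~\ref{obs:uniqueremoval}, and that clause generation is strictly local per node, so that no hidden quadratic blow-up in $\CCard{D}$ creeps in. All genuinely expensive dependence is then confined to stage (i) and is exponential only in $k$, while the factors $\CCard{D}$ and $c$ remain linear throughout.
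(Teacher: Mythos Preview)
Your proposal is correct and follows essentially the same three-stage approach as the paper: invoke Bodlaender~\cite{Bodlaender96} to obtain a width-$k$ decomposition, convert to a labeled almost $c$-nice decomposition, count the clauses generated by $R$ node by node, and finally bound the cost of $R'$ by an extra $\bigO(\log k)$ factor per clause. The only notable difference is bookkeeping: the paper conservatively carries a $2^{\bigO(k^3)}$ factor on the \emph{node count} of the decomposition throughout (bounding it by the running time of Bodlaender's algorithm), whereas you claim the sharper $\Card{N}=\bigO(k\cdot\CCard{D})$ and confine all exponential dependence to the computation time of stage~(i); either way the slack in $2^{k^4}$ absorbs the difference, so the argument goes through identically.
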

\begin{proof}
First, we construct~\cite{Bodlaender96} a tree decomposition of the primal graph of~$\Q$ of width~$k$ in time~$2^{\bigO(k^3)}\cdot \Card{\var(D)}$, consisting of at most~$\bigO(2^{k^3}\cdot\Card{\var(D)})$ many nodes.
Then, we compute a labeled almost $c$-nice tree decomposition in time~\FIXCAM{$\bigO(k^2\cdot2^{k^3}\cdot(\CCard{D})$}~\cite[Lemma 13.1.3]{Kloks94a} without increasing the width~$k$, resulting in decomposition~$\mathcal{T}=(T,\chi)$, where~$T=(N,E,r)$ of the primal graph of~$\Q$.
Note that thereby the number of nodes is at most~\FIXCAM{$\bigO(k\cdot 2^{k^3}\cdot \CCard{D})$}.
The reduction~$R(\Q)$ then uses at most~\FIXCAM{$\bigO(k\cdot 2^{k^3}\cdot \CCard{D}\cdot c)$} many variables in~$\lint{\mathcal{T}}{V}$ since in almost $c$-nice tree decompositions one node ``introduces'' at most~$c$ variables. 
The other sets of variables used in~$R$ are bounded by~\FIXCAM{$\bigO(\ceil{\log(k+1)}\cdot k^2\cdot 2^{k^3}\cdot \CCard{D})$}.
Overall, there are \FIXCAM{$\bigO(\ceil{\log(k+1)}\cdot k^2\cdot 2^{k^3}\cdot \CCard{D}\cdot c)$} many clauses~constructed by~$R(\Q)$. 
Hence, the claim follows, since~$R'(R(Q))$ runs in time~\FIXCAM{$\bigO(\ceil{\log(k+1)}^2\cdot k^2\cdot 2^{k^3}\cdot \CCard{D}\cdot c)\subseteq \bigO(2^{k^4}\cdot \CCard{D}\cdot c)$}.
\end{proof}

\smallskip\noindent\textbf{Proof of the main result.} We are in position to prove the main result of this work. To this end, we show that the lower bounds are closed under negation and restate Theorem~\ref{lab:primqbflb}.

\begin{LEM}\label{lem:inverse}
Assume a given closed QBF of the form~$\Q=Q_1 V_1. {Q_2} V_2. Q_3 V_3 \cdots Q_\ell V_\ell. F$, where~$\ell\geq 1$ and~$F$ is in CNF if~$Q_\ell=\exists$, and~$F$ is in DNF if~$Q_\ell=\forall$.
Under ETH, one cannot solve~$\Q$ in time~$\tower(\ell,o(k))\cdot \BUGFIX{\poly(\Card{\var(F)
})}{2^{o(\Card{\var(F)})}}$ if and only if one cannot solve the negation~$\neg{Q}$ in \FIXCAM{the same time}.
\end{LEM}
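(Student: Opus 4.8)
The plan is to exhibit an explicit, parameter-preserving bijection between the two problems and to observe that it turns any algorithm for one into an algorithm for the other within the same runtime bound, so that the two ``cannot be solved'' statements are logically equivalent. First I would write down the negation of a prenex QBF via the standard duality: for $\Q = Q_1 V_1. Q_2 V_2. \cdots Q_\ell V_\ell. F$, the formula $\neg\Q$ is logically equivalent to $\overline{Q_1} V_1. \overline{Q_2} V_2. \cdots \overline{Q_\ell} V_\ell. \neg F$, where $\overline{\exists} = \forall$ and $\overline{\forall} = \exists$. To stay inside the syntactic class of the lemma, I would push the negation into the matrix by De Morgan's laws: if $Q_\ell = \exists$ and $F$ is in CNF, then $\overline{Q_\ell} = \forall$ and, after distributing the negation, $\neg F$ is in DNF, so $\neg\Q$ has exactly the required shape; the case $Q_\ell = \forall$ with $F$ in DNF is symmetric.

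The key step is then to check that this transformation preserves every parameter occurring in the claimed runtime. The quantifier rank is unchanged (still $\ell$), and the blocks remain alternating because complementation is a bijection on $\{\exists,\forall\}$ and hence preserves the pattern $Q_j \neq Q_{j+1}$. The variable partition $V_1,\dots,V_\ell$ and the free variables $\fvar(\Q)$ are untouched, so in particular $\Card{\var(\neg F)} = \Card{\var(F)}$. Most importantly, the primal graph is literally the same: under De Morgan each clause $c = \ell_1 \vee \ell_2 \vee \ell_3$ of $F$ becomes the term $\neg\ell_1 \wedge \neg\ell_2 \wedge \neg\ell_3$ of $\neg F$ (and dually for DNF), which has the identical variable set $\var(c)$. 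Thus $P_{\neg\Q} = P_\Q$ and therefore $\tw{P_{\neg\Q}} = \tw{P_\Q} = k$. Finally, the transformation is computable in linear time, since it only flips $\ell$ quantifiers and negates each literal of the matrix, so its cost is absorbed into the $\poly(\Card{\var(F)})$ factor of the target bound.

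With this bijection in hand, the equivalence follows by a contrapositive argument. Suppose, for contradiction, that $\neg\Q$ \emph{could} be solved in time $\tower(\ell,o(k))\cdot\poly(\Card{\var(F)})$. Then to solve $\Q$ one would compute $\neg\Q$ in linear time, run the assumed algorithm, and negate the Boolean answer; since all parameters agree and the preprocessing is linear, this solves $\Q$ within the same asymptotic bound, contradicting the hypothesized lower bound for $\Q$. The reverse direction is the same argument with the roles of $\Q$ and $\neg\Q$ exchanged, using that applying the construction twice (and De Morgan twice) returns the original formula, i.e.\ $\neg\neg\Q$ is syntactically $\Q$ again. Hence neither can be solved in the stated time iff the other cannot.

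I expect no serious obstacle here: the content is entirely a closure-under-negation reduction, and the only point requiring care is verifying that the De Morgan rewriting keeps the formula inside the CNF/DNF discipline tied to the parity of $Q_\ell$ and leaves the primal graph — and therefore the treewidth $k$ — exactly invariant. Everything else is bookkeeping on the unchanged quantifier rank and variable count.
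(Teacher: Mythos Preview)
Your proposal is correct and follows the same idea as the paper: both argue that an algorithm for $\neg\Q$ within the stated time bound would yield one for $\Q$ (and vice versa) by simply flipping the Boolean answer. The paper's proof is a one-line version of yours---it just notes that ``inverting the result can be achieved in constant time''---whereas you additionally spell out why the De~Morgan dualization preserves the syntactic form, the primal graph, the treewidth~$k$, the quantifier rank~$\ell$, and $\Card{\var(F)}$, which the paper leaves implicit.
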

\begin{proof}
Assume towards a contradiction that $\neg{Q}$ can be solved in time~$\tower(\ell,o(k))\cdot \BUGFIX{\poly(\Card{\var(F)})}{2^{o(\Card{\var(F)})}}$ under ETH. 
But then, since inverting the result can be achieved in constant time, under ETH we can solve~$\Q$ in time~$\tower(\ell,o(k))\cdot \BUGFIX{\poly(\Card{\var(F)})}{2^{o(\Card{\var(F)})}}$. Hence, we arrive at a contradiction. %
\end{proof}

\begin{restatetheorem}[lab:primqbflb]
\begin{THM}[QBF lower bound]
Given \FIXCAM{an arbitrary QBF} of the form~$\Q=Q_1 V_1. {Q_2} V_2. Q_3 V_3 \longversion\allowbreak\cdots Q_\ell V_\ell. F$
where~$\ell\geq 1$, and $F$ is a 3-CNF formula (if~$Q_\ell=\exists$), or~$F$ is a 3-DNF formula (if~$Q_\ell=\forall$).
Then, unless ETH fails, $\Q$ 
cannot be solved in time~$\tower(\ell,o(k))\cdot \BUGFIX{\poly(\Card{\var(F)})}{2^{o(\Card{\var(F)})}}$, where~$k$ is the treewidth of the primal graph~$P_\Q$.
\end{THM}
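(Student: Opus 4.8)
The plan is to prove the bound by induction on the quantifier rank $\ell$, using the reduction $R$ (followed by the $3$-CNF conversion $R'$) as the inductive engine: a hypothetical fast algorithm at level $\ell+1$, composed with $R'\circ R$, would yield a fast algorithm at level $\ell$ and contradict the inductive hypothesis. Throughout I would lean on closure under negation (Lemma~\ref{lem:inverse}): negating a prenex QBF flips every quantifier and turns a $3$-CNF matrix into a $3$-DNF one (and conversely) without changing the rank or the primal graph, so it suffices to prove the bound for one matrix form per level, the other following at once. Concretely I only need to handle level-$\ell$ instances whose innermost block is $\forall$ with a $3$-DNF matrix — exactly the shape $R$ consumes — whereas $R'(R(\Q))$ is a rank-$(\ell+1)$ instance whose innermost block is $\exists$ with a $3$-CNF matrix, i.e. the $\exists$/$3$-CNF form at the next level; the $\forall$/$3$-DNF form of rank $\ell+1$ is then recovered by negating such an instance.

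The arithmetic backbone is the identity $\tower(\ell+1,\log x)=\tower(\ell,x)$, proved by a one-line induction from $\tower(2,\log x)=2^{2^{\log x}}=2^{x}=\tower(1,x)$. What I actually use is its asymptotic version: if $s(m)=o(m)$ and the compressed width is $k''\le 13\lceil\log(k+1)\rceil$ (Lemma~\ref{lem:compr}, with $c=\bigO(1)$), then writing $s(k'')=\epsilon(k)\log k$ with $\epsilon(k)\to 0$ gives $\tower(\ell+1,s(k''))=\tower(\ell,k^{\epsilon(k)})$, and $k^{\epsilon(k)}=2^{o(\log k)}=o(k)$; hence $\tower(\ell+1,o(\log k))\subseteq\tower(\ell,o(k))$. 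Thus the extra quantifier block bought by $R$ is exactly paid for by collapsing one level of the tower, once the treewidth has been compressed logarithmically.

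For the base cases I would invoke known results: $\ell=1$ is the ETH lower bound for $3$-SAT parameterized by primal treewidth (no $2^{o(k)}\poly(n)$ algorithm), and $\ell=2$ is the Lampis--Mitsou double-exponential bound, i.e. no $\tower(2,o(k))\poly(n)$ algorithm for $\exists\forall$-/$\forall\exists$-SAT. For the step from $\ell$ to $\ell+1$ with $\ell\ge 2$, assume an algorithm $A$ solving the $\exists$/$3$-CNF form of rank $\ell+1$ in $\tower(\ell+1,o(k))\poly(n)$. Given a hard rank-$\ell$ instance $\Q$ in $\forall$/$3$-DNF form (hard by the inductive hypothesis, which already includes this form), I compute a labeled almost $c$-nice tree decomposition and form $\Q''=R'(R(\Q))$. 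By Theorem~\ref{lem:correctness} this preserves validity, by Lemma~\ref{lem:compr} its treewidth is $k''=\bigO(\log k)$, and by Theorem~\ref{lem:reduction_linear} it is built in time $2^{\bigO(k^4)}\poly(n)$ with $\Card{\var(\matr(\Q''))}=h(k)\cdot\poly(n)$ for some function $h$ of $k$ alone. Running $A$ on $\Q''$ and adding the construction time solves $\Q$ in time $2^{\bigO(k^4)}\poly(n)+\tower(\ell+1,o(k''))\cdot\poly\!\big(h(k)\poly(n)\big)$.

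It remains to collapse this bound. By the asymptotic tower identity, $\tower(\ell+1,o(k''))=\tower(\ell,o(k))$; and since $\ell\ge 2$, every $k$-only overhead is absorbed, because $2^{\bigO(k^4)}=\tower(2,\bigO(\log k))\le\tower(\ell,o(k))$ and likewise $\poly(h(k))\le\tower(\ell,o(k))$, while the genuine input dependence stays polynomial, $\poly(\poly(n))=\poly(n)$. Taking the pointwise maximum of the resulting $o(k)$ functions, the total runtime is $\tower(\ell,o(k))\poly(n)$, contradicting the inductive hypothesis for the $\forall$/$3$-DNF form at level $\ell$ and hence, via Lemma~\ref{lem:inverse}, for both forms at level $\ell+1$. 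The main obstacle is precisely this bookkeeping: $R'\circ R$ is an fpt-reduction whose construction cost $2^{\bigO(k^4)}$ (dominated by computing the decomposition) is not $2^{o(k)}$, so it can only be hidden inside a tower of height at least two. This is why the inductive step needs $\ell\ge2$ and why $\ell=2$ must be imported as a separate base case instead of being derived from $\ell=1$; the remaining work is just verifying that the logarithmic compression of Lemma~\ref{lem:compr} meshes with the $o(\cdot)$ inside the tower exactly as the identity prescribes.
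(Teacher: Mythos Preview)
Your proposal is correct and follows essentially the same route as the paper: induction on the quantifier rank, using $R'\circ R$ together with Lemma~\ref{lem:compr}, Theorem~\ref{lem:correctness}, and Theorem~\ref{lem:reduction_linear} for the step, and Lemma~\ref{lem:inverse} to switch between the $\exists$/3-CNF and $\forall$/3-DNF forms. You are in fact more explicit than the paper on one point: you correctly observe that the $2^{\bigO(k^4)}$ construction overhead can only be absorbed into $\tower(\ell,o(k))$ when $\ell\ge 2$, so the step must start at $\ell\ge 2$ and the Lampis--Mitsou bound for $\ell=2$ is needed as a genuine base case (the paper states the step for $\ell\ge 1$ and merely remarks that $\ell=2$ is already known).
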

\end{restatetheorem}
\begin{proof}
We assume that~$\Q$ is closed, i.e., for $\Q$ we have~$\fvar(\Q)=\emptyset$.
We show the theorem by induction on the quantifier rank~$\ell$. For the induction base, where~$\ell=1$, the result follows from the ETH in case of~$Q_\ell=\exists$ since~$k\leq \Card{\var(F)}$.
If~$Q_\ell=\forall$, by Lemma~\ref{lem:inverse}, the result follows. Note that for the case of~$\ell=2$, the result has already been shown~\cite{LampisMitsou17} as well.

For the induction step, we assume that the theorem holds for given~$\Q$ of \FIX{quantifier rank~$\ell\geq 1$}, where~$\Q_\ell=\forall$, the treewidth of primal graph~$P_\Q$ is~$k$, and $F$ is in 3-DNF.
We show that then the theorem also holds for quantifier rank~$\ell + 1$.
Towards a contradiction, we assume that in general we can solve any QBF~$\Q'$ of quantifier rank~$\ell+1$,
in time~$\tower(\ell+1,o(\tw{P_{\Q'}}))\cdot \BUGFIX{\poly(\Card{\var(C')})}{2^{o(\Card{\var(C')})}}$, where $C'=\matr(\Q')$. %
\FIXCAM{We compute a labeled almost $c$-nice TD~$\mathcal{T}$ of~$P_\Q$ of width~$k$,
where~$c$ is in~$\mathcal{O}(\log(k))$.}
We proceed by case distinction on the last quantifier~$\Q'_{\ell+1}$ of~$\Q'$.

(Case 1) $\Q'_{\ell + 1}=\exists$:
Let~$\Q'= R'(R(Q))$, $C'=\matr(\Q')$ be the matrix of~$\Q'$, and~$k'$ be the treewidth of the primal graph of~$C'$.
Observe that~$\Q'$ has quantifier rank~$\ell+1$ and is of the required form.
By Lemma~\ref{lem:compr}, \FIXCAM{$k'={12\cdot\ceil{\log(k+1)}+7c+6}$}. 
As a result, \FIXCAM{since~$R$ is an fpt-reduction} \FIXCAM{(including time for computing~$\mathcal{T}$)} according to \FIXCAM{Theorem}~\ref{lem:reduction_linear}, one can solve~$\Q'$ in \FIXCAM{time~$\tower(\ell+1,o(12\cdot\ceil{\log(k+1)}+7c+6))\cdot \BUGFIX{\poly(\Card{\var(C')})}{2^{o(\Card{\var(C')})}}=\tower(\ell+1,o({\log(k)}))\cdot \BUGFIX{\poly(\Card{\var(C')})}{2^{o(\Card{\var(C')})}}$.} Therefore, by Theorem~\ref{lem:correctness} we can solve~$\Q$ in time~$\tower(\ell,o(k))\cdot \BUGFIX{\poly(\Card{\var(F)})}{2^{o(\Card{\var(F)})}}$, which contradicts the induction hypothesis. %

(Case 2) $\Q'_{\ell + 1}=\forall$: By Lemma~\ref{lem:inverse} one can decide in time $\tower(\ell+1,o(k))\cdot \BUGFIX{\poly(\Card{\var(C')})}{2^{o(\Card{\var(C')})}}$ whether~$\Q'$ is valid if and only if we can decide in time~$\tower(\ell+1,o(k))\cdot \BUGFIX{\poly(\Card{\var(C')})}{2^{o(\Card{\var(C')})}}$ whether~$\neg\Q'$ is valid. 
Note that after bringing~$\neg\Q'$ into prenex normal form, the last quantifier is~$\exists$. Therefore, the remainder of this case is (Case 1). 
{Hence, we have established the second case and this concludes the proof.}
\end{proof}
\noindent\FIX{\textbf{Further Consequences.} We generalize Theorem~\ref{lab:primqbflb} to the incidence
  graph.
  The \emph{incidence graph} of a formula~$F$ in CNF or DNF is the
  bipartite graph, which has as vertices the variables and clauses
  (terms) of~$F$ and an edge~$vc$ between every variable~$v$ and
  clause (term)~$c$ whenever~$v$ occurs in~$c$
  in~$F$~\cite{SamerSzeider10b}. We obtain the following.
}
\begin{COR}\label{cor:sinc}
  \FIXCAM{Given an arbitrary QBF~$\Q$ of quantifier rank $\ell\geq 1$.}
Then, under ETH one cannot
  solve~$\Q$ in time~$\tower(\ell,o(k))\cdot \BUGFIX{\poly(\Card{\var(\matr(\Q))})}{2^{o(\Card{\var(F)})}}$,
  where~$k$ is the treewidth of the incidence graph of~$\matr(\Q)$.
\end{COR}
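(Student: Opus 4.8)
The plan is to deduce this statement from the already-proved primal-graph bound of Theorem~\ref{lab:primqbflb} by exploiting the elementary fact that the incidence treewidth of a CNF or DNF formula never exceeds its primal treewidth by more than one. Writing $I_F$ for the incidence graph of $F$ and $P_F$ for its primal graph, the single auxiliary ingredient I would establish is the inequality $\tw{I_F} \le \tw{P_F} + 1$, which holds for \emph{every} CNF or DNF formula and in particular needs no bound on clause width.

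To prove this inequality I would start from an optimal tree decomposition $(T,\chi)$ of $P_F$. Every clause or term $f\in F$ induces a clique on $\var(f)$ in $P_F$, so by the clique-containment property of tree decompositions there is a node $t_f$ with $\var(f)\subseteq\chi(t_f)$. I would then attach, for each $f$, a fresh leaf below $t_f$ whose bag is $\var(f)\cup\{f\}$, while leaving every original bag unchanged. Edge coverage is immediate, since each incidence edge $vf$ sits in the new leaf of $f$; connectedness for a variable $v$ is preserved because the added leaf hangs off a node $t_f$ that already contains $v$; and the only enlarged bags have size $|\var(f)|+1\le \tw{P_F}+2$. Hence the new decomposition is a tree decomposition of $I_F$ of width at most $\tw{P_F}+1$.

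With this in hand I would argue by contraposition. Suppose an algorithm decides every QBF of the stated form in time $\tower(\ell,g(k_I))\cdot 2^{o(n)}$, where $k_I$ is the incidence treewidth, $n=\Card{\var(\matr(\Q))}$, and $g(m)=o(m)$; passing to the monotone majorant $\tilde g(m)=\max_{m'\le m}g(m')$ I may assume $g$ is nondecreasing and still $o(m)$. I would run this algorithm on the hard $3$-CNF/$3$-DNF instances produced in the inductive proof of Theorem~\ref{lab:primqbflb}, whose primal treewidth I call $k_P$. On these instances the inequality above gives $k_I\le k_P+1$, so monotonicity yields $g(k_I)\le g(k_P+1)$, and since $g(k_P+1)/k_P\to 0$ we have $g(k_P+1)=o(k_P)$. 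The running time on the hard instances is therefore $\tower(\ell,o(k_P))\cdot 2^{o(n)}$, directly contradicting Theorem~\ref{lab:primqbflb} (and recall $n$ is the same for both graph representations, as the variable set is unchanged).

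The argument is essentially routine, so there is no deep obstacle; the only point demanding care is the interaction of the $o$-notation with the shift $k_I\le k_P+1$, which is why I pass to a nondecreasing majorant of $g$ before composing. It is also worth noting that only the single direction ``incidence treewidth at most primal plus one'' is used, so no clause-width comparison between the two measures is required, merely the fact that the witnessing family of Theorem~\ref{lab:primqbflb} is already $3$-CNF/$3$-DNF.
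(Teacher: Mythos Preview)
Your argument is correct and follows the same route as the paper: both derive the corollary from Theorem~\ref{lab:primqbflb} via the standard inequality $\tw{I_F}\le \tw{P_F}+1$ and a contraposition. The paper simply cites this inequality from the literature and dispatches the contraposition in one sentence, whereas you spell out the leaf-attachment construction and handle the $o(\cdot)$ bookkeeping explicitly via a monotone majorant; these are elaborations rather than a different approach.
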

\begin{proof}
The claim follows from Theorem~\ref{lab:primqbflb},
since, in general, the treewidth~$k$ of the incidence graph of~$\Q$ is bounded~\cite{SamerSzeider10b,FichteSzeider15} by treewidth~$k'$ of~$P_\Q$, i.e., $k \leq k' + 1$.
\FIXCAM{As a result, if the weaker lower bound of this corollary did not hold, Theorem~\ref{lab:primqbflb} would be violated.}
\end{proof}

\begin{COR}[$2$-Local Pathwidth bound]\label{cor:pw}
  Given \FIXCAM{an arbitrary} QBF of the
  form~$\Q=Q_1 V_1. {Q_2} V_2. Q_3 V_3 \cdots Q_\ell V_\ell. F$, where
  $\ell\geq 1$ and $F$ in 3-CNF (if~$Q_\ell=\exists$) or 3-DNF
  (if~$Q_\ell=\forall$).  Then, unless ETH fails, $\Q$ cannot be
  solved in time~$\tower(\ell,o(k))\cdot \BUGFIX{\poly(\Card{\var(F)})}{2^{o(\Card{\var(F)})}}$,
  where~$k$ is the $2$-local pathwidth of \FIXCAM{graph~$P_\Q$}.
\end{COR}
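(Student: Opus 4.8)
The plan is to re-run the induction on the quantifier rank~$\ell$ from the proof of Theorem~\ref{lab:primqbflb}, but to track the $2$-local pathwidth of the instances instead of their treewidth. Since every path decomposition is a tree decomposition, and since $2$-local pathwidth dominates pathwidth which dominates treewidth, all the correctness and runtime ingredients transfer verbatim: Lemmas~\ref{lem:slavesandmasterequiv} and~\ref{lem:satbyfindingterm}, the correctness Theorem~\ref{lem:correctness}, the compression Lemma~\ref{lem:compr}, and the runtime Theorem~\ref{lem:reduction_linear} are all stated for arbitrary labeled almost $c$-nice tree decompositions and are oblivious to whether the guiding decomposition happens to be a path. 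Hence the only genuinely new task is structural: I must exhibit, whenever the guiding decomposition of $P_\Q$ is a path, a path decomposition of $P_{R'(R(\Q))}$ of width $\bigO(\log k)$ in which every vertex occurs in at most two bags.

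For the structural part I would start from the compressed decomposition $\mathcal{T}'=(T,\chi',r)$ built in Lemma~\ref{lem:compr}. By construction $\mathcal{T}'$ \emph{reuses the same underlying tree $T$} as the guiding decomposition $\mathcal{T}$ of $P_\Q$; consequently, if $\mathcal{T}$ is a path decomposition then so is $\mathcal{T}'$. To obtain such a guiding decomposition I would compute a path decomposition of $P_\Q$ of minimum width (its pathwidth, which is at most the $2$-local pathwidth $k$) and turn it, preserving the path shape and the width, into a labeled almost $c$-nice path decomposition with $c\in\bigO(\log k)$, so that the width bound $12\ceil{\log(k+1)}+7c+6\in\bigO(\log k)$ of Lemma~\ref{lem:compr} applies. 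Next I would verify $2$-locality: every fresh variable of $R(\Q)$ is anchored to a single node $t$, and by inspection of the families $\mathcal{G},\mathcal{CK},\mathcal{P}$ its placement reaches only the bag of $t$ and, through the parent--child coupling of Formulas~(\ref{red:propatom})--(\ref{red:propvright}), the bag of the unique parent of $t$ on the path; an introduce copy $x_t$ even stays within $\chi'(t)$ alone. Thus $\mathcal{T}'$ places each variable of $R(\Q)$ in at most two bags.

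It then remains to pass through the $3$-CNF flattening $R'$ and close the induction. The base case $\ell=1$ is immediate: the one-bag decomposition of $P_\Q$ is a $1$-local (hence $2$-local) path decomposition of width $\Card{\var(F)}-1$, so $k\leq\Card{\var(F)}$, and the claim follows from ETH for $Q_\ell=\exists$ and from Lemma~\ref{lem:inverse} for $Q_\ell=\forall$. In the induction step one assumes a hypothetical algorithm for rank $\ell+1$ in the forbidden time, applies it to $\Q'=R'(R(\Q))$ whose $2$-local pathwidth is $\bigO(\log k)$, and---via Theorems~\ref{lem:correctness} and~\ref{lem:reduction_linear}---solves the rank-$\ell$ instance $\Q$ in time $\tower(\ell,o(k))\cdot 2^{o(\Card{\var(F)})}$, contradicting the induction hypothesis; the subcase $\Q'_{\ell+1}=\forall$ reduces to $\Q'_{\ell+1}=\exists$ through Lemma~\ref{lem:inverse} exactly as before. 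The hard part, and where I expect the real effort to lie, is precisely the $2$-locality bookkeeping across $R'$: the shared pointer and bit variables $\lbv{\mathcal{T}},\lbvd{\mathcal{T}}$ occur in $\bigO(k)$ clauses of $\mathcal{P}$, so a naive flattening would spread them over $\bigO(k)$ corridor bags and destroy $2$-locality at width $\bigO(\log k)$. The fix is to interleave the $3$-CNF conversion with a degree reduction that replaces each such high-degree variable by a chain of equality-linked fresh copies, each copy occurring in only constantly many clauses; threading these $\bigO(\log k)$ chains through the per-node corridor keeps the width at $\bigO(\log k)$ while every copy, and every remaining variable, ends up in at most two consecutive bags. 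Carefully checking that this simultaneous flattening-and-degree-reduction preserves width, $2$-locality, the path shape, and satisfiability is the main obstacle.
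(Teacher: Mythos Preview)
Your plan is correct and mirrors the paper's own proof: both argue that every ingredient of Theorem~\ref{lab:primqbflb} (correctness, compression, runtime, Lemma~\ref{lem:inverse}) is stated for an arbitrary guiding tree decomposition, so feeding the reductions~$R$ and~$R'$ a path decomposition yields a path decomposition~$\mathcal{T}'$ of $P_{R'(R(\Q))}$ of width~$\bigO(\log k)$, and then one reruns the induction with the stronger parameter. The paper's proof simply asserts that ``each variable of $P_{R'(R(\Q))}$ occurs at most twice in the constructed (path) decomposition~$\mathcal{T}'$'' and leaves it at that; you go further and correctly flag that the naive placement of intermediate nodes for~$R'$ forces the shared bit-variables in $\lbv{\mathcal{T}},\lbvd{\mathcal{T}}$ (which sit in $\bigO(k)$ propagate clauses) through a long corridor, violating $2$-locality. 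Your proposed remedy---replacing each high-degree (universally quantified) variable by a chain of existentially quantified equality-linked copies before flattening---is the standard fix and is exactly what is needed to make the paper's one-line assertion rigorous while keeping the width at~$\bigO(\log k)$. So your route is the same as the paper's, only with the $2$-locality bookkeeping across~$R'$ spelled out rather than claimed.
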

\begin{proof}
  First, we show the claim for \emph{pathwidth} by induction, 
  which can be easily established for base case~$\ell=1$.
  For the case of~$\ell=2$, related work~\cite{LampisMitsou17} %
  holds only for treewidth. %
  However, the cases for~$\ell\geq 2$ %
  follow from the proof of
  Theorem~\ref{lab:primqbflb}, since every path decomposition is also
  a tree decomposition, and the proofs of lemmas and theorems 
  used intermediately only rely on an
  \emph{arbitrary} tree decomposition. 
  To be more concrete, the proof of Theorem~\ref{lab:primqbflb}
  relies on Lemma~\ref{lem:compr}, whose proof shows compression for any tree decomposition,
  which hence also works for any PD as well. 
  Similarly, Theorem~\ref{lem:reduction_linear} also
  holds for pathwidth, since a PD of fixed pathwidth can be computed~\cite{Bodlaender96} even in time~$\mathcal{O}(2^{k^2}\cdot\Card{\var(F)})$, 
  and since computation of labeled almost $c$-nice decompositions works anologously for PDs. 
  Further, the remainder of the proof holds for the thereby obtained PD, since the construction works for any TD.
  Finally, Lemma~\ref{lem:inverse} holds independently of the parameter.
  As a result, reductions~$R'$ and $R$ used by Theorem~\ref{lab:primqbflb} indeed are sufficient for PDs.

  The proof above can be lifted to the case of $2$-local pathwidth
  by observing that the $2$-local pathwidth of $P_{R'(R(\Q))}$ is also bounded
  by~$12\cdot\ceil{\log(k+1)}+7c+6$ (cf. Lemma~\ref{lem:compr}),
  since each variable of~$P_{R'(R(\Q))}$ occurs at most
  twice in the constructed (path) decomposition~$\mathcal{T}'$ of~$R'(R(\Q))$.
\end{proof}
\vspace{-.7em}
\begin{REM}\label{rem:qcsp}
We remark that reduction~$R$ can be \FIXCAM{generalized}
to finite, \emph{non-Boolean} domains (QCSP, e.g.,~\cite{FergusonOSullivan07}). %
For given variables~$V$ of a QCSP formula~$\Q$, 
the variables in~$\lint{\mathcal{T}}{V}$, $\lbvv{\mathcal{T}}$,
and~$\lbvvd{\mathcal{T}}$ have to be made non-Boolean,
whereas the other variables used in~$R$ stay Boolean.
Consequently, one obtains similar results as in related work~\cite{LampisMitsou17},
but for quantifier rank $\ell\geq 3$.
\FIXCAM{We conject that} under ETH, validity of QCSPs~$\Q$ over domain~$\mathcal{D}$ of quantifier rank~$\ell$,
where~$k=\pw{P_\Q}$,
cannot be decided in time~$\tower(\ell - 1,|\mathcal{D}|^{o(k)})\cdot \BUGFIX{\poly(\Card{\var(\Q)})}{|\mathcal{D}|^{o(\Card{\var(\Q)})}}$.
\end{REM}

\FIX{%
  Finally, we establish a corollary that improves a result from the
  literature. To this end, we
  denote for given positive number~$n$ by~$\log^*(n)$ the smallest
  value~$i$ such that~$\tower(i,1)\geq n$. 
  \FIX{A known result~\cite[Corollary 1]{AtseriasOliva14a} states $\Sigma_{\ell}^P$-hardness for instances~$\Q$ of $(4\cdot \log^*(\Card{\var(\matr(Q))}))$-\QBFSAT, wheras here we establish para-$\Sigma_{\ell}^P$-hardness for instances~$\Q'$ of $(\log^*(|\var($ $\matr(\Q'))|))$-\QBFSAT.
  This is possible by applying our established reduction~$R$,
  which is rather fine-grained since it only increases quantifier rank by one,
  and it works indeed for any QBF, and not just for a certain classes of QBFs in contrast to the known result.
  As a consequence, whenever a new class of $\ell$-QBFs with a certain treewidth or pathwidth guarantee was discovered, which is still $\Sigma_{\ell}^P$-hard, one \emph{immediately} obtains para-$\Sigma_{\ell}^P$-hardness by using reduction~$R$. 
  Then, one could potentially further improve quantifier alternations \FIXCAM{by applying %
reduction~$R$, which is (asymptotically) tight under ETH.}
  }
\begin{COR}\label{cor:twhardness}%
  Given any integer~$\ell\geq 1$. Then, deciding~\QBFSAT
  is~para-$\Sigma_{\ell}^P$-hard when parameterized by ($2$-local) pathwidth of
  the primal graph~$P_{\Q}$ for input QBFs of the form
  $Q=Q_1 V_1. {Q_2} V_2.$ $Q_3 V_3\cdots
    Q_{\ell+\log^*(\Card{\var(F)})}
    V_{\ell+\log^*(\Card{\var(F)})}. F,$ where is $F$ in 3-CNF
  (if~$Q_\ell=\exists$) or 3-DNF (if~$Q_\ell=\forall$).
\end{COR}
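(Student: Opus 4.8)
The plan is to establish para-$\Sigma_{\ell}^P$-hardness by producing, for an absolute constant $w$, a \emph{polynomial-time} many-one reduction from the $\Sigma_{\ell}^P$-complete problem of deciding validity of closed QBFs $\Q_0 = Q_1 V_1 \cdots Q_\ell V_\ell. F_0$ (with $\ell$ alternating blocks, leading $\exists$, and $F_0$ in 3-CNF or 3-DNF according to the parity of $\ell$) to the slice of $\QBFSAT$ consisting of formulas of quantifier rank $\ell + \log^*(\Card{\var(F)})$ whose $2$-local pathwidth is at most $w$. The existence of such a constant-width slice that is $\Sigma_{\ell}^P$-hard is exactly what para-hardness asks for, and since the produced instances carry a $2$-local path decomposition the bound holds for $2$-local pathwidth and a fortiori for pathwidth.

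The engine is the compression reduction $R$ (composed with the 3-CNF normalization $R'$) of Theorem~\ref{lab:primqbflb}, iterated $m := \log^*(n_0)$ times, where $n_0 := \Card{\var(F_0)}$. In round $i$ I first, whenever the innermost quantifier is $\exists$ over a 3-CNF matrix, negate (as in Lemma~\ref{lem:inverse}) to obtain a formula ending in $\forall$ over a 3-DNF matrix, and then apply $R'\circ R$. By Theorem~\ref{lem:correctness} each round preserves validity up to the bookkeeping negation, by construction it raises the quantifier rank by exactly one, and by Lemma~\ref{lem:compr} together with Corollary~\ref{cor:pw} it replaces the current width $k$ by at most $g(k) := 12\lceil\log(k+1)\rceil + 7c + 6 \in \bigO(\log k)$ while also yielding a $2$-local path decomposition of the output. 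Since $k_{i+1} = g(k_i)$ contracts logarithmically, after $m = \log^*(n_0)$ rounds the $2$-local pathwidth has fallen to a fixed point $w$ that depends only on the absolute constants of $R$, hence is independent of the input.

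The delicate point, and the reason this is a genuine polynomial-time (not merely fpt) reduction, is that I must avoid the $2^{k^4}$ factor of Theorem~\ref{lem:reduction_linear}, which there comes only from computing an optimal decomposition. In the first round I do \emph{not} invoke Bodlaender's algorithm: I feed $R$ the trivial staircase path decomposition whose $i$-th bag is $\{x_1,\dots,x_i\}$, made labeled via Observation~\ref{obs:uniqueremoval}. It has width $n_0-1$ but is almost $1$-nice ($c=1$), so $g$ still compresses it to $\bigO(\log n_0)$ and $R$ runs in time polynomial in $\CCard{F_0}$. In every later round I reuse the $2$-local path decomposition the previous round produced (Lemma~\ref{lem:compr} constructs it explicitly) and convert it in linear time to an almost $c$-nice decomposition with $c\in\bigO(\log k)$, so again no width-optimization is needed. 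As the per-round size blow-up is polylogarithmic in the current width and the widths collapse by iterated logarithms, the total blow-up over the $\log^*(n_0)$ rounds is $(\log n_0)^{\bigO(\log^* n_0)} = n_0^{o(1)}$, keeping the full composition polynomial.

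Finally I assemble the pieces. The number of negations performed is a fixed function of $\ell$ and $m$; if it is odd I append one further negation, which leaves the quantifier rank and the $2$-local pathwidth unchanged and keeps the matrix in the form dictated by the last quantifier, so that the final formula is valid iff $\Q_0$ is, giving a true many-one reduction. Since the final variable count $n'$ satisfies $\log^*(n') = \log^*(n_0) + \bigO(1)$, I pad the prefix with a constant number of dummy blocks quantifying fresh variables absent from the matrix (affecting neither validity nor width) so that the rank is exactly $\ell + \log^*(\Card{\var(F)})$. The main obstacle I anticipate is not the combinatorics of $g$ but precisely this uniform polynomial-time control: verifying that sidestepping Bodlaender through the staircase decomposition in round one and through the inherited decompositions thereafter really keeps \emph{every} round polynomial, and that the accumulated negations and the sub-polynomial size growth can be reconciled into a single many-one reduction landing exactly on the claimed quantifier rank.
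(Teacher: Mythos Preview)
Your proposal follows the same core strategy as the paper: iterate the compression $R'\circ R$ until the ($2$-local) pathwidth collapses to an absolute constant, invoke Theorem~\ref{lem:correctness} and Lemma~\ref{lem:compr}/Corollary~\ref{cor:pw} at each step, and conclude that the constant-width slice is $\Sigma_\ell^P$-hard. The paper's own proof is terser: it iterates $\log^*(k')$ times (where $k'$ is the $2$-local pathwidth of the source QBF~$\Q'$) and then uses $k'\le\Card{\var(\matr(\Q''))}$ to fit the quantifier-rank format, without discussing negations, runtime, or padding.

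The differences are refinements rather than a different route. You iterate $\log^*(n_0)$ rather than $\log^*(k')$; this is harmless since $k'\le n_0$ and spares you from computing~$k'$. More substantively, you make explicit what the paper leaves implicit: the polynomial-time bound (by feeding $R$ the trivial staircase decomposition in round one and reusing the decomposition produced by Lemma~\ref{lem:compr} thereafter, thus bypassing the $2^{\bigO(k^3)}$ cost of Bodlaender that appears in Theorem~\ref{lem:reduction_linear}), the negation bookkeeping needed to keep the innermost block in $\forall$/3-DNF form before each application of~$R$, and the constant padding to land exactly on rank $\ell+\log^*(\Card{\var(F)})$. One small wrinkle: your ``dummy blocks quantifying fresh variables absent from the matrix'' technically clash with the paper's syntactic convention $\bigcup_i V_i\subseteq\var(F)$; add a tautological clause (or contradictory term) per fresh variable, which changes neither validity nor width.
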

\begin{proof}
  Given a closed QBF of the form~$\Q'=Q_1 V'_1. {Q_2} V'_2. Q_3 V'_3$
  $\cdots Q_{\ell} V'_{\ell}. F'$, where~$\ell\geq 1$
  and~$F'$ is in 3-CNF if~$Q_\ell=\exists$,
  and~$F'$ is in 3-DNF if~$Q_\ell=\forall$,
  and~$k'$ is the $2$-local pathwidth
  of~$P_{\Q'}$.  Then, we apply our reduction~$R$ followed
  by~$R'$ on~$\Q'$ and iteratively apply~$R$
  and~$R'$.  We repeat this step
  exactly~$\log^*(k')$ many times and refer to the final result
  by~$\Q''$.  Note that the solutions to problem~\QBFSAT
  on~$\Q'$
  and~$\Q''$ are equivalent by Theorem~\ref{lem:correctness}.  Then,
  the resulting $2$-local pathwidth~$k''$ of~$P_{\Q''}$ is
  in~$\mathcal{O}(1)$ by Lemma~\ref{lem:compr}, i.e.,
  parameter~$k''$ is constant.  Hence,
  since~$\Q'$ is hard for~$\Sigma_\ell^P$,
  also~$\Q''$ is hard for~$\Sigma_\ell^P$, and
  $\Q''$ is~para-$\Sigma_{\ell}^P$-hard
  since~$k''$ is a constant.  Observe that~$k'\leq \Card{\var(F')}
  \leq \Card{\var(\matr(Q''))}$. %
  As a result, %
\QBFSAT for QBFs of the form~$\Q$ above is hard for~para-$\Sigma_\ell^P$.
\end{proof}
}

\vspace{-.5em}
\section{Conclusion} %
In this work, we presented a lower bound for deciding the validity {(\QBFSAT)} of quantified Boolean formulas (QBFs).
Thereby, we have significantly extended the current state-of-the-art of this line of research:
So far, \FIX{lower bound results under ETH} for \QBFSAT \FIX{parameterized by treewidth} were not available for all levels of the polynomial hierarchy.
The generalization of this result in Theorem~\ref{lab:primqbflb} {does} not only cover QBFs, parameterized by \FIX{treewidth and an arbitrary quantifier rank}, but solves a natural question for a well-known problem in complexity theory.
Interestingly, the result confirms the (asymptotic) optimality of the algorithm by Chen~\cite{Chen04a} for solving \QBFSAT and thereby answers a longstanding open question. 
Indeed, one cannot expect to solve \QBFSAT of quantifier rank~$\ell$ significantly better than in time~\FIX{$\Omega^*(\tower(\ell,k))$} in the treewidth~$k$.
The proof of this result relies on a novel reduction approach that makes use of a fragile balance between redundancy and structural dependency (captured by treewidth) and uses tree decompositions as a ``guide'' in order to achieve exponential \emph{compression} of the parameter treewidth. 
\FIX{
We encode \FIXCAM{dynamic programming on tree decompositions and obtain a technique for compressing treewidth.}
Note that both our technique and the results naturally carry over to path decompositions \FIX{and ($2$-local) pathwidth}.}

\FIX{Given the nature of our reduction, we observe that the reduction might also serve  in reducing treewidth in practice. 
In particular, solvers based on tree decompositions such as the QBF solver dynQBF~\cite{CharwatWoltran19} could benefit from significantly reduced treewidth; at the cost of increased quantifier rank by one. 
Since dynQBF is capable~\cite{LonsingEgly18} of solving instances up to treewidth~80 with quantifier rank more than two, slightly increasing the quantifier rank might be in practice a good trade-off for decreasing the treewidth significantly.}%

Another advantage of our reduction is that it gives rise to a versatile \emph{methodology} for showing lower bounds for arbitrary problems (depending on the ETH, parameterized by treewidth)
by reduction \FIX{from $\ell$-\QBFSAT, parameterized by treewidth as well}.
Thereby we avoid tedious reductions from \SAT (directly using ETH), which involves problem-tailored \FIX{gadgets to construct} instances whose treewidth is $\ell$-fold logarithmic in the number of variables or clauses of the given \SAT formula.
Further, we have listed a number of showcases to illustrate the applicability of this approach to natural problems that are beyond the second level of the polynomial hierarchy.
As a by-product we have established that \FIX{the canonical problems~$\text{\#}\Sigma_\ell\SAT$ and $\text{\#}\Pi_\ell\SAT$ of projected model counting} applied to \FIX{QBFs} {when parameterized by treewidth} always come at the price of \FIX{an additional level of exponentiality in the treewidth (compared to~$\ell$-\QBFSAT)}.
One direction for future work is to explore further problems parameterized by treewidth and to establish tightness of the so far existing upper bounds.
Another important direction is to work out techniques and showcases for \FIXCAM{``non-canonical''} lower bounds, where fptl-reductions are not sufficient and \FIX{using} a customized function $g$ is necessary.
Hence, our %
goal is to continue this line of research in order to use this toolkit for problems that do not \FIX{exhibit}~(e.g., \cite{LokshatanovMarxSaurabh2018}) \FIX{canonical runtimes, where fptl-reductions suffice}.
We hope this work will foster research and new insights 
on lower bounds.

\clearpage
\bibliography{qbf_tw_lbs-cleaned}

\clearpage
\appendix

\section{Appendix}
\BUGFIX{}{The following appendix contains additional proofs and proof details.
We plan on publishing this material along the submission (online).
\subsection{Detailed Proofs}
\begin{restatetheorem}[obs:uniqueremoval]
\begin{OBS}
Given an almost $c$-nice tree decomposition~$\mathcal{T}$ of a primal graph~$P_F$ for given 3-CNF or 3-DNF formula~$F$ of width~$k$. Then, one can easily create
a labeled almost $c$-nice tree decomposition~$\mathcal{T'}$ of~$P_F$ with a linear number of nodes in the number of nodes in~$\mathcal{T}$ such that~$\width(\mathcal{T}')=\width(\mathcal{T})$.
\end{OBS}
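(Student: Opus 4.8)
The plan is to equip the given decomposition with a labeling~$\delta$ by handing each clause (term) of~$F$ its own node whose bag already contains the clause's variables, obtaining the extra nodes purely by \emph{duplicating} existing bags so that neither the width nor the almost~$c$-nice structure is disturbed. The starting observation is the standard clique-containment property of tree decompositions: if $K$ induces a clique in a graph, then some bag of any tree decomposition of that graph contains all of~$K$ (a consequence of the Helly property for subtrees of a tree, applied to the connected, pairwise-intersecting node sets $\{t\mid v\in\chi(t)\}$ for $v\in K$). Since $F$ is in 3-CNF or 3-DNF, every $f\in F$ has $\Card{\var(f)}\leq 3$, and by the definition of the primal graph the variables of $f$ are pairwise adjacent in~$P_F$; hence $\var(f)$ is a clique and I may fix a \emph{host node} $t_f\in N$ with $\var(f)\subseteq\chi(t_f)$.

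Next I would build $\mathcal{T}'$ from $\mathcal{T}=(T,\chi)$, $T=(N,\cdot,r)$, by node duplication. For each $t\in N$ put $F_t\eqdef\{f\in F\mid t_f=t\}$ and replace $t$ by a path $t^{(0)},t^{(1)},\dots,t^{(\Card{F_t})}$ of fresh nodes, all carrying the \emph{identical} bag $\chi(t)$, with consecutive copies joined by an edge. The base copy $t^{(0)}$ inherits the original children of $t$ (represented by the top copies of their own chains), while the top copy $t^{(\Card{F_t})}$ becomes a child of the base copy of the parent of~$t$. Finally I set $\delta(t^{(i)})$ to be the $i$-th clause of $F_t$ for $1\leq i\leq \Card{F_t}$ and leave every $t^{(0)}$ unlabeled, so that $N'$ is exactly the set of labeled copies.

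It then remains to check four things, all routine given the identical-bag trick. For the tree-decomposition axioms: every bag of $\mathcal{T}'$ equals some $\chi(t)$, so vertex/edge coverage is inherited verbatim, and connectedness survives because the copies of each node occur consecutively on a path while the original adjacencies are mirrored through base copies. For the width: every bag is a bag of~$\mathcal{T}$, hence $\width(\mathcal{T}')=\width(\mathcal{T})$. For almost $c$-niceness: each internal copy $t^{(i)}$ with $i\geq1$ has the single child $t^{(i-1)}$ with the same bag, so it introduces $0\leq c$ variables; the base copy $t^{(0)}$ keeps the at most two original children and satisfies $\chi(t^{(0)})\setminus\bigcup_{t'\in\children(t)}\chi(t')=\chi(t)\setminus\bigcup_{t'\in\children(t)}\chi(t')$, which is $\leq c$ by assumption, and the parent's introduce-set is unchanged since its replaced child still exposes the bag~$\chi(t)$. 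For the labeling: $\delta\colon N'\to F$ is a bijection by construction and $\var(\delta(t^{(i)}))=\var(f)\subseteq\chi(t_f)=\chi(t^{(i)})$, matching Definition~\ref{def:labeledTD}.

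The total node count is $\sum_{t\in N}(\Card{F_t}+1)=\Card{N}+\Card{F}$. Because $F$ is a \emph{set} of distinct clauses, each hosted at a node of bag size at most $k+1$, the number of clauses hosted at any fixed node is bounded by the number of distinct clauses over its $\leq k+1$ variables, i.e.\ $\bigO(k^3)$; hence $\Card{F}=\bigO(k^3\cdot\Card{N})$ and $\mathcal{T}'$ has $\bigO(k^3\cdot\Card{N})$ nodes, which is linear in $\Card{N}$ for fixed width. The one point that actually requires care is precisely this bookkeeping together with the almost $c$-nice invariant: a bijective $\delta$ forces $\Card{N'}=\Card{F}$ labeled nodes, so one \emph{must} add a node per clause, and the only way to do this without increasing either the width or any per-node introduce-count is to duplicate existing bags rather than introduce new vertices — exactly what the chain construction achieves. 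Everything else is direct verification.
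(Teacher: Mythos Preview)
Your proof is correct and follows essentially the same idea as the paper's own argument: the paper's proof consists of the single observation that each bag of width~$k$ can host at most $\binom{k}{3}\cdot 2^3$ distinct size-$3$ clauses or terms, which is exactly your $\bigO(k^3)$ bound in the final paragraph. You have simply made explicit the node-duplication construction and the verification of the almost $c$-nice and labeling conditions that the paper leaves implicit.
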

\end{restatetheorem}
\begin{proof}
The result follows since for each node one can have at most~$\binom{k}{3}\cdot 2^3$ many different clauses or terms of size at most~$3$, respectively.
\end{proof}

\subsection{Further Details on Remark~\ref{rem:qcsp}}

\subsection{Quantified CSP (QCSP)}
We define \emph{Constraint Satisfaction Problems (CSPs)} over finite domains 
and their evaluation similar to QBFs as follows. %
A CSP~$\mathcal{C}$ is a set of constraints~$C\in\mathcal{C}$
and we denote by~$\var(\mathcal{C})$ the set of \emph{(constraint) variables}
of~$\mathcal{C}$, which are over a (fixed) finite domain~$\mathcal{D}=\{0,\ldots\}$, which contains at least two values, i.e., $\mathcal{D}$ is at least ``Boolean''.
{%
  A \emph{constraint}~$C\in\mathcal{C}$ restricts certain variables~$\var(C)$
  and contains possible assignments to $\mathcal{D}$.
  Concretely, each~$c\in C$ is an \emph{allowed assignment}~$c: \var(C) \rightarrow \mathcal{D}$ 
  of each variable~$v\in\var(C)$
  to~$\mathcal{D}$. %
  Then, $\mathcal{C}$ is \emph{satisfiable}, if there is an assignment~$A: \var(\mathcal{C}) \rightarrow \mathcal{D}$ such that for every~$C\in \mathcal{C}$, there is an allowed assignment~$c\in C$
  such that assignment $A$, when restricted to domain~$\var(C)$, equals~$c$.
}%
  Let $\ell\geq 0$ be integer. A \emph{quantified CSP (QCSP)}~$\Q$ 
  is of the form
  $Q_{1} V_1.  Q_2 V_2.\cdots Q_\ell V_\ell. \mathcal{C}$ where
  $Q_i \in \{\forall, \exists\}$ for $1 \leq i \leq \ell$ and
  $Q_j \neq Q_{j+1}$ for $1 \leq j \leq \ell-1$; and where $V_i$ are
  disjoint, non-empty sets of constraint variables with
  $\bigcup^\ell_{i=1}V_i \subseteq \var(\mathcal{C})$; and $\mathcal{C}$ is a CSP.
  We call $\ell$ the \emph{quantifier rank} of~$Q$ and let
  $\matr(\Q)\eqdef \mathcal{C}$.
  Further, we denote the set~$\fvar(Q)$ of \emph{free variables}
  of~$Q$ by
  $\fvar(\Q)\eqdef \var(\matr(\Q)) \setminus (\bigcup^\ell_{i=1}V_i)$.
If~$\fvar(\Q)=\emptyset$, then~$\Q$ is referred to as~\emph{closed},
otherwise we say~$\Q$ is~\emph{open}. 
  The evaluation of QCSPs is defined as follows. %
  An \emph{assignment} is a mapping~$\iota: X \rightarrow \mathcal{D}$
  defined for a set~$X$ of variables.
  Given a QCSP~$\Q$ and an assignment~$\iota$, then~$\Q[\iota]$ is a
  QCSP that is obtained from~$\Q$, where every occurrence of
  any~$x\in \dom(\iota)$ in~$\matr(\Q)$ is replaced by~$\iota(x)$, and
  variables that do not occur in the result are removed from preceding
  quantifiers accordingly.
  A closed QCSP~$\Q$ \emph{evaluates to true}, or is valid, if~$\ell=0$
  and the CSP $\matr(Q)$ is satisfiable. Otherwise,
  i.e., if~$\ell \neq 0$, we distinguish according to~$Q_1$.
  If~$Q_1=\exists$, then~$\Q$ evaluates to true if and only if there
  exists an assignment~$\iota: V_1\rightarrow \mathcal{D}$ such
  that~$\Q[\iota]$ evaluates to true.  If~$Q_1=\forall$,
  then~$\Q[\iota]$ evaluates to true if for any
  assignment~$\iota: V_1 \rightarrow\mathcal{D}$, $\Q[\iota]$ evaluates to
  true.  An (open or closed) QCSP~$\Q$ is \emph{satisfiable} if
  there is an assignment~$\iota: \fvar(\Q) \rightarrow \mathcal{D}$
  such that resulting closed QCSP~$\Q[\iota]$ evaluates to
  true. Otherwise~$\Q$ is \emph{unsatisfiable}.
  For more details on QCSPs we refer to the literature~\cite{FergusonOSullivan07}. %

\subsubsection{Adapting reduction~$R$ for QCSP}
In the following, we list the constraints of reduction~$S$, 
which is similar to reduction~$R$, but
adapted to a QCSP instance~$\Q$, and consists of parts~$\mathcal{G}'$ (``guess''), 
$\mathcal{CK}'$ (``check'') as well as~$\mathcal{P}'$ (``propagate''). 
Note that for brevity, we denote constraints by a formula using equality $=$
between variables and elements of the domain~$\mathcal{D}$,
negation $\neg$, which inverts (in-)equality expressions, disjunction $\vee$ and conjunction $\wedge$,
which can be easily transformed into CSPs as defined above.
Further, we use for any variable~$v\in\var(\Q)$ (which is meant to be Boolean) the expression ``$v$'' , as a shortcut for~$v \neq 0$.

To this end, let~$\Q$ be a given QCSP of the form~$\Q\eqdef Q_1 V_1. {{Q_2}} V_2. \cdots \forall V_\ell. \mathcal{C}$, where each~$C\in\mathcal{C}$
uses exactly~$s$ many variables, i.e., 
we sketch the reduction for QCSPs $\Q$,
where $s$ is a constant such that $\Card{\var(C)} = s$.
Then, we assume a $c$-nice TD~$\mathcal{T} = (T, \chi, \delta)$, where~$T=(N, \cdot, \cdot)$ of the primal graph~$P_\Q$\footnote{The primal graph~$P_\Q$ for QCSP~$\Q$ is defined similar as for QBFs. Concretely, vertices of~$P_\Q$ 
are variables~$\var(\Q)$, and edges are between two variables,
whenever these variables occur in one common constraint~$C\in\matr(\Q)$.}.
In the following, we describe only the difference to the notation as defined for QBFs. 
In particular, we use notation as before, but on QCSPs,
e.g., for a set~$V\subseteq\var(\mathcal{C})$ of variables, we denote
by~$\lint{\mathcal{T}}{V} \eqdef \{x_t \mid x \in V, t \in
\nint{\mathcal{T}}{x}\}$.
For checking constraints, we need, similar to~$\lbvd{\mathcal{T}}$,
the sets~$\lbvs{\mathcal{T}} \eqdef \{b_{t,j}^0, \ldots, b_{t,j}^{\ceil{\log(\Card{\chi(t)} + 1)}-1} \mid t\in N, 1 \leq j \leq s\}$ that may again refer to a fresh element~$nil$,
and~$\lbvvs{\mathcal{T}} \eqdef \{v_{t,j} \mid t\in N, 1 \leq j \leq s\}$.
Again, we refer for each~$C\in\mathcal{C}$ 
to the first, second, \ldots and $s$-th variable of~$C$ by~$\ato{C}{1}, \ato{C}{2}$, \ldots, $\ato{C}{s}$, respectively.
Further, for given constraint~$C\in \mathcal{C}$, $c\in C$, and $1 \leq j \leq s$,
we let~$\bvs{c}{C}{j}$ denote $c(\ato{C}{j})$.

The adapted reduction~$S(\Q)$ creates~$\Q'\eqdef$
{\smallalign{\normalfont\small}
\begin{align*}
	&Q_1\ \lint{\mathcal{T}}{Q_1, V_1, \emptyset}.\ Q_2\ \lint{\mathcal{T}}{Q_2, V_2, V_1}.\ \cdots %
	\forall\ \lint{\mathcal{T}}{\forall, V_\ell, V_{\ell-1}}, \lbv{\mathcal{T}}.\notag\\%\tag{R}\\
	&\exists\ \lint{\mathcal{T}}{\exists, \emptyset, V_{\ell}},\ \lbvv{\mathcal{T}}, \lbvvs{\mathcal{T}}, \lbvs{\mathcal{T}}.\ \mathcal{C}',%
\end{align*}}%
where~$\mathcal{C}'$ is a CSP and consists of sets~$\mathcal{G}'$, $\mathcal{CK}'$, and~$\mathcal{P}'$ as follows.

\medskip
\noindent\textbf{Guess part~$\mathcal{G}'$.}
{
\smallalign{\normalfont\small}
\begin{align}
	\label{redq:guessatom}&\bigvee_{b \in \bval{x}{t}} \neg b \vee x_t{=}v_t &\pushright{\text{for each } x_t\in \lint{\mathcal{T}}{\var(\mathcal{C})}}\raisetag{1.15em}\\%\tag{$\mathcal{G}$1} \\
	\label{redq:guessatomj}&\bigvee_{b \in \bvali{x}{t}{j}} \neg b \vee x_t{=}v_{t,j}&\pushright{\text{for each } x_t\in \lint{\mathcal{T}}{\var(\mathcal{C})}, 1 \leq j \leq s}\raisetag{1.15em}%
\end{align}
}

\medskip
\noindent\textbf{Check part~$\mathcal{CK}'$.}

{
\smallalign{\normalfont\small}
\begin{align}
	\label{q:chk_satcb}&\pushleft{\bigvee_{c\in C}[\bigwedge_{1\leq j \leq s, b\in\bvali{x}{t}{j}} b \wedge v_{t,j}{=}\bvs{c}{C}{j}]}&\pushleft{\text{for each } C\in \mathcal{C}, t=\lnode{C}}%
\end{align}
}%

\medskip
\noindent\textbf{Propagate part~$\mathcal{P}'$.}
{
\smallalign{\normalfont\small}
\begin{align}
	\label{redq:propatom}&\bigvee_{b \in \bval{x}{t}} \neg b \vee \bigvee_{b' \in \bval{x}{t_i}} \neg b' \vee v_t{=}v_{t_i}&\pushleft{\text{for each } t\in N, t_i\in\children(t),} \notag\vspace{-1.5em}\\ &&\pushleft{x\in \chi(t)\cap\chi(t_i)}\raisetag{1.15em}%
\end{align}
}%

{
\smallalign{\normalfont\small}
\begin{align}
	\label{redq:propatomj}&\bigvee_{b \in \bval{x}{t}} \neg b \vee \bigvee_{b' \in \bvali{x}{t}{j}} \neg b' \vee v_t{=}v_{t,j} &\pushleft{\text{for each } C\in \mathcal{C}, 1\leq j\leq s, }\notag\vspace{-1.65em}\\
	&&\pushleft{t=\lnode{C}, x=\ato{C}{j}}\notag\vspace{-.2em}\\%\vspace{-1.2em}\\%\tag{$\mathcal{P}$3}\\
\end{align}
}%

Using this reduction~$S$, one obtains similar results as for QBFs.
In particular, under ETH, validity of QCSPs~$\Q$ over domain~$\mathcal{D}$ of 
quantifier rank~$\ell$, where~$k=\pw{P_\Q}$,
cannot be decided in time $\tower(\ell - 1,|\mathcal{D}|^{o(k)})\cdot \BUGFIX{\poly(\Card{\var(\Q)})}{|\mathcal{D}|^{o(\Card{\var(\Q)})}}$.
}

\subsection{Counting Problems}
A \emph{witness function} is a
function~$\mathcal{W}\colon \Sigma^* \rightarrow 2^{{\Sigma'}^*}$ that
maps an instance~$I \in \Sigma^*$ to a finite subset
of~${\Sigma'}^*$. We call the set~$\WWW(I)$ the \emph{witnesses}. A
\emph{parameterized counting
  problem}~$L: \Sigma^* \times \Nat \rightarrow \Nat_0$ is a
function that maps a given instance~$I \in \Sigma^*$ and an
integer~$k \in \Nat$ to the cardinality of its
witnesses~$\Card{\WWW(I)}$.
Let $\mtext{C}$ be a decision complexity class,~e.g., $\Ptime$. Then,
$\cntc\mtext{C}$ denotes the class of all counting problems whose
witness function~$\WWW$ satisfies (i)~there is a
function~$f: \Nat_0 \rightarrow \Nat_0$ such that for every
instance~$I \in \Sigma^*$ and every $W \in \WWW(I)$ we have
$\Card{W} \leq f(\CCard{I})$ and $f$ is computable in
time~$\bigO(\poly(\CCard{I}))$ and (ii)~for every
instance~$I \in \Sigma^*$ the decision problem~$\WWW(I)$ belongs to
the complexity class~$\mtext{C}$.
Then, $\cntc\Ptime$ is the complexity class consisting of all counting
problems associated with decision problems in \NP.
Let $L$ and $L'$ be counting problems with witness functions~$\WWW$
and $\WWW'$. A \emph{parsimonious reduction} from~$L$ to $L'$ is a
polynomial-time reduction~$r: \Sigma^* \rightarrow \Sigma'^*$ such
that for all~$I \in \Sigma^*$, we
have~$\Card{\WWW(I)}=\Card{\WWW'(r(I))}$. It is easy to see that the
counting complexity classes~$\cntc\mtext{C}$ defined above are closed
under parsimonious reductions. It is clear for counting problems~$L$
and $L'$ that if $L \in \cntc\mtext{C}$ and there is a parsimonious
reduction from~$L'$ to $L$, then $L' \in \cntc\mtext{C}$.

\subsection{Problem Compendium}\label{sec:compendium}

\paragraph*{Projected Model Counting (\PQBF).}
\citex{DurandHermannKolaitis05} have shown that the canonical problems $\text{\#}\Sigma_\ell\SAT$, and $\text{\#}\Pi_\ell\SAT$
are $\cntc\Sigma^\Ptime_\ell$-complete, and~$\cntc\Pi^\Ptime_\ell$-complete (both via parsimonious reductions), respectively.
Note that both problems are special cases of~\PQBF as defined in Section~\ref{sec:prelim}.
Further, \sharpSAT is the problem~$\text{\#}\Sigma_0\SAT$, and \PMC is the problem~$\text{\#}\Sigma_1\SAT$.
\dproblem{$\text{\#}\Sigma_\ell\SAT$~\cite{DurandHermannKolaitis05}}{Open QBF~$Q=\exists V_1. \forall V_2. \cdots Q_\ell V_\ell F$, where $F$ is a
  Boolean formula in 3-DNF if $Q_\ell=\forall$ (and 3-CNF if $Q_\ell=\exists$).}{Number of
  assignments~$\iota: \fvar(\Q) \rightarrow \{0,1\}$ such that~$Q[\iota]$ is valid.}
\dproblem{$\text{\#}\Pi_\ell\SAT$~\cite{DurandHermannKolaitis05}}{Open QBF~$Q=\forall V_1. \exists V_2. \cdots Q_\ell V_\ell F$, where $F$ is a
  Boolean formula in 3-DNF if $Q_\ell=\forall$ (and 3-CNF if $Q_\ell=\exists$).}{Number of
  assignments~$\iota: \fvar(\Q) \rightarrow \{0,1\}$ such that~$Q[\iota]$ is valid.}
\paragraph*{Abstract Argumentation.}

We consider the Argumentation Framework by \citex{Dung95a}.
An \emph{argumentation framework~(AF)} is a directed graph~$F=(A, R)$
where $A\neq\emptyset$ is a finite set of arguments and
$R \subseteq A\times A$ a pair of arguments representing direct attacks of arguments.
In argumentation, we interest in computing so-called
\emph{extensions}, which are subsets~$S \subseteq A$ of the arguments
that meet certain properties according to certain semantics as given
below.
An argument~$s \in S$, is called \emph{defended by $S$ in $F$} if for
every $(s', s) \in R$, there exists $s'' \in S$ such that
$(s'', s') \in R$.  The family~$\adef_F(S)$ is defined by
$\adef_F(S) \eqdef\SB s \SM s \in A, s \text{ is defended by $S$ in
  $F$} \SE$.
We say $S \subseteq A$ is \emph{conflict-free in~$S$} if
$(S\times S) \cap R = \emptyset$; $S$ is \emph{admissible ($\admissible$) in $F$} if
(i) $S$ is \emph{conflict-free in $F$}, and
(ii) every $s \in S$ is \emph{defended by $S$ in $F$}.
Assume an admissible set~$S$.
Then,
(iiia) $S$ is~\emph{preferred ($\preferred$) in~$F$}, if there is no $S' \supset S$ that is \emph{admissible in $F$};
(iiib) $S$ is \emph{semi-stable ($\semistable$) in $F$} if there is no admissible set $S' \subseteq A$ in~$F$ with~$S^+_R\subsetneq (S')^+_R$ where $S^+_R:=S\cup\SB a\SM (b,a)\in R, b \in S \SE$;
a conflict-free set~$S$ is \emph{stage in $F$} if there is no conflict-free set~$S'\subseteq A$ in~$F$ with~$S^+_R\subsetneq (S')^+_R$.
Let $\ALL$ abbreviate the set $\{\preferred, \semistable, \stage\}$.
For a semantics~$\SEM \in \ALL$, $\SEM(F)$ denotes the set of
\emph{all extensions} of semantics~\SEM in $F$.
Let $\SEM \in \ALL$ be an abstract argumentation
semantic, $F=(A,R)$ be an argumentation framework, and $a \in A$ an
argument. 
The \emph{credulous reasoning problem} $\Cred_\SEM$ asks to decide
whether there is an \SEM-extension of $F$ that contains the
(credulous) argument~$a$. 
The \emph{skeptical reasoning problem} $\Skep_\SEM$ asks
to decide whether all \SEM-extensions of $F$ contain the argument~$a$.
\dproblem{Projected Credulous Counting~$\PCC_\SEM$~\cite{FichteHecherMeier18c}}{$\SEM\in\ALL$, argumentation framework~$F$, set~$P$ of projection arguments, argument~$a\in A$.}{Number of $\SEM$-extensions,
  where~$a$ is credulously accepted, restricted to~$P$, in more detail,
  $\Card{ \SB S \cap P \SM S \in \SEM(F), a \in S \SE}$.}  

\paragraph*{Propositional Abduction.}

The \emph{Propositional Abduction Problem (\PAP)} consists of a tuple~$(T,H,M)$, where~$T$ is a Boolean formula (\emph{theory}) in CNF, $H\subseteq\var(T)$ is a set of \emph{hypothesis}, and~$M\subseteq\var(T)$ forms a set of \emph{manifestations}.
A set~$S\subseteq H$ is a \emph{solution} to the problem~$(T,H,M)$ if there is an assignment~$\iota: \var(T) \rightarrow \{0,1\}$ such that $(T\cup S)[\iota]$ evaluates to true, and for every assignment~$\iota: \var(T) \rightarrow \{0,1\}$, where~$(T\cup S)[\iota]$ evaluates to true, also $M[\iota]$ evaluates to true.
Given a \PAP~$(T,H,M)$. Then, let~$\mathcal{S}(T,H,M)$ denote the set of solutions to this problem.

\dproblem{Projected Propositional Abduction Counting~$\PPAP$}{\PAP problem~$(T,H,M)$, set~$P\subseteq\var(T)$ of projection variables.}{Number of solutions to~$(T,H,M)$ where only the number of combinations with respect to~$P$ are of interest, i.e.,
  $\Card{ \SB S \cap P \SM S \in \mathcal{S}(T,H,M) \SE}$.} 

\paragraph*{Answer Set Programming (ASP).}

We follow standard definitions of \FIX{Boolean} disjunctive ASP. For
comprehensive foundations, we refer to introductory
literature~\cite{BrewkaEiterTruszczynski11,JanhunenNiemela16a}.
Let $\ell$, $m$, $n$ be non-negative integers such that
$\ell \leq m \leq n$, $a_1$, $\ldots$, $a_n$ be Boolean variables.
A \emph{program}~$\prog$ is a set of \emph{rules} of the form
\(
a_1\por \cdots \por a_\ell \hsep a_{\ell+1}, \ldots, a_{m}, \neg
a_{m+1}, \ldots, \neg a_n.
\)
For a rule~$r$, we let $H_r \eqdef \{a_1, \ldots, a_\ell\}$,
$B^+_r \eqdef \{a_{\ell+1}, \ldots, a_{m}\}$, 
$B^-_r \eqdef \{a_{m+1}, \ldots, a_n\}$, and~$\var(r)\eqdef H_r\cup B^+_r \cup B^-_r$. Consequently, $\var(\prog)\eqdef \bigcup_{r\in\prog}\var(r)$.

An \emph{interpretation} $I\subseteq\var(\prog)$ is a set of variables. $I$ \emph{satisfies} a
rule~$r$ if $(H_r\,\cup\, B^-_r) \,\cap\, I \neq \emptyset$ or
$B^+_r \setminus I \neq \emptyset$.  $I$ is a \emph{model} of $\prog$
if it satisfies all rules of~$\prog$. %
The \emph{Gelfond-Lifschitz
  (GL) reduct} of~$\prog$ under~$I$ is the program~$\prog^I$ obtained
from $\prog$ by first removing all rules~$r$ with
$B^-_r\cap I\neq \emptyset$ and then removing all~$\neg z$ where
$z \in B^-_r$ from the remaining
rules~$r$~\cite{GelfondLifschitz91}. %
$I$ is an \emph{answer set} of a program~$\prog$ if $I$ is a subset-minimal
model of~$\prog^I$. %

\eproblem{Answer Set Consistency (\cASP)}{Program~$\prog$.}{Does there exist an answer set of program~$\prog$?}
\dproblem{Answer Set Counting (\sharpASP)}{Program~$\prog$.}{Number of answer sets of~$\prog$.}
\dproblem{Projected Answer Set Counting
  (\PASP)~\cite{Aziz15a}}{Program~$\prog$, set~$P\subseteq\var(\prog)$
  of projection variables.}{Number of answer sets projected to the
  set~$P$, which is,
  $\Card{\{S\cap P \mid S \text{ is an answer set of }\prog\}}$.}

\paragraph*{Epistemic Logic Programming (ELP).}

Let $\ell$, $m$, $n$ be non-negative integers such that
$\ell \leq m \leq n$, $a_1$, $\ldots$, $a_n$ be Boolean variables.
Then, an \emph{epistemic program}~$\prog$ is a set of \emph{epistemic rules} of the form
\(
a_1\por \cdots \por a_\ell \hsep l_{\ell+1}, \ldots, l_{m}, \neg
l_{m+1}, \ldots, \neg l_n.
\),
where~$l_i$ for~$\ell + 1 \leq i \leq n$
is either variable~$a_i$, 
or an \emph{epistemic literal} potentially using \emph{epistemic negation~$\mathsf{not}$} 
of the form~$\mathsf{not}\ a_i$ or $\mathsf{not}\ \neg a_i$.
Let~$\ep(\prog)$ be the set of epistemic literals occurring in~$\prog$.
Given a \emph{guess}~$\sigma\subseteq\ep(\prog)$.
The \emph{epistemic reduct}~$\prog^\sigma$ corresponds to the program~$\prog$,
where each occurrence of~$\mathsf{not}\ l\in\sigma$ is replaced by truth constant~$\top$,
and each occurrence of~$\mathsf{not}\ l\in\ep(\prog)\setminus\sigma$ is replaced by~$\neg l$.
Note that~$\prog^\sigma$ corresponds to an ASP program, assuming that~$\top$ is
in any answer set and potential double negation ($\neg\neg$) is treated~\cite{FaberPfeiferLeone11} 
accordingly\footnote{Note that there are several ways to treat potential occurrences of double negation, which are out of the scope of this work.}.
Then, given guess~$\sigma\subseteq \ep(\prog)$,
the collection~$\mathcal{A}$ of \emph{all} the answer sets of~$\prog^\sigma$ is a \emph{candidate world view} of~$\prog$ w.r.t.\ $\sigma$,
if (i)~$\mathcal{A}\neq\emptyset$, (ii) for each~$\mathsf{not}\ l\in\sigma$ there is an answer set~$A\in\mathcal{A}$ 
where~$A\cup\{l\}$ is unsatisfiable ($A$ is viewed as a formula of facts), and (iii) for each~$\mathsf{not}\ l\in\ep(\prog)\setminus\sigma$ and each answer set~$A\in\mathcal{A}$,
$A\cup\{l\}$ is satisfiable.
Further, set~$\mathcal{A}$ of all the answer sets of~$\prog$ is referred to by \emph{world view} of~$\prog$ w.r.t.\ $\sigma$,
if there is no set~$\sigma'$ that is a candidate world view of~$\prog$ w.r.t.\ $\sigma'$ such that~$\sigma \subsetneq \sigma' \subseteq \ep(\prog)$.

\eproblem{Candidate World View Check~\cite{EiterShen17}}{ Epistemic
  program~$\prog$, variable~$a\in\var(\prog)$.}{
  Does there exist a guess~$S\subseteq \ep(\prog)$ that induces a candidate world view, more precisely, %
  $\{S \mid S \subseteq \ep(\prog), \text{ set }\mathcal{A}\text{ of
    answer } \text{sets of }\prog^S$ $\text{ is a candidate world view}\}\neq\emptyset$?}

\eproblem{World View Check~\cite{EiterShen17}}{
Epistemic program~$\prog$, variable~$a\in\var(\prog)$.}{
Does there exist a guess~$S\subseteq \ep(\prog)$ that induces a world view, 
where~$a$ is true, i.e., 
$\{S \mid S \subseteq \ep(\prog), \text{ set }\mathcal{A}\text{ of answer } \text{sets of }\prog^S$ $\text{ is a world view, }$ $a \in A \text{ for }\text{any }$ $A\in\mathcal{A}\}\neq\emptyset$?}

 \dproblem{\#Projected Guesses to World Views}{
Epistemic program~$\prog$, variable~$a\in\var(\prog)$, set~$P\subseteq\var(\prog)$ of projection variables.}{
Number of distinct epistemic literal sets~$\ep(\prog)$ projected to~$P$ that lead to world views, where~$a$ is true, i.e., 
$|\{S\cap \{\mathsf{not}\ p, \mathsf{not}\ \neg p \mid p \in P\} \mid S \in \ep(\prog), \text{ the set }\mathcal{A}\text{ of answer sets of }$ $\prog^S \text{ is a world view, }a\in A$ $\text{for any }$ $A\in\mathcal{A}\}|$.}

\end{document}